\newtheorem{theorem}{Theorem}
\newtheorem{proposition}[theorem]{Proposition}
\newtheorem{definition}[theorem]{Definition}
\newtheorem{lemma}[theorem]{Lemma}
\newtheorem{conjecture}[theorem]{Conjecture}
\newtheorem{corollary}[theorem]{Corollary}
\newtheorem{fact}[theorem]{Fact}
\newtheorem*{theorem*}{Theorem}
\newtheorem*{lemma*}{Lemma}
\newcommand{\nc}{\newcommand}
\nc{\rnc}{\renewcommand}
\def\ba#1\ea{\begin{align}#1\end{align}}
\def\bas#1\eas{\begin{align*}#1\end{align*}}
\def\bpm#1\epm{\begin{pmatrix}#1\end{pmatrix}}
\nc{\nn}{\nonumber}
\nc{\eq}[1]{(\ref{eq:#1})}
\nc{\eqs}[2]{(\ref{eq:#1}) and (\ref{eq:#2})}
\newcommand{\vev}[1]{\left\langle #1\right\rangle}
\newcommand{\braket}[2]{\left\langle #1\middle|#2\right\rangle}
\def\begsub#1#2\endsub{\begin{subequations}\label{eq:#1}\begin{align}#2\end{align}\end{subequations}}
\nc\qand{\qquad\text{and}\qquad}
\nc\mnb[1]{\medskip\noindent{\bf #1}}
\nc\benum{\begin{enumerate}}
\nc\eenum{\end{enumerate}}
\nc\bit{\begin{itemize}}
\nc\eit{\end{itemize}}
\nc{\ot}{\otimes}
\rnc{\L}{\left} 
\nc{\R}{\right}
\newcommand{\secref}[1]{Section~\ref{sec:#1}}
\newcommand{\thmref}[1]{Theorem~\ref{thm:#1}}
\newcommand{\propref}[1]{Proposition~\ref{prop:#1}}
\newcommand{\defref}[1]{Definition~\ref{def:#1}}
\newcommand{\corref}[1]{Corollary~\ref{cor:#1}}
\newcommand{\factref}[1]{Fact~\ref{fact:#1}}
\def\bbC{\mathbb{C}}
\DeclareMathOperator*{\bbE}{\mathbb{E}}
\def\bbF{\mathbb{F}}
\def\calB{{\cal B}}
\def\calC{{\cal C}}
\def\calD{{\cal D}}
\def\calU{{\cal U}}
\def\R{\mathbb{R}}
\def\F{\mathbb{F}}
\def\mod{\mbox{mod}}
\def\log{{\rm log}}
\def\supp{{\rm supp}}
\def\Prob{{\rm Prob}}
\newcommand{\be}{\begin{eqnarray}}
\newcommand{\ee}{\end{eqnarray}}
\newcommand\ket[1]{{ |{#1} \rangle }}
\newcommand\bra[1]{{ \langle {#1} | }}
\newcommand{\proj}[1]{\left|#1\right\rangle\left\langle #1\right|}
\newcommand\ketbra[1]{{\ket{#1}\bra{#1}}}
\def\QMA{{\sf{QMA}}}
\def\NP{{\sf{NP}}}
\def\NLTS{{\sf{NLTS}}}
\def\NLETS{{\sf{NLETS}}}
\def\cNLTS{{\sf{cNLTS}}}
\def\PCP{{\sf{PCP}}}
\def\qPCP{{\sf{qPCP}}}
\def\qLTC{{\sf{qLTC}}}
\def\qLDPC{{\sf{qLDPC}}}
\def\LTC{{\sf{LTC}}}
\def\QNC{{\sf{QNC}}}
\DeclareMathOperator{\diag}{diag}
\DeclareMathOperator{\dist}{dist}
\DeclareMathOperator{\Img}{Im}
\DeclareMathOperator{\Span}{Span}
\DeclareMathOperator{\tr}{tr}
\newcommand{\ignore}[1]{}
\newcommand{\eps}{\varepsilon}
\renewcommand{\epsilon}{\varepsilon}
\nc{\hin}{h_{\text{in}}}
\nc{\pin}{\partial_{\text{in}}}
\nc{\pell}{\partial_{\ell}}
\newcommand{\nocontentsline}[3]{}
\newcommand{\tocless}[2]{\bgroup\let\addcontentsline=\nocontentsline#1{#2}\egroup}
\newcommand{\cftsectionprecistoc}[1]{\addtocontents{toc}{%
  {\leftskip \cftsecindent\relax
   \advance\leftskip \cftsecnumwidth\relax
   \rightskip \@tocrmarg\relax
   \textit{#1}\protect\par}}}
\title{Local Hamiltonians Whose Ground States are Hard to Approximate}
\author{Lior Eldar\thanks{Center for Theoretical Physics, MIT} \and
  Aram W.~Harrow\footnotemark[1]}
\begin{document}

\maketitle

\abstract{ 

  Ground states of local Hamiltonians can be generally highly entangled: any quantum
  circuit that generates them, even approximately, must be sufficiently deep to allow
  coupling (entanglement) between any pair of qubits.  Until now this property was not
  known to be "robust" - the marginals of such states to a subset of the qubits containing
  all but a small constant fraction of them may be only locally entangled, and hence
  approximable by shallow quantum circuits.  In this work we construct a family of 16-local
  Hamiltonians for which any marginal of a ground state to a fraction at least $1-10^{-9}$
  of the qubits must be globally entangled. 
  
  This provides evidence that quantum entanglement is not very
  fragile, and perhaps our intuition about its instability is an
  artifact of considering local Hamiltonians which are not only local
  but {\it spatially local}.  Formally, it provides positive evidence
  for two wide-open conjectures in condensed-matter physics and
  quantum complexity theory which are the $\qLDPC$ conjecture,
  positing the existence of "good" quantum LDPC codes, and the $\NLTS$
  conjecture \cite{FreedmanH13} positing the existence of local
  Hamiltonians in which any low-energy state is highly entangled.
 
  Our Hamiltonian is based on applying the hypergraph product by
  Tillich-Z\'emor \cite{TZ09} to a classical locally testable code.  A
  key tool in our proof is a new lower bound on the vertex expansion
  of the output of low-depth quantum circuits, which may be of
  independent interest.
  
  %\preprint{MIT-CTP/4587}
}

\tocless\section{Introduction}

\subsection{Background and main result}
\subsubsection{Multiparticle entanglement, trivial states and topological
  order}
Quantum mechanics has overturned our classical intuitions about the
nature of information, computing and knowledge.  Perhaps the greatest
departure from earlier notions of information is the phenomenon of
entanglement in which a many-body quantum state cannot be reduced to a probabilistic mixture of descriptions
of the state of each individual particle.  For decades, entanglement
was viewed in terms of its counterintuitive properties, e.g.~ the Bell
and GHZ ``paradoxes,'' and only in recent years has quantum
information theory begun a systematic program of quantifying,
characterizing and finding ways to test entanglement.

However, in typical many-body systems, 
and from a complexity-theoretic point of view,
the
important question is not to establish the existence of entanglement, but rather to determine the {\it complexity} of the quantum circuit required to generate it.  Many of the results of quantum information
theory apply to the case of bipartite entanglement and often do not
extend to this setting of large numbers of interacting systems.
 For example, a collection of $n/2$
singlets has high bipartite entanglement across most cuts but this
entanglement is in a certain sense ``local'' and could be eliminated
by a suitable coarse-graining.  

The concept of a ``trivial state'' is meant to express the notion that
states such as $n/2$ singlets have only {\it low-complexity} entanglement,
and relates to the circuit complexity of generating quantum states:
\begin{definition}[Depth-$d$ Trivial States]
We say that an $n$-qubit state $\rho$ is depth-$d$ trivial if it can be prepared
by applying a depth-$d$ quantum circuit comprised of $d$ layers of tensor-products of $2$-local quantum gates, to $\ket{0}^{\ot N}$ (for some
$N\geq n$) and tracing out $N-n$ qubits. 
\end{definition}
This is a special case of a
more general classification of quantum phases of matter in which two states are
said to be equivalent if they differ by an $O(1)$-depth quantum
circuit~\cite{CGW10}; here trivial states correspond to the phase that includes
product states.
Nontrivial states are
sometimes said to be topologically ordered, and examples include code
states of the toric code, or indeed any QECC with distance more than a
constant~\cite{BHF06}.  ``Topological order'' is an imprecisely
defined term that we will not do justice to here, but trivial states
have been shown to be equivalent to states without [various versions
of] topological order in \cite{BHF06, Haah14, kitaev13, SM14,Kim14}.

\subsubsection{The Physical Perspective: Robustness of Entanglement}

Arguably, the biggest barrier to building a quantum computer is 
quantum decoherence, which is the process by which
long-range entanglement, i.e.~the type that could be useful
to solve hard computational problems, e.g.~in Shor's algorithm, evolves into
classical distributions of trivial states, by interacting with the
environment.  With enough decoherence, classical computers can simulate
the quantum one, thereby extinguishing all hope for a quantum speed-up.

To counter these environmental errors, one must then use quantum codes,
which spread-out the quantum information over a larger space,
and so the introduced redundancy then adds some resilience to the computation we are trying to perform.
Indeed, the fault-tolerance theorem (see \cite{NC11}) uses quantum codes to argue that universal quantum computation
can be carried out efficiently under uniformly random error of sufficiently small constant rate.

However, in some cases the uniform random error model may be insufficient, and we would want to consider an error model that is much more adversarial than random.
In particular, one might ask a much simpler question: 
how much entanglement is left in the system if we trace-out,
or damage in some way, a small, yet constant fraction of the qubits.
In general, when physicists have considered locally-defined quantum-mechanical systems,
the immediate notion was to consider regular grids of $2$ or $3$ dimensions.
It can be easily shown that the quantum codes considered on such lattices
easily lose all long-range entanglement by acting on some small constant fraction of all qubits.
This gave rise to the folklore notion that quantum systems cannot posses
a 
robust form of quantum entanglement: namely the property that even quantum states that "pass as groundstates"
on most qubits are non-trivial (highly entangled).

Hence, the above problem raises a fundamental question regarding
quantum entanglement: could it be that our notion that entanglement is
fragile is merely an artifact of building systems in low-dimensional
grid?  Could it be that, at least theoretically, quantum systems
defined on more highly connected topologies could have entanglement
which is more resilient?  
A conjecture of this form was formulated rigorously by
Freedman and Hastings \cite{FreedmanH13} and was called the $\NLTS$
conjecture (see definition \ref{def:nlts}).  
\begin{definition}

\textbf{No Low-Energy Trivial States (NLTS)} 

\noindent
Let $\{H_n\}_{n\in \mathbf{N}}$ be a family of $k$-local Hamiltonians
for $k=O(1)$. 
We say that $\{H_n\}_{n\in \mathbf{N}}$ is $\eps$-NLTS if 
there exists a constant $\eps>0$ such that
for any $d$ and all sufficiently large $n$, 
the following holds: 
if $\{\rho_n\}$ is any family of $d$-trivial states then
\be \tr[\rho_n H_n] >\lambda_{\min}(H_n) + \eps.
\ee
\end{definition}

%Loosely stated, it calls
%for local Hamiltonians for which any quantum state that has energy,
%say at most $1/10$ of the total available energy, is highly entangled.

Several works around the $\NLTS$ conjecture, and its parent conjecture
(quantum $\PCP$ - see next section) have provided ambiguous 
evidence
about its ultimate status.  Indeed, the works of
\cite{BV05,BH-product,Has12,AE14} have suggested that the $\NLTS$
conjecture may be false by showing that large classes of local
Hamiltonians have trivial states at very low energies.  
Moreover, some of these results study local Hamiltonians on topologies which
are highly expanding, and correspond, in a sense, to classical 
problems which are hard-to-approximate, because they resist the trivial divide-and-conquer
strategy above.

In this work, we provide a positive indication towards the $\NLTS$ conjecture, and hence the 
$\qPCP$ conjecture by resolving a weaker version of $\NLTS$
that considers errors instead of violations:

\begin{definition}

\textbf{Ground-state impostors}

\noindent
Let $H$ be a Hamiltonian.
A quantum state $\rho$ is said to be an $\eps$-impostor for $H$,
if there exists a set $S\subseteq [n], |S| \geq (1-\eps) n$ 
and a ground state $\sigma$ (i.e.~satisfying
$\tr[H\sigma]=\lambda_{\min}(H)$)
such that
%$\ket{\psi} = U_S \ket{\phi}$, where $\ket{\phi}\in \ker(H)$, and
%$\supp(U_S)\subseteq S$.
$\rho_S = \sigma_S$.
\end{definition}

\begin{definition}\label{def:nlets}

\textbf{No Low-Error Trivial States (NLETS)} 

\noindent
Let $\{H_n\}_{n\in \mathbf{N}}$ be a family of $k$-local Hamiltonians for $k=O(1)$.
We say that $\{H_n\}_{n\in \mathbf{N}}$ is NLETS if there exists a constant $\eps>0$
such that the following holds:
for any $d$ 
and all sufficiently large $n$, 
if ${\cal F} = \{\rho_n\}$ is any family of $\eps$-impostor states for $\{H_n\}_n$
then
${\cal F}$ is not $d$-trivial.

\end{definition}

By definition any family of bounded-degree local Hamiltonians that is $\NLTS$ is also $\NLETS$:
if a quantum state agrees with a ground state of the quantum system on ``most'' qubits,
then the bounded-degree assumption means that such a state also has low-energy w.r.t.~the
Hamiltonian.  We discuss the difference between $\NLTS$ and $\NLETS$
further in \secref{robust-zoo}.
Our main theorem is as follows:
\begin{theorem}\label{thm:nlets}

\textbf{Explicit NLETS}

\noindent
There exists constants $\eps = 10^{-9}, a,b>0$ and an explicit infinite family of Hamiltonians
$\{H_n\}_n$, each of the form
\be H_n = \frac{1}{m}\sum_{i=1}^m \frac{I + P_i}{2},\ee
for $P_i$ equal to $\pm 1$ times a tensor product of Pauli matrices on $16$ qubits and
identity elsewhere.  
These Hamiltonians have the property that
\bit
\item There exists a state $\ket{\phi_n}$ such that $H_n\ket{\phi_n}=0$.
\item For any  $\eps$-impostor $\rho_n$ for $H_n$
and any  quantum circuit $U_n$ of depth at most $d=b\cdot\log(n)$, we have
\be 
\|\rho_n - U_n \proj{0^{\otimes n}}U_n^\dag \|_1 > n^{-a}. 
\ee
\eit
\end{theorem}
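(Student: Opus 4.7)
The plan is two-fold: first, construct the $16$-local Hamiltonian explicitly from a classical LTC; second, prove the trace-distance lower bound against depth-$O(\log n)$ circuits by a light-cone/decoding argument. For the construction, I would take $H_n$ to be the frustration-free CSS Hamiltonian whose stabilizers $\{P_i\}$ arise from applying the Tillich-Z\'emor hypergraph product to a classical locally testable code $C$ of constant rate, constant relative distance, constant soundness, and parity-check weight $\le 8$. The resulting quantum code lives on $n\sim n_0^2+m_0^2$ qubits, has $X$- and $Z$-stabilizers of weight $\le 16$, linear distance, and inherits from the local testability of $C$ a ``quantum soundness'' / approximate-codeword-decoding property: any state whose syndrome is supported on a small set of checks admits a small-weight Pauli correction to an exact codeword. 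Existence of a zero-energy ground state $\ket{\phi_n}$ follows from nonemptiness of the code space.

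For the lower bound, I would argue by contradiction. Suppose some $\epsilon$-impostor $\rho_n$, with $\rho_S=\sigma_S$ and $|S|\ge(1-\epsilon)n$, is $n^{-a}$-close in trace distance to $\ket{\psi}\!\bra{\psi}$, where $\ket{\psi}=U_n\ket{0^{\otimes n}}$ for a depth-$d=b\log n$ circuit $U_n$. Since $\sigma$ satisfies every stabilizer exactly, every stabilizer whose support lies in $S$ is satisfied by $\ket{\psi}$ up to $n^{-a}$ slack, and the total violated syndrome is confined to the $O(\epsilon n)$ stabilizers touching $\bar S=[n]\setminus S$. Applying the quantum soundness of the hypergraph product produces a Pauli $E$ of weight $O(\epsilon n)$, supported in a Tanner-graph neighborhood $\pert{\bar S}$ of $\bar S$, such that $E\ket{\psi}$ is within $O(\sqrt{\epsilon})$ of an exact codeword $\ket{\sigma'}$.

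The final step is to derive a contradiction between ``$E\ket{\psi}$ is approximately a codeword of a good qLDPC code'' and ``$\ket{\psi}$ is a depth-$O(\log n)$ circuit output.'' Here the new vertex-expansion lemma enters: I would use it to locate a linear-sized region $G\subseteq S$ whose backward light cone under $U_n$ is disjoint from the forward light cone of $\pert{\bar S}$, so that on $G$ the marginal of $\ket{\psi}$ is realized by a depth-$d$ subcircuit acting on $G$ and on qubits outside $\pert{\bar S}$ only. By construction this marginal also agrees, up to $O(\sqrt{\epsilon})$, with the codeword marginal $\sigma'|_G$. The lemma, in its main form, should preclude exactly this: that a linear-sized marginal of a codeword of the hypergraph product code be approximately prepared by a depth-$O(\log n)$ circuit. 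Choosing $b$ below the lemma's threshold and $\epsilon=10^{-9}$ small enough that $|\pert{\bar S}|$ remains sub-linear closes the loop.

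The dominant obstacle is the vertex-expansion lemma itself. A trivial forward-light-cone count only gives $|\text{cone}(\bar S)|\le c^d|\bar S|=\epsilon\cdot n^{1+b\log c}$, which is already polynomially large and threatens to cover all of $S$, so the lemma must exploit a finer feature of low-depth quantum circuits than pure combinatorial reachability---likely an expansion property of the induced gate-influence graph that quantitatively matches the Tanner-graph expansion of the hypergraph product. Calibrating these two constants against each other is what fixes the explicit $\epsilon=10^{-9}$ and the admissible exponent $b$; once the lemma is in hand, the rest of the argument is bookkeeping around CSS distance and the soundness decoder.
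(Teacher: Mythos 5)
Your proposal diverges from the paper's argument in several ways that each, on its own, would block the proof. First, the construction: you assume the Tillich--Z\'emor hypergraph product of a constant-rate, constant-distance LTC yields a quantum code with \emph{linear distance} and with a ``quantum soundness'' (qLTC-type) decoding property. Neither holds. The hypergraph product has distance at most $\Theta(\sqrt N)$ regardless of the input code, and it is a major open problem whether any bounded-weight qLTC exists at all --- the paper is explicit that no such codes are known (its Section on qLTC$\Rightarrow$NLTS is a conditional warm-up, not part of the NLETS proof). What the paper actually uses is the \emph{repetition code} on a $14$-regular Ramanujan graph, and it never asserts linear distance; instead it proves a ``localized minimal distance'' lemma showing that one distinguished logical operator has weight $\Omega(n')$ concentrated in a single row or column, together with a residual/classical form of local testability inherited from the expander. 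The fractal structure of the hypergraph product is used to argue that the uniform-low-weight-error impostor condition restricts to a smaller intact product code on a connected residual graph; a generic LTC would not obviously give this.

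Second, and more fundamentally, the mechanism of your lower bound is not the one the paper uses, and I do not see how to make it work. You propose to find a linear-sized region $G$ whose backward light cone is disjoint from the forward light cone of the damaged region $\pert{\bar S}$, and you acknowledge yourself that a raw light-cone count at depth $b\log n$ blows up to $n^{1+\Theta(b)}$, which swallows everything --- and you leave open what ``finer feature'' of the circuit could rescue this. The paper does not take this route at all. Its vertex-expansion theorem says nothing about light-cone disjointness; it says that the \emph{computational-basis measurement distribution} of any state prepared by a depth-$d$ circuit has vertex expansion $\Omega(1)$ at distance scale $\ell \approx \sqrt n\, 2^{3d/2}$ (proved via Chebyshev polynomial approximations to spectral projectors, in the spirit of Arad--Kitaev--Landau--Vazirani). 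The complementary half of the argument, which your proposal omits entirely, is the \emph{uncertainty principle}: for an impostor of a CSS code state, measuring in the $X$ or $Z$ product basis must (for at least one of the two bases) yield a distribution that places $\Omega(1)$ mass on each of two sets separated by Hamming distance $\Omega(n)$ --- i.e.\ the distribution is approximately partitioned. This uses anti-commuting logical operators and the Hofmann--Takeuchi sum uncertainty relation, plus a Voronoi-cell decoding argument to extend it to noisy code states. The contradiction is then immediate: an approximately-partitioned distribution at distance $\Omega(n)$ cannot have high vertex expansion at scale $o(n)$, so $\sqrt n\,2^{3d/2}=\Omega(n)$ forces $d=\Omega(\log n)$. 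Without the uncertainty/partition step, there is no complexity witness for the impostor state and the vertex-expansion bound has nothing to bite on.
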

%Here $H_n^\eps$ is a ``residual Hamiltonian'' obtained by deleting at most an $\eps$
%fraction of the terms in $H_n$, and \thmref{cnlts} states that ground states of any such
%residual Hamiltonian cannot be approximated with the output of any low-depth circuit.
%\anote{Rewrote to be more explicit about the residual Hamiltonian.}

Despite being possibly weaker than $\NLTS$ in terms of the approximation criterion,
our theorem is stronger than $\NLETS$ in the following ways:
\begin{enumerate}
\item The original definition of $\NLTS$ includes a restriction that the generating
  circuit $U_n$ is allowed to couple qubits only if they are coupled via some local term
  of the Hamiltonian $H_n$.  Here we remove this restriction, and show a lower-bound for
  circuits $U_n$ even if they are allowed to couple {\it arbitrary} pairs of qubits.
\item We show that low-depth circuits not only are unable to produce ground states of
  residual Hamiltonians, they cannot even produce states that approximately match the
  classical probability distributions resulting from measuring these states in the $X$ and
  $Z$ bases.
\item We prove a depth lower bound that is not merely $\omega(1)$ but is
  $\Omega(\log(n))$.  A circuit of depth $\Omega(\log(n))$ can potentially generate a
  non-zero correlation between every pair of qubits, hence it ``saturates'' all light-cone
  type arguments.  Since the naive algorithm for estimating expectation values runs in
  time doubly exponential in $d$, our results imply that this algorithm will require time
  $2^{n^{\Omega(1)}}$.
\item We not only show that $d$-trivial states cannot be
  $\eps$-impostors, but we show that these sets of quantum states are separated by a
  trace distance of $n^{-\Omega(1)}$.
\item We use a relatively simple form of Hamiltonian, consisting only of commuting 16-local
  Pauli terms. 
  \end{enumerate}

%To our knowledge there were no previous Hamiltonians that were proposed as candidates for
$\NLETS$.
While previously known constructions of local Hamiltonians are not $\NLETS$, this may in
part be because they are either embedded on a regular grid in low dimensions, or depart
from this in ways that allow for efficient classical description.  Thus, our theorem
suggests that the apparent fragility of many-body entanglement from these examples may be
simply a sign of not considering a wide enough range of examples.

\subsubsection{Robust Entanglement Zoo}\label{sec:robust-zoo}

Given the numerous open problems / conjectures mentioned in this paper, it may be useful
to consider their interaction via a "zoo of robust entanglement" (see Figure \ref{fig:zoo}).
We first list for self-inclusiveness the relevant problems and their definitions.  In what
follows $\eps$ is a positive constant that can be arbitrarily small.
\begin{enumerate}
\item
$\NLTS$ - There exist local Hamiltonians such that any low-energy state is non-trivial.
\item 
$\cNLTS$ - There exist local Hamiltonians such that any quantum state satisfying a
  $\geq 1-\eps$ fraction of all local terms is non-trivial.
\item
$\NLETS$ - There exist local Hamiltonians such that any quantum state that is equal to a
ground state
up to a unitary incident on at most an $\eps$ fraction of qubits, is non-trivial.
\item
$\qLTC$ - There exist local Hamiltonians for which the energy of a quantum state is proportional to its
distance from the ground-space of the Hamiltonian.
\item
$\qLDPC$ - There exist quantum codes with local checks, and minimal distance scaling linearly in the number of qubits.
\item
$\qPCP$ - It is as hard to approximate the ground energy of a local Hamiltonian to a constant fraction accuracy, as it is
to estimate it to inverse-polynomial accuracy.
\end{enumerate}

In Theorem \ref{thm:qltc} we show that an affirmative resolution of the $\qLTC$ conjecture would imply $\NLTS$.
This connects two important conjectures in quantum Hamiltonian complexity, and as described in the next
section, allows us to connect the $\NLTS$ conjecture, to open problems arising in algebraic topology
in the context of high-dimensional expanders.

In our main \thmref{nlets}, we will use a {\it residual} form of quantum local testability to show that our local Hamiltonian 
is $\NLETS$.  Essentially, the more restrictive error model of $\NLETS$ will allow us to leverage
a weaker form of quantum local testability to argue a circuit lower-bound on ground-state impostors.

To stress the difference between $\NLTS$ and $\NLETS$: States that are low energy
w.r.t.~some Hamiltonian may not necessarily corresponds to applying a small
constant-weight error to some ground state of the system, or even to a superposition of
such states.  To make this logical step, one needs to argue that the Hamiltonian has some
form of local testability, with $\qLTC$ being the strongest version thereof.  In other
words: while for any bounded-degree local Hamiltonian, small weight errors translate to
small-weight violations, the converse only holds if the Hamiltonian (or code) is somewhat
$\qLTC$. 

On the other hand, $\NLETS$ is sufficiently general (or "weak") to serve also as a necessary condition for the qLDPC (and $\qLTC$) conjecture:
if there exists a 
%a locally-defined (via a set of local checks)
quantum code with linear distance $\delta_{min}$, then in particular
any error of fractional weight, say $\delta_{min}/2$, cannot make the state trivial.
The $\qLDPC$ conjecture is still wide open despite some recent progress \cite{BH14} so our Theorem \ref{thm:nlets}
can be considered as progress towards $\qLDPC$ from a slightly different angle - that of robust
entanglement as a weak form of proper quantum error correction.
Hence $\NLETS$ is a step forward in two hierarchies: one is the hardness-of-approximation chain $\qPCP \Rightarrow \NLTS \Rightarrow \cNLTS \Rightarrow \NLETS$,
and the other is the robust-coding chain  $\qLTC \Rightarrow \qLDPC \Rightarrow \NLETS$.

\begin{figure}
\center{
 \epsfxsize=2in
 \epsfbox{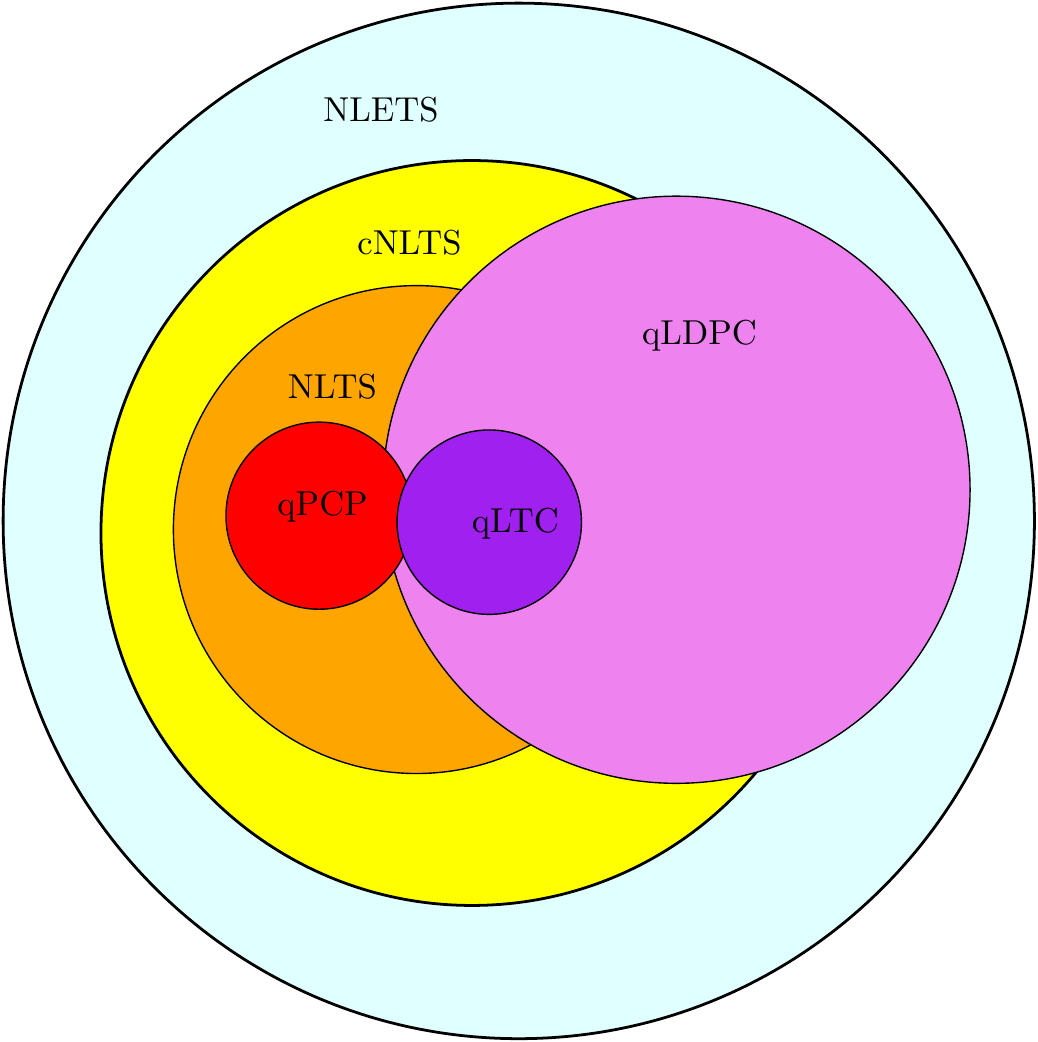}
 \caption{\footnotesize{The Robust-Entanglement Zoo}\label{fig:zoo}}}
\end{figure}

\subsection{The Topological Perspective: High-Dimensional Expanders}

The quest for robust forms of quantum entanglement (e.g.~the $\qPCP$ conjecture) has
raised intriguing questions about which interaction {\it topologies} may be suitable
for such a phenomenon.

As mentioned above, at the very least one would like a topology that is ``expanding'',
i.e.~one in which discarding a constant fraction of the terms would not break the local
Hamiltonian in question into small disjoint components.  However, previous works
\cite{Has12,BH-product,AE14} have indicated that mere graph-expansion may be insufficient, and
hence a more refined, high-dimensional property may be required.

This implies a connection to the nascent field of high-dimensional expanders:
In an attempt to repeat the enormous success of expander graphs researchers have recently
tried to provide a ``standard'' definition of a high-dimensional expander that would allow
simultaneous characterization of these objects from both the combinatorial and spectral perspectives,
as in expander graphs \cite{LinialMeshulam,MeshulamWallach,Gromov} .

In this work, we consider a definition of high-dimensional expansion due to \cite{KKL14,EK16} on complex chains over $\F_2$.  
It is called co-systole expansion:
Let $C$ be a $d$-dimensional complex chain over $\F_2$, $C = \{C_0,\hdots, C_{d-1}\}$ with
boundary maps $\delta_k: C_k \mapsto C_{k-1}$ for all $k\in [d]$.  
Each $C_i$ is a vector space over $\F_2$, and the (linear) boundary maps
have the following defining property:
\be
\delta_{k-1} \circ \delta_k = 0, \forall k.
\label{eq:boundary}\ee
Similarly, one can define the $k$-th co-chain $C^k$ as the space of functions $C_k \mapsto \F_2$,
and corresponding maps, called {\it co-boundary} maps, that map $\delta^k : C^k \mapsto C^{k+1}$, and likewise
\be 
\delta^{k+1} \circ \delta^k = 0, \forall k
\label{eq:co-boundary}\ee
Define $Z^k = \ker(\delta^k)$, $B^k = \text{im}(\delta^{k-1})$ for each $k$.
By \eq{co-boundary}, we have $B^k \subseteq Z^k$.
The complex $C$ is said to be an $(\eps,\mu)$-co-systole expander if for all $0\leq k <d$:
\be \min_{a\in Z^k - B^k} |a| \geq \mu n_k \ee
and
\be
\min_{A\in C^k - Z^k}
\frac{\left| \delta(A) \right|}{\min_{z\in Z^k} |A + z|}
\geq \eps n_k,\ee
where $|x|$ is the Hamming weight of $x$, and $n_k = \dim(C^k)$.
An infinite family of complexes $C^{(n)}$ is then said to be a co-systole expander if there exist
$\eps,\mu$ independent of $n$ such that each $C^{(n)}$ is $(\eps,\mu)$ co-systole.
We usually further require that such families have bounded degree, i.e.~where the boundary
operator has only $O(1)$ non-zero entries in each row/column.
We similarly define a systole expander, if the above properties hold also for the boundary maps.
A family is then called a systole/co-systole expander if the property holds simultaneously for both boundary and co-boundary maps.

In the last two years there has been significant progress towards achieving
bounded-degree co-systole expanders. Kazhdan, Kaufman and Lubotzky \cite{KKL14}
constructed an infinite family of $2$-dimensional co-systole expanders,
and subsequent work of Evra and Kaufman \cite{EK16}, provided a construction an explicit $d$-dimensional
co-systole expander of every dimension $d$, both of bounded degree.

This definition of co-boundary expansion has a natural interpretation in the context of quantum
codes (see e.g. \cite{Terhal15}).  It is known that for every $3$-dimensional complex
chain $\{C_0,C_1,C_2\}$, one can associate a
quantum code by canonically translating the boundary maps $\delta_2,\delta_1$ to
tensor-product Pauli operators.  
Here $C_1$ corresponds to the set of qubits, $C_2$ correspond to the $Z$ stabilizers and
$C_0$ to the $X$ stabilizers. In particular, the stabilizer group is defined to be
\be \{ Z^{\delta_2z} : z \in C_2\} \cup \{X^{(\delta_1)^Tx} : x\in C_0\}.
\label{eq:homology-stabilizer}\ee
(See \secref{def-code} for background and definitions of stabilizer codes.)
These operators in \eq{homology-stabilizer} commute because of \eq{boundary} and they
correspond to a quantum error-correcting code whose logical operators are isomorphic to
the quotient $Z^1/B^1$.  The degree of the complex translates directly to the weight of
the check operators of the quantum code.  Finally, if $\{C_0,C_1,C_2\}$ is both a
co-systole and a systole expander then this implies that the code is in fact a quantum
locally testable code $(\qLTC)$ of linear minimal distance (defined formally in
\defref{qltc} below).

Our first main result \thmref{qltc} shows that the Pauli check terms associated to any such $\qLTC$
comprise an $\NLTS$ local Hamiltonian - namely, one in which any low-energy state
can only be approximated by large-depth quantum circuits.
Hence, using the $\qLTC$ formalism, one can derive the following corollary which connects
the existence of systole / co-systole expanders to the $\NLTS$ conjecture:
\begin{corollary}\label{cor:highdim1}
Let $C^{(n)}$ be a $3$-dimensional complex that is an $(\eps,\mu)$ systole/co-systole expander.
Then the Pauli terms associated with $C^{(n)}$ via the CSS formalism, constitute an $\NLTS$.
\end{corollary}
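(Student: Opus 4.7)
The plan is to reduce Corollary~\ref{cor:highdim1} to \thmref{qltc} via the chain-complex / CSS-code dictionary recalled in the paragraph preceding the statement. Three steps are involved: build the code and Hamiltonian from the complex, verify that it is a linear-distance $\qLTC$, and then invoke \thmref{qltc}.

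First, I would realize the CSS code and its associated Hamiltonian directly from $C^{(n)} = \{C_0,C_1,C_2\}$ using the prescription in \eq{homology-stabilizer}: qubits are indexed by a basis of $C_1$, the $Z$-checks are the columns of $\delta_2$, and the $X$-checks are the rows of $\delta_1$. Commutativity of the check operators is immediate from $\delta_1 \circ \delta_2 = 0$, and the bounded-degree hypothesis on the complex makes every check a constant-weight Pauli, so the natural Hamiltonian
\[
H_n \;=\; \frac{1}{m}\sum_{i=1}^m \frac{I + P_i}{2}
\]
is $O(1)$-local.

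Second, I would translate the systole / co-systole hypotheses into the two defining properties of a linear-distance $\qLTC$. The condition $\min_{a\in Z^k - B^k}|a| \geq \mu n_k$, applied in the appropriate degree for the boundary map and the co-boundary map, says that every nontrivial $X$- and $Z$-logical operator has weight $\Omega(n)$, i.e.\ the code has linear minimum distance. The soundness-type bound $|\delta(A)|/\min_{z\in Z^k}|A+z| \geq \eps$ (with the normalization matching \defref{qltc}) says that an error pattern whose distance to the stabilizer group is $w$ must trigger $\Omega(w)$ violated check operators; this is precisely the quantitative soundness required of a $\qLTC$. The fact that we are assuming \emph{both} systole and co-systole expansion (not just one) is used here to cover $X$-type errors in one degree and $Z$-type errors in the other.

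Third, I would apply \thmref{qltc} to conclude that the Hamiltonian of any bounded-degree linear-distance $\qLTC$ is $\NLTS$; chaining the implications systole/co-systole expander $\Rightarrow$ linear-distance $\qLTC$ $\Rightarrow$ $\NLTS$ yields the corollary. The argument as a whole is a translation exercise: the one subtlety to watch is bookkeeping the degrees and the normalizations so that the parameters $(\eps,\mu)$ of the complex produce a $\qLTC$ with bounded-below soundness and distance constants as $n\to\infty$. All the genuinely new work has been pushed into \thmref{qltc}, which is the main NLTS theorem of the paper; the corollary is then essentially automatic.
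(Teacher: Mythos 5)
Your proposal is correct and is exactly the route the paper has in mind: the paragraph preceding the corollary already states that a complex which is simultaneously a systole and co-systole expander yields, via the CSS dictionary of \eq{homology-stabilizer}, a bounded-weight CSS code that is a $\qLTC$ with linear distance, and the corollary is then an immediate consequence of \thmref{qltc}. Your translation of the two expansion conditions (systole $\Rightarrow$ linear distance, co-systole $\Rightarrow$ soundness, one for each of the $X$/$Z$ sectors) is the intended dictionary, and the observation that the bounded-degree hypothesis on the complex gives the $O(1)$-locality of the checks is the right bookkeeping. The only thing worth flagging is that \thmref{qltc} also assumes $k\geq 1$, i.e.\ that the code encodes at least one logical qubit; in the chain-complex picture this amounts to nontriviality of $Z^1/B^1$, which the corollary implicitly assumes (and which is essentially what makes the systole condition nonvacuous), so it is not a flaw in your argument but is worth a sentence when writing it up.
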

%In particular, if a co-systole expander with Poincar\'{e} duality exists, then the $\NLTS$ conjecture is true.
Alternatively, if $\NLTS$ turned out to be false, it would imply a strong non-duality in the following sense:
any $3$-complex that is a co-systole expander, cannot be like wise a systole expander and vice versa.

It is interesting to point out that the recent construction by Evra and Kaufman \cite{EK16}
achieves only a one-sided expansion, namely of the co-boundary map, but
is actually known {\it not} to possess this property for the boundary map.
This is because the constructed complex has boundary / co-boundary maps which behave very differently.
Notably, such an equivalence exists for the Toric Code, but such a code is very
far from being a co-systole expander.
We further note that a systole / co-systole expander of dimension $2$ (namely a complex of triangles) is actually known not to exist by considering these expanders as a system of $3$-local commuting Hamiltonians \cite{AE11}.

A similar property of having boundaries and co-boundaries behave very differently was at
the core of the behavior of the ``one-sided'' $\NLTS$ construction due to Freedman and
Hastings \cite{FreedmanH13}.  In fact, one can check that the Evra-Kaufman construction
\cite{EK16} also implies a one-sided $\NLTS$ by applying Theorem \ref{thm:qltc} in this
paper to one side of the checks, say the Pauli $X$ operators, in the same way as in
Corollary \ref{cor:highdim1} above.  Without including the precise definitions (found in
\cite{EK16}) one can then connect one-sided $\NLTS$ with certain classes of
high-dimensional expanders called ``Ramanujan complexes'' as follows:
\begin{corollary}\cite{EK16}
For every $d$ there exists an infinite family of Ramanujan complexes $\{C^{(n)}\}$ such that the Pauli operators corresponding to its boundary maps
are $d$-local Hamiltonians that are one-sided $\NLTS$.
\end{corollary}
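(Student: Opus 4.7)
The plan is to deduce the corollary by combining the Evra--Kaufman construction of Ramanujan complexes with a one-sided application of Theorem \ref{thm:qltc}, paralleling how Corollary \ref{cor:highdim1} uses a two-sided systole/co-systole expander.

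First, I would invoke the construction of \cite{EK16}: for every $d$ it produces an infinite family of bounded-degree $d$-dimensional Ramanujan complexes $\{C^{(n)}\}$ which are $(\eps,\mu)$-co-systole expanders with constants independent of $n$. The essential input from that work is the co-boundary expansion bound on the chosen co-dimension, together with bounded degree of the boundary operators. Unlike the hypothesis of Corollary \ref{cor:highdim1}, these complexes are \emph{not} assumed (and in fact are known not) to be systole expanders --- the corresponding boundary map behaves very differently from its transpose.

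Second, I would pass to a local Hamiltonian via the CSS dictionary recalled in \eq{homology-stabilizer}. Fixing a qubit dimension, one assigns $X$-stabilizers to the rows of $\delta_{k-1}$ (seen as a co-boundary map $\delta^{k-1}$) and $Z$-stabilizers to the columns of $\delta_{k+1}$; commutation follows from \eq{boundary}, locality follows from the bounded degree, so the construction yields a $d$-local Pauli Hamiltonian $H^{(n)}=H_X^{(n)}+H_Z^{(n)}$ up to normalization. Because the complex is a co-systole expander, the $X$-checks alone form a quantum analogue of a locally testable system with respect to $Z$-basis measurement outcomes, with the expansion constant $(\eps,\mu)$ playing the role of the testability parameters used in Theorem \ref{thm:qltc}.

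Third, I would apply Theorem \ref{thm:qltc} to the sub-Hamiltonian $H_X^{(n)}$ only. The statement that any low-energy state of a $\qLTC$ is non-trivial is derived from the local testability property, which in turn depends only on the expansion of the chosen co-boundary map; the complementary boundary-map side is never required in that half of the argument. Thus applying Theorem \ref{thm:qltc} to $H_X^{(n)}$ yields an $\NLTS$ lower bound in terms of the $X$-energy alone, and since the $X$-energy of a state is dominated by its energy under the full Hamiltonian $H^{(n)}$, this gives the one-sided $\NLTS$ conclusion.

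The main obstacle is confirming that the proof of Theorem \ref{thm:qltc} genuinely factorizes and uses only co-systole (not systole) expansion when one restricts attention to a single Pauli type; this is what is meant by ``in the same way as in Corollary \ref{cor:highdim1}'' and should amount to a careful inspection of which expansion inequalities are invoked at each step of that proof, together with verifying that the Ramanujan complex of \cite{EK16} supplies them with the correct orientation.
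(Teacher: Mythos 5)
Your high-level route is the same as the paper's, which presents this corollary as a remark (``one can check...'') rather than with a fleshed-out proof: take the EK16 bounded-degree co-systole expanders, associate a CSS Hamiltonian via \eq{homology-stabilizer}, and run a one-sided version of the argument of \thmref{qltc}. You also correctly flag the single genuine issue, namely whether that argument factorizes. Since the paper itself does not resolve this beyond the same appeal to ``applying \thmref{qltc} to one side of the checks,'' your proposal and the paper's implicit proof are of essentially the same level of completeness.

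That said, the way you phrase Step 3 conceals the subtlety rather than addressing it. You cannot literally ``apply Theorem \ref{thm:qltc} to the sub-Hamiltonian $H_X^{(n)}$ only.'' That theorem's proof has two pillars: (i) the uncertainty principle (\lemref{uncertainty}, \propref{hbar1}, \propref{vor1}) applied to a pair of anti-commuting logical operators $X^{b_1^x}, Z^{b_1^z}$, which inherently uses \emph{both} halves of the CSS structure; and (ii) local testability in \emph{whichever} of the $X$ or $Z$ bases the uncertainty argument designates, plus a linear lower bound on the corresponding logical-operator weight. Restricting to $H_X$ alone throws away the ingredient needed for (i). Also, your closing sentence ``since the $X$-energy of a state is dominated by its energy under the full Hamiltonian'' points in the wrong direction for what one-sided $\NLTS$ is: the relevant state class is those that exactly satisfy all the $Z$-terms (so $X$-energy \emph{equals} total energy), and one needs to use that constraint positively, not as a domination inequality.

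The correct formulation is a one-sided \emph{variant} of \thmref{qltc}: consider states that exactly satisfy the $Z$ (or $X$) stabilizers, so that measuring in that basis gives outcomes exactly in $S_z^\perp$ (resp.\ $S_x^\perp$); this exact satisfaction substitutes for the local testability that would otherwise be needed to cluster that basis, and the co-boundary expansion of EK16 supplies the testability for the other basis. One then still needs to check that the single distance the co-systole expansion provides (a linear lower bound on one of $\min |S_z^\perp - S_x|$ or $\min|S_x^\perp - S_z|$) is the one actually invoked after the uncertainty argument selects a basis. Your proposal acknowledges this step (``a careful inspection of which expansion inequalities are invoked'') but does not carry it out, and neither does the paper; spelling it out precisely is what would turn this remark into a proof.
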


Hence, \thmref{qltc} of this paper presents a connection between two difficult problems, one in quantum complexity and another in algebraic topology:  a $\qPCP$-optimistic view calls for an attempt to construct such expanders, and resolve the $\NLTS$ conjecture in the affirmative, whereas a $\qPCP$-pessimistic approach, would possibly rule out $\NLTS$ using quantum arguments, and by that
provide a negative result to the systole/ co-systole conjecture.

\subsection{Proof Outline}

\subsubsection{Circuit Lower-bounds} 
We begin by defining a ``complexity witness'',
i.e.~a simple-to-verify property that can prove a state is
nontrivial.  
The goal is then to show a local Hamiltonian system for which
such a complexity witness can be found, not only for its ground state, but for any
quantum state which is an $\eps$-impostor.

Our complexity witness is chosen as the following geometric property of quantum states:
we show that
measuring a trivial state in any product basis results in a
probability distribution over $\bbF_2^n$ with high expansion.  It is
well known that the uniform measure on $\bbF_2^n$, or indeed any
product measure, has good expansion properties.  It is not hard to see
this is also true for the output of low-depth classical circuits.  We
extend this to quantum circuits, by using Chebyshev polynomials in a
way inspired by \cite{FT00,AradKLV12area}.   
\begin{theorem}[informal version of \thmref{vertex}]\label{thm:vertex-informal}
Let $N\geq n>0$ be some integers, and $\ket\psi = U\ket{0^N}$ for $U$ a circuit of depth $d$.  Let $p$ be
the probability distribution that results from measuring the first $n$
qubits in the computational basis; i.e.~
\be p(x) = \sum_{y\in \{0,1\}^{N-n}} |\braket{x,y}{\psi}|^2. \ee
%\label{eq:low-depth-p}
Then for any $\ell \geq \alpha\sqrt{n}2^{1.5d}\geq 1$ with $\alpha \leq 1$, we have:
\be h_{\ell}(p) \geq \Omega(\alpha^2)
\ee
\end{theorem}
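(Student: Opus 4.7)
The plan is to adapt the Chebyshev-polynomial technique of \cite{AradKLV12area}, but applied to a ``Boolean-cube adjacency operator'' $W = \frac{1}{n}\sum_{i=1}^n X_i$ (acting on the first $n$ qubits and extended by identity to the $N-n$ ancillas), rather than to a local Hamiltonian. Since each $X_i$ flips qubit $i$, the operator $W^k$ couples computational-basis states at Hamming distance at most $k$, and therefore for any $S,T \subseteq \{0,1\}^n$ with $\mathrm{dist}(S,T) > \ell$ and any polynomial $q$ of degree at most $\ell$,
\[
\Pi_S\, q(W)\, \Pi_T \;=\; 0,
\]
where $\Pi_S = \bigl(\sum_{x \in S}\proj{x}\bigr)\otimes I_{N-n}$. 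Fixing a worst-case $S$ for the infimum defining $h_\ell(p)$ and setting $T = \{0,1\}^n \setminus S_\ell$, the goal is to choose $q$ so that this operator identity, evaluated on $\ket\psi = U\ket{0^N}$, yields $p(S_\ell \setminus S) \gtrsim \alpha^2\, p(S)$.

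The second ingredient is tight concentration of $\ket\psi$ under $W$. The Heisenberg-evolved operators $U^\dagger X_i U$ are each supported on at most $2^d$ qubits (the backward lightcone of qubit $i$), so the covariance $\langle\psi|X_iX_j|\psi\rangle - \langle\psi|X_i|\psi\rangle\,\langle\psi|X_j|\psi\rangle$ vanishes whenever the backward lightcones of $i$ and $j$ are disjoint. There are at most $O(n\cdot 4^d)$ non-disjoint pairs, giving $\mathrm{Var}_\psi(W) = O(4^d / n)$. An iterated lightcone-disjointness argument will extend this to higher moments $\langle\psi|(W-\mu)^{2k}|\psi\rangle \lesssim (k\cdot 4^d/n)^k$, localizing the spectral measure of $\ket\psi$ with respect to $W$ in a Gaussian-like window of width $O(2^d/\sqrt n)$ around $\mu = \langle\psi|W|\psi\rangle$; the same concentration will survive (up to natural factors of $1/\sqrt{p(S)},\,1/\sqrt{p(T)}$) for the conditional states $\Pi_S\ket\psi$ and $\Pi_T\ket\psi$.

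Armed with these, the plan is to choose $q$ as a shifted and scaled Chebyshev polynomial of degree $\ell$ that is $O(1)$ on the concentration window of width $O(2^d/\sqrt n)$ around $\mu$ and is boosted by a factor $\gtrsim 1/\alpha^2$ outside a slightly wider window of width $O(\alpha\cdot 2^d/\sqrt n)$. The classical tail estimate $T_\ell(1+\eta)\sim \cosh(\ell\sqrt{2\eta})$ makes this budget meetable exactly when $\ell \gtrsim \alpha\sqrt n\, 2^{1.5d}$: the $2^{1.5d}$ comes from one factor $2^d$ from the variance scale plus an additional $2^{d/2}$ from the Chebyshev degree-resolution tradeoff $\mathrm{deg} \sim \sqrt{\mathrm{window}/\mathrm{resolution}}$. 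Combining the orthogonality $\Pi_T q(W)\Pi_S = 0$ with this boost and the $W$-concentration of $\Pi_T\ket\psi$ then forces $p(T)$ to be dominated by $p(S_\ell\setminus S)/\alpha^2$, yielding $h_\ell(p)\ge \Omega(\alpha^2)$ as claimed.

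The main obstacle will be this final step: converting the moment bounds into a quantitative spectral-concentration statement strong enough for the Chebyshev argument to yield the precise scaling $\ell \ge \alpha\sqrt n\, 2^{1.5d}$ and the quadratic loss $\alpha^2$ simultaneously. In particular, the iterated lightcone-disjointness computation (a combinatorial sum over pairings of the $2k$ indices by their lightcone overlap graph) and its conversion into an operator-norm bound on the Chebyshev-filtered conditional states is the most delicate part; any looseness there would immediately degrade both the $\alpha^2$ loss and the $2^{1.5d}$ dependence on circuit depth.
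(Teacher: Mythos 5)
Your proposal replaces the paper's operator $H = U H_0 U^\dag$ (with $H_0 = \tfrac{1}{|L|}\sum_{i\in L}\ketbra{1}_i$ supported on the \emph{input} lightcone) by the output-space operator $W = \tfrac{1}{n}\sum_{i=1}^n X_i$. This is a genuinely different route, and the observation $\Pi_S\,q(W)\,\Pi_T = 0$ when $\dist(S,T) > \deg q$ is a clean way to get $\ell$-locality without paying the factor-of-$B$ blowup that the paper incurs via $\ell = Bm$. Unfortunately I do not see how to complete it, for the following reasons.

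The decisive feature of the paper's choice of operator is that $\ket\psi = U\ket{0^N}$ is an \emph{exact} zero eigenvector of $H$, so the Chebyshev filter $K = C_m(H)$ satisfies $K\ket\psi = 0$ identically and therefore $\sum_{i,j}\bra{\psi_i}K\ket{\psi_j} = 0$ in the four-block decomposition; together with $K_{13}=K_{14}=K_{24}=0$ this reduces $\bra{\psi'}K\ket{\psi'}$ to a pure boundary term. Your $W$ has no such eigenvector relation to $\ket\psi$, so you are forced to work with soft spectral concentration, and there is no analogous exact cancellation. In the same vein, the paper's lower bound $\bra{\psi'}K\ket{\psi'}\geq \tfrac12(nB)^{-2\gamma}p(S)$ rests on the identity $\bra{\psi'}P\ket{\psi'}=(1-2p(S))^2$, which is where the $p(S)$-dependence of the final ratio comes from; this identity uses the fact that $U^\dag R U$ is supported on $L$ and that $P_0$ only constrains qubits in $L$. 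Your sketch has no analogue of the reflection state $\ket{\psi'}$ or of this identity, and I do not see how the $p(S)$-scaling is supposed to arise from the ingredients you list.

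Two further concrete gaps. (1) The concentration statement you want for the conditional states $\Pi_S\ket\psi/\sqrt{p(S)}$ and $\Pi_T\ket\psi/\sqrt{p(T)}$ does not follow from concentration for $\ket\psi$: $\Pi_S(W-\mu)^{2k}\Pi_S \preceq (W-\mu)^{2k}$ is false in general, and the ``natural factor of $1/\sqrt{p(S)}$'' only yields the trivial bound that the conditional state has total spectral mass 1. You would need to prove concentration for these conditional states essentially from scratch, and there is no obvious reason they inherit any of the light-cone structure of $\ket\psi$; moreover the conditional means $\langle\psi_T|W|\psi_T\rangle$ need not be near $\mu$. (2) The claimed target $p(T)\lesssim \alpha^{-2}\,p(S_\ell\setminus S)$ is too strong and is in fact false: together with $p(S)\le 1/2$ it would force $p(S_\ell\setminus S) = \Omega(\alpha^2)$ for \emph{every} admissible $S$, but for the uniform distribution ($d=0$, $B=1$, $\ell = \alpha\sqrt n$) taking $S$ a single point gives $p(S_\ell\setminus S) = 2^{-n(1-o(1))}$. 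The theorem you are trying to prove only asserts the \emph{ratio} bound $p(\pell(S))\gtrsim \alpha^2\,p(S)$, which is compatible with arbitrarily small $p(S_\ell\setminus S)$, and an argument must therefore produce the $p(S)$ factor on the right-hand side rather than an absolute constant on the left. As sketched, yours does not, and I do not see a straightforward repair without reintroducing something like the paper's reflection/ground-space mechanism.
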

where $h_{\ell}(p)$ is the vertex expansion of $\bbF_2^n$ endowed with
measure $p$ and edges between all $x,y$ with $\dist(x,y)\leq \ell$.
Vertex expansion is defined precisely in  \secref{expansion}  but roughly speaking
measures the fraction of weight of any subset that should be near its boundary.

We refer to non-expanding distributions as ``approximately
partitioned''; meaning that we can identify two well-separated subsets
$S_0,S_1$ each with large probability measure.  A prototypical example of a state
giving rise to an approximately partitioned distribution is the so-called ``cat-state'' $(\ket{0^n}+\ket{1^n})/\sqrt{2}$.  However, the cat state is not the
unique ground state of any local Hamiltonian~\cite{CJZZ12},
so it is not a good candidate for an $\NLETS$ system.
 Another possibility,
the
uniform distribution of any (classical) code with good distance, is also
approximately partitioned, but simply using the check operators of a
classical code is insufficient since 
any product string state corresponding to a code-word would pass this
test, but is obviously a trivial state.  

An example of a state which is both approximately partitioned {\em and} locally checkable
is a state of a quantum error correcting code
(QECC) with low-weight generators.  QECCs protect quantum information by encoding a given
Hilbert space into a larger Hilbert space in a non-local fashion, so they are natural candidates for
creating robust forms of entanglement.  We will show in
\secref{warmup} the ``warm-up'' result that Hamiltonians corresponding
to a special subclass of QECCs (namely CSS codes) have no {\em zero-}energy trivial states. (Some version of
this claim was folklore~\cite{BHF06,Haah14}.)  Here if we want to
consider local Hamiltonians then it is necessary to restrict to codes
with low-weight check operators, also known as LPDC (low-density
parity-check) codes.

\subsubsection{Local Testability}
Later we will construct a quantum code in Section \ref{sec:nlets} that is robust, meaning
that even states that violate a small constant fraction of constraints would have a
``complexity witness'' to the fact that they are hard to generate using quantum circuits.
A key ingredient is local testability, a property of significant interest in theoretical
CS~\cite{Goldreich}.   
\begin{definition}
\textbf{Classical locally testable code}

\noindent
A code $C\subseteq \bbF_2^n$ is said to be locally testable with
parameters $q,\rho$, if there exists a set of $q$-local check terms
$\{C_1,\ldots,C_m\}$, such that
$$
\Prob_i \left[C_i(w) = 1\right] \geq \rho \cdot \frac{\dist(w, C)}{n}.
$$
\end{definition}
This means that strings violating only a few checks must be close to the code.
%Hence, loosely speaking, $\LTC$'s are classical analogs of $\NLTS$ ($\cNLTS$) - in the
%sense that they retain a certain property of their ground state (i.e.~code-space) even
%for high energy (``penalty''). 

In \cite{AE15} the authors define a quantum analog of $\LTC$'s, called $\qLTC$ (see
\defref{qltc} below for a precise definition).  It is an
open question (the ``$\qLTC$ conjecture'') whether there exist $\qLTC$s with constant soundness $\rho>0$
and $q = O(1)$.
In
\secref{qltc-nlts} we show that the $\qLTC$ conjecture implies the $\NLTS$ conjecture.
This is because low-energy quantum states are close to the encoded quantum space, which
has provable circuit lower-bounds.  However, our approach to $\NLTS$ will be instead will
use the simplest classical $\LTC$s, namely the repetition code, and use it to construct a
quantum code with a residual form of local testability.  
%Thus, our code will have two
%ingredients: 1) a classical LTC, and 2) the Tillich-Z\'emor hypergraph product, reviewed
%in \secref{TZ-review}.

\subsubsection{The Hypergraph Product}
The Tillich-Z\'emor hypergraph product~\cite{TZ09} provides a method for embedding a
classical code into a CSS code.  It does not preserve the minimal distance of the
composing codes. Indeed its best distance parameter scales like $O(\sqrt{N})$ for $N$
qubits, whereas we are interested in protecting against syndromes of linear weight.
However, one lesson from our work is that distance may not always be the best measure for
how well a quantum code resists error in the context of $\NLTS$.  As it turns out, there
exists a subset of the logical words of the quantum product code that is isomorphic to the
original $\LTC$.  While these words have weight only $O(\sqrt{N})$ each, they also inherit
some form of local testability from the underlying classical $\LTC$s.

Next, we note that even a residual form of local testability does not imply, on its own,
that a low-energy quantum state is hard to generate.  In particular, as mentioned before,
classical bit-string assignments are trivially easy to generate.  Here, we make use of the
fact the quantum uncertainty principle.  The fact that we use a quantum code means we can
measure in either the $X$ or $Z$ basis.  A standard uncertainty argument means that at
least one of these should have high uncertainty.  
This forces the distribution of any low-violation quantum state not only to be clustered
around the original code-space as with classical $\LTC$s, but also to have considerable
measure on at least two far-away subsets of the code, by the uncertainty principle.
This places a lower bound on how dispersed the distribution
is.  In this context, the distance and local testability of the original $\LTC$ will imply
that at least in one of the $X$ or $Z$ basis, the measured distribution must be
approximately partitioned, and therefore that the state must be nontrivial.

\begin{figure}
\center{
 \epsfxsize=4in
 \epsfbox{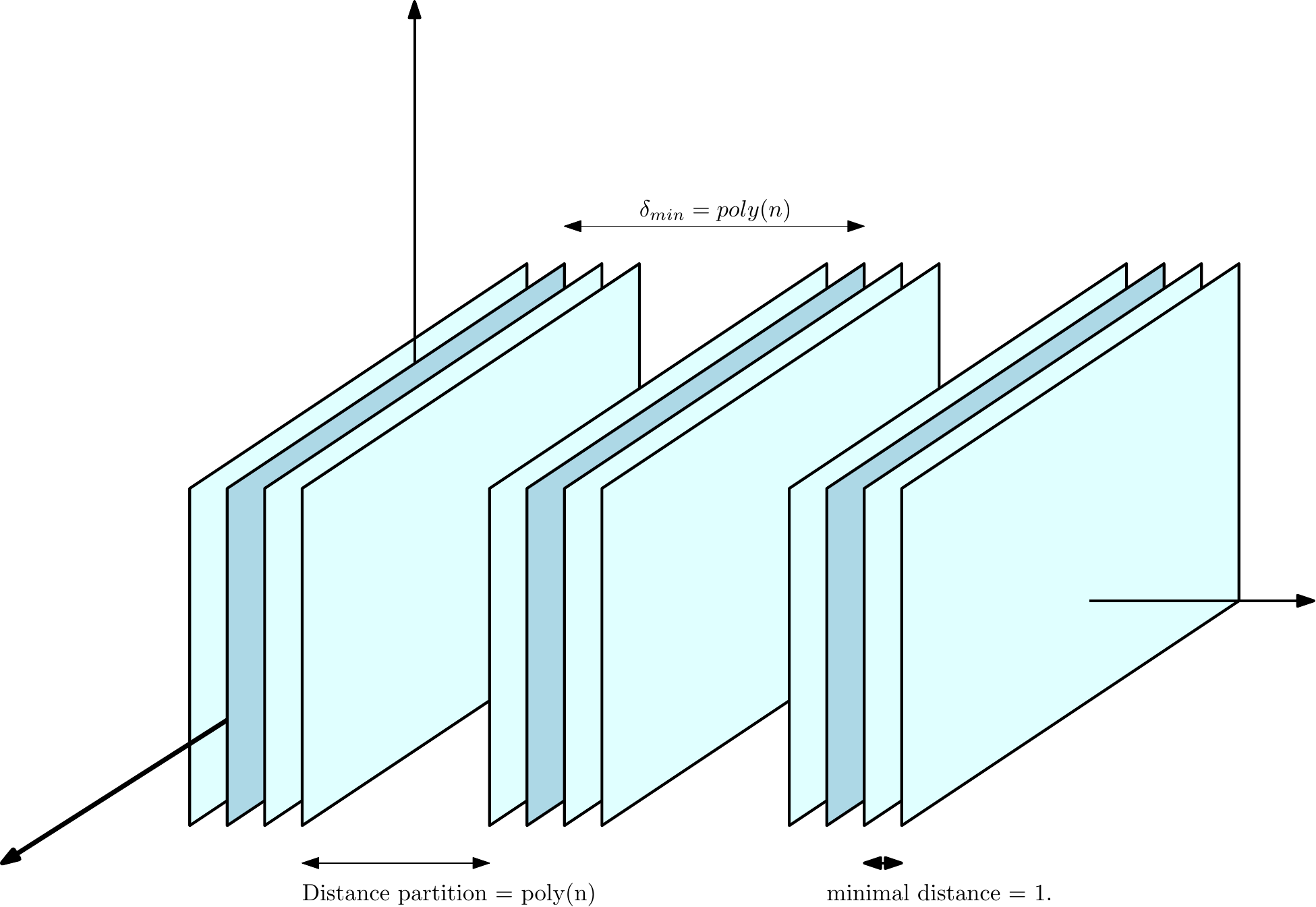}
 \caption{\footnotesize{Depiction of the robustness of qLTC.  The code space should be
     divided into linear subspaces that are separated by distance $\Delta_{\min}$, and
     when noise is added it should still cause the state to be clustered near these
     subspaces.  Thus even though distance drops to 1 when we allow low-energy states,
     there is still an approximate partition of the distribution when we measure in the
     $X$ or $Z$ basis.  Our $\NLETS$ system will rely on a similar partition to achieve
     its circuit lower bound.}}}
\end{figure}

\subsubsection{The Construction}

We provide full details of the construction in Section \ref{sec:nlets}
and here we provide a high-level sketch.  Consider a classical
code with a parity-check matrix $H$, defined with rows corresponding to a set
of $m$ checks
and columns corresponding to $n$ bits, thus $\ker H
\subseteq \bbF_2^n$.
The hypergraph
product of $H$ with itself is a quantum code on $m^2+n^2$ qubits with check matrices $H_x,H_z$
(corresponding to Pauli $X$, Pauli $Z$ operators) defined as follows:
\be H_x = 
\left( H \ot I_n
\middle|
I_m \ot H^T \right)
\qand
H_z = 
\left( I_n \ot H
\middle|
H^T \ot I_m\right)
\ee
To see that the usual orthogonality condition for CSS codes is
satisfied, observe that
\be H_x H_z^T = 
\left( H \ot I_n
\middle|
I_m \ot H^T \right)
\cdot 
\bpm I_n \ot H^T \\ H \ot I_m \epm
= H\ot H^T + H\ot H^T = 0.
\ee

We then start by choosing $H$ as a check matrix of the repetition
code, corresponding to the edges of a $d$-regular expander graph (e.g.~from \cite{LubotzkyPS88}) as equality constraints modulo $\F_2$.
One can check that the locality of the matrices $H_x,H_z$ -
i.e.~the number of $1$'s in each row - 
is the sum of the right and left degree of the Tanner graph defined by $H$.
In the case of the expander graph, this results in degree $d+2$.
%To fix this, we apply the degree-reduction step used in the PCP theorem~\cite{Din07}.
%Crucially, this degree-reduction step also results in a binary check
%matrix $H'$, which can then be used in the hypergraph product.
%This check matrix
%$H'$ has constant left and right degrees as a Tanner graph, and
%preserves the soundness of the locally testable code.
%Then, we derive check matrices $H_x', H_z'$ using the hypergraph product.
%The resulting quantum CSS code is then derived by translating $H_x',H_z'$ to Pauli operators, by replacing each row of $H_x'$
%with a tensor-product of $X$ operators replacing the locations of $1$'s, and similarly translating $H_z'$ to Pauli $Z$ operators.
%Finally, the $\NLTS$ Hamiltonian is formed by averaging over all of
%these 7-local checks.  
% shifting each $7$-local term by $I/2$ to ensure the output local
% terms are all positive-semidefinite.
% note: I don't think the reader cares that the terms are psd. this is
% just a technicality. 

\subsection{Previous Work}\label{sec:previous}
Most of the previous results established various settings in which
$\NLTS$ was known not to hold.  This is true even though in all of the following cases (except the first) , the
corresponding classical instance is NP-hard to approximate.    In the following list we identify
Hamiltonians with the [hyper-]graph in which each qubit is a vertex
and each interaction term defines a [hyper-]edge.

\benum
\item Non-expanding graphs, including $O(1)$-dimensional lattices.
  This was folklore and holds even classically, but was formalized in \cite{BH-product}.
\item Graphs where most vertices have $\omega(1)$
  degree~\cite{BH-product}.
\item 2-local Hamiltonians with commuting terms~\cite{BV05}.
\item 3-qubit Hamiltonians with commuting terms~\cite{AE11}.
\item Sparse commuting $O(1)$-local Hamiltonians corresponding to
  graphs with high girth~\cite{Has12}.
\item Commuting $O(1)$-local Hamiltonians with high local expansion~\cite{AE14}.
\eenum

The above results rule out topologies (such as high-degree graphs)
which are known to be central to the classical PCP constructions.
This combination of constraints (degree and expansion must be high
but not too high, etc.) made it plausible that $\NLETS$ would be false, at
least in the commuting case.    Indeed Ref.~\cite{BH-product} shows
that if the gap-amplification step of Dinur's PCP theorem \cite{Din07}
had a quantum analogue with similar behavior (as was proposed by \cite{AALV09}), it would actually
{\em refute} the qPCP conjecture.

On the positive side, it was shown by Hastings~\cite{Has11} that the Toric Code satisfies
a variant of $\NLTS$ in which we only consider states whose fractional energy is vanishing
in $n$.
Namely, any quantum state for which $tr(H \rho) = \eps = o_n(1)$ can only be generated by circuits
whose depth is $\Omega(\log(1/\eps)) = \omega_n(1)$.  This implies, in particular, that the Toric Code
is also $\NLETS$ for any sub-constant function $\eps = \eps(n)$.

In proving the $\NLETS$ conjecture our proof will narrowly
dodge the above no-go theorems by employing a
Hamiltonian (a) on an expanding hypergraph, (b) albeit one with
$O(1)$ degree, (c) by being $O(1)$-local (although the terms do commute),
and (d) by having much smaller local expansion than one would expect
from a random graph, and specifically not being hyper-finite,  in the
language of \cite{Has12,FreedmanH13}.

We also review the various incomplete attempts at establishing NLTS:
\benum
\item The ``cat state'' is nontrivial but not the ground state of any
  local Hamiltonian~\cite{CJZZ12}.
%\item Sparse quantum codes give Hamiltonians with nontrivial ground states, but these are not robust.  (maybe don't include since this wasn't really a ``previous'' attempt.)
\item Freedman and Hastings~\cite{FreedmanH13} give an example with
  ``one-sided NLTS.''  Here the Hamiltonian contains $X$ and $Z$ terms
  and has the property that a state which satisfies most of the $X$
  terms and all of the $Z$ terms must be nontrivial.  This result is also implied by the
  bounded-degree co-systole expanders of \cite{KKL14,EK16}.  
  In particular this one-sided $\NLTS$ is known {\it not} to be even $\NLETS$.
  
\item The uniform super-position over the code space of a classical code can be shown to
  require $\Omega(\log(n))$ depth to produce.  (This is a new result of ours but not hard
  to prove, and arguably was implicit in previous works ~\cite{BHF06,Haah14}.) If the code
  is an LDPC (low-density parity check) then this can be verified with $O(1)$-weight
  checks.  If it is an LTC (locally testable code; see \secref{qltc-nlts} for details),
  then this claim becomes robust, i.e.~it requires $\Omega(\log(n))$ depth to produce a
  distribution that even approximately matches the desired distribution.  However, there
  is no way with classical constraints to force the distribution to be uniform.  A single
  string in the support of the code space can be prepared in depth 1.  Only in a quantum
  code with both $X$ and $Z$ terms (or more generally, non-diagonal terms in the code
  Hamiltonian) can we require a state to be in a non-trivial superposition.  Thus the
  NLETS/cNLTS/NLTS conjectures do not have natural classical analogues, although our arguments
  for NLETS will rely on some techniques from classical coding.  
\eenum

\subsection{Discussion and Open Questions}
$\NLETS$ is a necessary condition for $\NLTS$ as well as $\qPCP$ \footnote{assuming
  $\QMA\subsetneq \NP$}, and $\qLTC$.  After a series of papers containing mostly bad news
for the $\qPCP$ and $\qLTC$ conjectures, this result can be seen as a pro-$\qPCP/\qLTC$
development.  We conjecture that $\NLTS$ is also true, and leave it as an immediate, yet
challenging open question.

Another open question is obtaining lower bounds on circuit depths that are asymptotically
larger than $\log(n)$.  Not only do our expansion lower bounds break down at this point,
but in fact all stabilizer states can be prepared in $O(\log(n))$ depth~\cite{AG04}.
Probabilistic arguments (cf.~\cite{BHH-designs}) can establish a loose depth hierarchy for
quantum circuits: circuits of depth $n^k$ cannot distinguish random circuits of depth
$n^{11k+9}$ from Haar-random circuits.  But it would be more useful to have concrete
methods for lower bounding depth at levels above $\log(n)$.

Finally, one can consider our construction in the context of $\qPCP$ itself.  
Of course our Hamiltonians do not encode any
particular problem and always have ground-state energy 0,
so in this context, our result should be regarded as a state-generation 
lower-bound and not as a complexity-theoretic result per-se.
Moving
beyond commuting Pauli operators and encoding
actual computational problems in the ground states of robust Hamiltonians
will be one of many hurdles required to
translate our result into a proof of the qPCP conjecture.

\section*{Acknowledgements}
We thank Dorit Aharonov, and Jeongwan Haah for useful conversations,
Fernando Brand\~{a}o, Toby Cubitt, Isaac Kim, Anthony Leverrier, Cedric Lin
and Kevin Thompson for useful and detailed comments and Alex Lubotzky for
pointing out the fundamental importance of the Tillich-Z\'{e}mor
construction.  
We thank an anonymous reviewer for pointing out a fundamental flaw in our previous paper
claiming the $\NLTS$ conjecture. 
We are both funded by the Templeton foundation, AWH
is funded by NSF grants CCF-1111382 and CCF-1452616 and ARO contract
W911NF-12-1-0486, and LE is funded by NSF grant CCF-1629809.

%\appendix
\newpage

\section*{Outline}
%The remaining sections can be thought of as the appendix, and only
%need to be read to establish the details of the proof of the main
%results described above.   The contents are roughly the same as the
%arXiv preprint 1510.02082.
%Here is a brief outline.

\renewcommand*{\contentsname}{}
\tableofcontents

\section{Preliminary facts and definitions}
\label{sec:prelim}

\subsection{Notation}\label{sec:notation}

\begin{itemize}
\item $\|M\|$ is the operator norm of $M$, i.e. its largest singular
  value.
\item For $x,y\in \bbF_2^n$, let $\dist(x,y)$ denote their Hamming
  distance, i.e.~the number of positions in which they differ.  
  For a point $x\in \bbF_2^n$ and a
  set $Y\in \bbF_2^n$ define $\dist(x,Y) := \min_{y\in
    Y}\dist(x,y)$.  For
  sets $X,Y\subseteq \bbF_2^n$, define 
  $$
  \dist(X,Y) = \min_{x\in X}\dist(x,Y).
  $$ 
%In general we identify $\{0,1\}^n$ with $\bbF_2^n$
%  and in particular use $x+y$ to denote $x\oplus y$ whenever $x,y\in \bbF_2^n$.
  For $x\in \bbF_2^n$, $|x|$ is the Hamming weight of $x$.
\item A probability distribution $p$  on $\bbF_2^n$ is $(\eta, D)$-approximately
  partitioned if there exist sets $S_1,S_2$ with $\dist(S_1,S_2)\geq
  D$ and $p(S_1)\geq \eta, p(S_2)\geq \eta$.  We write simply {\em
    approximately partitioned} when $\eta= \Omega(1)$ and $D = \Omega(n)$.
%\item
%For bit strings $x,y$ let $x\oplus y$ denote their bit-wise sum over $\bbF_2$ (XOR).
\item
For subsets $A,B\subseteq \bbF_2^n$, let $A+B$ denote the set of all possible pairwise
sums $x+ y$ with $x\in A, y\in B$.
In particular, when $B$ has one element $x$, we may omit the set notation and write $A + x$.
%\item
%Let ${\cal B}_m = \left\{0,1\right\}^m$ denote the boolean hypercube on $n$ bits.
\item
For a linear subspace $A\subseteq \bbF_2^n$ over $\bbF_2$, its dual $A^{\perp}$ is
$$
A^{\perp} = 
\left\{
x\in \bbF_2^n,
 \langle w,x \rangle =0 (\mod 2), 
\forall w\in A 
\right\}.
$$
We say that $A\perp B$ if $A \subseteq B^\perp$ or equivalently if
$B\subseteq A^\perp$.
\item
Let $C\subseteq \bbF_2^n$ be some code on $n$ bits. The minimal distance of $C$, denoted by $\Delta_{\min}(C)$
is the minimal distance between any pair of unique words in the code
$$
\Delta_{\min}(C) = \min_{x\neq y, x,y\in C} \dist(x, y).
$$
In this paper we study linear codes where $C$ is a subspace and so
$$
\Delta_{\min}(C) =\min_{x\in C-\{0\}}|x|.
$$
We also define the {\it fractional} minimal distance of $C$ by $\delta_{\min}(C) =
\Delta_{\min}(C)/n$. 
\item
  Let ${\cal C}$ be a linear code on $n$ bits, defined by the hypergraph $G = (V,E)$ where $V$
  corresponds to the set of bits/vertices and the $E$ corresponds to the set of
  checks/hyperedges.  We also define the $\bbF_2$-linear map $\partial$ from
  $\bbF_2^V\mapsto \bbF_2^E$ which sends a vertex $v$ to the sum over all $e\in E$ that
  are incident upon $v$. 
  The transpose of the code ${\cal C}$, denoted by ${\cal C}^T$, is defined by exchanging
  the roles of the bits and checks and replacing $\partial$ with $\partial^T$.   Note that
 $\calC = \ker\partial$ and $\calC^T = \ker\partial^T$.
We will find it convenient to overload notation so that
  $\partial v$ denotes also the set of edges $e$ incident upon $v$, and $\partial^Te$ is
  the set of vertices in hyperedge $e$.  
%  Similarly we let $v\in V$ denote the basis vector
%  of $\bbF_2^V$ with a single 1 in position $v$ and zeros elsewhere, and we use $e\in E$
%  synonymously with the corresponding indictor vector in $\bbF_2^E$.
\item
Let $S\subseteq T\subseteq \bbF_2^n$ denote some linear subspaces.
Then $T / S$ denotes the quotient space, meaning the set of cosets
$\{t + S : t\in T\}$.
Also, $T - S$ denotes the set of strings in $T$ that are not in $S$.
In particular, $T/S$ has a representation in terms of elements of $T-S$, and the $0$ element (representing $S$).
\item
For two positive-semidefinite matrices $A \succeq 0,B \succeq 0$ we say $A \succeq B$ if $A-B \succeq 0$.
\item
For a set $S$ we denote by $U[S]$ the uniform distribution on the set $S$.
%\item
%Let $S,T$ be as above, and let $R$ be some representation of $S / T$.
%For $x\in T$, we define $\coset_R(x)$, as the representative of $x$ by $R$,
%where the subscript is omitted for clarity when the representation is clear from context.
\end{itemize}

\subsection{Quantum codes and local Hamiltonians}
\label{sec:def-code}

\begin{definition}\textbf{Pauli operators}
\be X = \begin{pmatrix} 0 & 1 \\ 1 & 0 \end{pmatrix}
\qand Z = \begin{pmatrix} 1 & 0 \\ 0 & -1 \end{pmatrix}\ee
For $e\in \bbF_2^n$, define $X^e = X^{e_1} \ot X^{e_2} \ot \cdots \ot
X^{e_n}$, i.e.~the tensor product of $X$ operators in each position
where $e_i=1$; similarly define $Z^e = \bigotimes_i Z^{e_i}$.
\end{definition}

\begin{definition}

\textbf{CSS code}

\noindent
A $[[n,k,d]]$ quantum CSS code on $n$ qubits 
is a subspace ${\cal C}\subseteq {\cal H} = (\bbC^2)^{\ot n}$ of $n$ qubits. 
It is defined by a pair of  linear subspaces of $S_x, S_z \subseteq
\bbF_2^n$ such that $S_x \perp S_z$.
It is thus denoted  ${\cal C} = {\cal C}(S_x,S_z)$. 
Explicitly the subspace is given by
\ba
\calC(S_x, S_z) &=
\left\{\ket\psi\in (\bbC^2)^{\ot n} : X^x\ket\psi = \ket\psi \,\forall
x\in S_x,
Z^z\ket\psi = \ket\psi \,\forall z\in S_z\right\}
\\ & 
=\Span\left\{\frac{1}{\sqrt{|S_x|}} 
\sum_{x \in S_x} \ket{z+x} : z \in S_z^{\perp}\right\}.\ea
The code has $k = \log(|S_x^{\perp}/S_z|)$ logical qubits and distance $d = \min_{w\in S_x^{\perp}
  - S_z, S_z^{\perp} - S_x} |w|$. 
\end{definition}
The spaces of logical $X,Z$ operators are 
respectively defined by the quotient spaces
$S_z^{\perp}/S_x, S_x^{\perp}/S_z$.
The logical $X,Z$ operators that perform non-identity operations (also known as nontrivial
logical operators) are given by
$S_z^\perp-S_x$,$S_x^\perp-S_z$, respectively.

\begin{definition}\label{def:lh}

\textbf{$k$-local Hamiltonian}

\noindent
A $k$-local $n$-qubit Hamiltonian $H\succeq 0$
is a positive semidefinite operator on the $n$-qubit space $(\bbC^2)^{\otimes n}$,
that can be written as a sum $H = \frac{1}{m}\sum_{i=1}^m H_i$, where each $H_i$ is a positive-semidefinite matrix,
$0\preceq H_i \preceq I$, and each $H_i$ may be written as $H_i = h_i \otimes
I$, where $h_i \succeq 0$ is a $2^k \times 2^k$ PSD matrix.  
\end{definition}
These choices of eigenvalue bounds are to some extent arbitrary, but are also designed to
set the scale so that we can define ``low-energy'' states below in a natural way.

\begin{definition}
\textbf{The Hamiltonian of the code} 

\noindent
Suppose ${\cal C} = \calC(S_x, S_z)$ is a CSS code and
$H_x,H_z$ are subsets of $\bbF_2^n$ that generate
$S_x,S_z$.
Then we can define a Hamiltonian $H({\cal C})$, 
whose terms correspond to the 
generators of the CSS code in the following way.
\be
H = H({\cal C})
= \frac{1}{2|H_x|} \sum_{e\in H_x} \frac{I+X^e}{2}
+ \frac{1}{2|H_z|} \sum_{e\in H_z} \frac{I+Z^e}{2}
\label{eq:code-Ham}.\ee
\end{definition}
\noindent
Observe that the CSS condition $S_x \perp S_z$
%in \eq{CSS-conditions} 
implies that the
terms of $H({\cal C})$ all commute.  Thus the ground subspace of
$H(\calC)$ is precisely the code-space $\calC$.  Moreover, 
if the generating sets $H_x,H_z$ contain only terms with weight $\leq
k$ then the corresponding 
Hamiltonian $H({\cal C})$ is a $k$-local Hamiltonian.  

We can think of
$H$ as checking whether a state is a valid code state with the energy
equal to the expected fraction of violated constraints.
  However, in general, the number of violated
constraints may not correspond to more conventional notions of
``distance,'' such as (for classical strings) the Hamming distance to
the nearest codeword.  In the next section we discuss a type of code that addresses this.

\subsection{Locally Testable Codes}

\begin{definition}\label{def:ltc}
\textbf{Classical locally testable code}

\noindent
A code $C\subseteq \bbF_2^n$ is said to be locally testable with
soundness $\rho$ and query $q$, if there exists a set of $q$-local check terms
$\{C_1,\ldots,C_m\}$, such that
$$
\Prob_{i\sim U[m]} \left[C_i(w) = 1\right] \geq \rho \cdot \frac{\dist(w, C)}{n}.
$$
In particular $w\in C$ iff $C_i(w)=0$ for all $i$.
\end{definition}

\noindent
Similarly, a quantum locally testable code can be defined by the
property that quantum states at distance $d$ to the codespace have energy
$\geq \Omega(d/n)$.  (This normalization reflects the fact that the
check Hamiltonian $H(\calC)$ has norm $\leq 1$.)  Our definition is a
slight variant of the one from \cite{AE15}.
\begin{definition}\label{def:subspace-distance}
If $V$ is a subspace of $(\bbC^2)^{\ot n}$ then define its
$t$-fattening to be 
\be V_t := \Span \{ (A_1 \ot \cdots \ot A_n)\ket\psi : \ket\psi \in V,
\# \{i : A_i \neq I\}\leq t\}.\ee
Let $\Pi_{V_t}$ project onto $V_t$.  Then define the distance operator
\be D_V := \sum_{t\geq 1} t (\Pi_{V_t} - \Pi_{V_{t-1}}).\ee
\end{definition}
This reflects the fact that for quantum states, Hamming distance
should be thought of as an observable, meaning a Hermitian operator
where a given state can be a superposition of eigenstates.

\begin{definition}\label{def:qltc}

\textbf{Quantum locally testable code}

\noindent
An $(q,\rho)$-quantum locally testable code ${\cal C}\subseteq (\bbC^2)^{\ot n}$, 
is a quantum code with $q$-local projection $C_1,\ldots,C_m$
such that 
\be
 \frac{1}{m}\sum_{i=1}^m C_i 
 \succeq   \frac{ \rho}{n}   D_{\calC}.
\ee
\end{definition}

For stabilizer codes (which we will study exclusively in this paper)
this can be seen to be equivalent to the 
definition in \cite{AE15}.  However we believe it gives a more
generalizable definition of quantum Hamming distance.

\noindent
We now state the following connection (proved in Claim 3 of
\cite{AE15}) between classical and quantum CSS locally testable codes: 
\begin{fact}\label{fact:CSS}
\textbf{Classical codes comprising a $\qLTC$ CSS code must also be locally testable}

\noindent
Let ${\cal C}(S_x,S_z)$ be a quantum CSS code corresponding to two linear codes
$S_x^\perp,S_z^\perp \subseteq \bbF_2^n$.  If ${\cal C}$ is a $(q,\rho)$-$\qLTC$
then $S_x^\perp$ and $S_z^\perp$ are each
$q$-$\LTC$s with soundness at least $\rho/2$.  Conversely, if
$S_x^\perp$ and $S_z^\perp$ are each
$q$-$\LTC$s with soundness parameter $\rho$ with $S_x\subseteq S_z^\perp$ then
$\calC(S_x,S_z)$ is a $(q,\rho)$-$\qLTC$.
\end{fact}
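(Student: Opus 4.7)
I will prove the two directions of the equivalence separately, using carefully chosen witness states to translate between the quantum operator inequality and classical soundness.

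For the forward direction (qLTC implies both classical codes are LTC), for each $w \in \bbF_2^n$ I would use the symmetrized quantum witness
\[
\ket{\phi_w} := \frac{1}{\sqrt{|S_x|}}\sum_{x \in S_x}\ket{w+x}.
\]
This state has three critical properties: (i) it is annihilated by every $X$-check of $H(\calC)$, being invariant under $X^f$ for all $f \in S_x$; (ii) for $e \in S_z \subseteq S_x^\perp$ (by the CSS orthogonality), $Z^e$ acts as the scalar $(-1)^{e\cdot w}$, so $\bra{\phi_w}\tfrac{I-Z^e}{2}\ket{\phi_w} = \mathbf{1}[e\cdot w\neq 0]$ exactly reproduces the classical $Z$-check on $w$; and (iii) the quantum distance satisfies $\bra{\phi_w}D_\calC\ket{\phi_w} = \dist(w, S_z^\perp)$, because for a CSS code the fattening $V_t$ lies inside $\Span\{\ket v : \dist(v, S_z^\perp)\leq t\}$ while the support of $\ket{\phi_w}$ is contained in the single coset $w + S_z^\perp$, forcing $\Pi_{V_t}$ to act on $\ket{\phi_w}$ as either $0$ or $I$ according to whether $\dist(w, S_z^\perp)\leq t$. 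Substituting $\ket{\phi_w}$ into the $\qLTC$ inequality immediately yields the $\LTC$ soundness bound for $S_z^\perp$, and the mirror witness obtained by Hadamard conjugation (swapping the roles of $S_x$ and $S_z$, and of the computational and $X$-bases) yields the bound for $S_x^\perp$.

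For the backward direction, I would start by averaging each classical soundness bound over the measurement distribution of an arbitrary pure state $\ket\psi$. Writing $p_Z(w) = |\langle w|\psi\rangle|^2$, the $S_z^\perp$-$\LTC$ hypothesis directly gives $\bbE_{e\in H_z}\bra\psi\tfrac{I-Z^e}{2}\ket\psi \geq \tfrac{\rho}{n}\bra\psi\hat d_Z\ket\psi$, where $\hat d_Z$ is the operator diagonal in the $Z$-basis with eigenvalues $\dist(w, S_z^\perp)$; the analogous statement in the conjugate basis produces the operator $\hat d_X$. Since these hold for every $\ket\psi$ they lift to operator inequalities, which combine (with the $X$/$Z$-weighting of $H(\calC)$) into $H(\calC) \succeq \tfrac{\rho}{2n}(\hat d_Z + \hat d_X)$.

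The main obstacle is then the operator inequality $\hat d_Z + \hat d_X \succeq D_\calC$. I would prove this by noting that $\hat d_Z$ coincides with the fattening-distance operator of the $Z$-stabilizer-preserving subspace $V_Z = \Span\{\ket w : w \in S_z^\perp\}$ (and similarly $\hat d_X$ for $V_X$), with $\calC = V_X \cap V_Z$. Decomposing the Hilbert space into CSS syndrome sectors, within each sector there is a minimum-weight Pauli correction of product form $X^{u_*}Z^{v_*}$, where $|u_*|$ and $|v_*|$ are the minimum classical weights in the cosets determined by the $Z$- and $X$-syndromes respectively; since $|\supp(u_*) \cup \supp(v_*)| \leq |u_*|+|v_*|$ this forces $D_\calC \leq |u_*|+|v_*|$ on the sector, while the coset-support argument from the forward direction lower bounds $\hat d_Z$ by $|u_*|$ and $\hat d_X$ by $|v_*|$ on the same sector. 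The technically delicate step, and the hardest part of the proof, is to combine these sector-wise bounds into a global operator inequality despite the fact that $\hat d_Z$ and $\hat d_X$ do not commute.
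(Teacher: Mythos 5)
The paper does not give its own proof of \factref{CSS}, deferring instead to Claim~3 of \cite{AE15}, so there is nothing in the paper to compare against directly; your self-contained proof is essentially correct. The forward direction is the right witness calculation: $\ket{\phi_w}$ is supported in a single coset $w+S_x$ of $S_z^\perp$, all of whose elements lie at the same distance from $S_z^\perp$ (because $S_x\subseteq S_z^\perp$), so $\Pi_{V_t}\ket{\phi_w}$ is either $0$ or $\ket{\phi_w}$ according to whether $\dist(w,S_z^\perp)\le t$, and $\bra{\phi_w}D_\calC\ket{\phi_w}=\dist(w,S_z^\perp)$ as you claim. For the backward direction, the step you flag as delicate is in fact routine once you observe that $\hat d_Z$, $\hat d_X$ and $D_\calC$ each commute with the full stabilizer group of the CSS code (this again uses $S_x\subseteq S_z^\perp$ and $S_z\subseteq S_x^\perp$), so all three simultaneously block-diagonalize over the syndrome sectors $V_{s,t}$. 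On each sector $D_\calC$ acts as the scalar $\text{minwt}(s,t)\cdot I$, the minimum support size of a Pauli with syndrome $(s,t)$; the restrictions of $\hat d_Z$ and $\hat d_X$ to $V_{s,t}$ are Hermitian with quadratic forms bounded below by the scalars $|u_*|$ and $|v_*|$ respectively. Since $\text{minwt}(s,t)\le|\supp(u_*)\cup\supp(v_*)|\le|u_*|+|v_*|$, the noncommutativity of $\hat d_Z$ and $\hat d_X$ within a sector is irrelevant: each is bounded below by a scalar, the sum of scalars dominates the scalar $D_\calC|_{V_{s,t}}$, and hence $\hat d_Z+\hat d_X\succeq D_\calC$ holds sector by sector and therefore globally. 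One caveat: the exact soundness constants ($\rho$ vs.\ $\rho/2$) depend on normalization conventions for the qLTC inequality (how the average over $X$- and $Z$-checks is weighted, and whether $|H_x|=|H_z|$); as written your converse direction yields soundness $\rho/2$ rather than the $\rho$ stated in \factref{CSS}, but this is a bookkeeping discrepancy inherited from the choice of normalization, not a gap in the argument.
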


We now present a slight re-wording of the definition of $\LTC$ which would be useful later on:
\begin{fact}\label{fact:cluster} 

\textbf{The words of a residual LTC cluster around the original code}

\noindent
Let $C$ be a locally testable code with parameter $\rho$.
Any word $w$ that violates a fraction at most $\eps$ of the checks of $C$ is at fractional
distance at most $\eps/\rho$ from $C$.
\end{fact}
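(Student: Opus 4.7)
The plan is to observe that this fact is essentially a contrapositive rearrangement of \defref{ltc}, so the proof is a one-line calculation rather than a substantive argument. The soundness condition states that for any word $w$,
\[
\Prob_{i\sim U[m]}\bigl[C_i(w)=1\bigr] \;\geq\; \rho \cdot \frac{\dist(w,C)}{n}.
\]
First I would note that the hypothesis ``$w$ violates a fraction at most $\eps$ of the checks'' is precisely the statement that $\Prob_{i\sim U[m]}[C_i(w)=1]\leq \eps$. Chaining these two bounds yields
\[
\eps \;\geq\; \rho \cdot \frac{\dist(w,C)}{n},
\]
and dividing through by $\rho$ gives $\dist(w,C)/n \leq \eps/\rho$, which is exactly the claimed bound on the fractional distance.

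Since the proof is immediate from the definition, there is no real obstacle; the only reason to state the fact separately is to have it available in the ``fraction of violated checks'' phrasing that will be convenient in later sections when analyzing residual local testability of the hypergraph product code. I would therefore present it as a remark-style derivation rather than a substantive lemma, perhaps noting explicitly that the implication is tight (up to the constant $\rho$) so no slack is being lost in this reformulation.
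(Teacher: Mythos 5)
Your derivation is correct and matches the paper's treatment: the paper explicitly introduces this fact as ``a slight re-wording of the definition of $\LTC$'' and gives no proof, since (as you observe) it is just the contrapositive rearrangement of Definition~\ref{def:ltc}. Your one-line chaining of the soundness inequality with the hypothesis $\Prob_{i}[C_i(w)=1]\leq\eps$ is exactly the intended argument.
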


\subsection{Expander Graphs}

Expander graphs are by now ubiquitous in computer science, and have been shown to be a
crucial element for many complexity theoretic results, most prominently, perhaps is the
combinatorial version of the PCP theorem~\cite{Din07}.  The term ``expander'' (or more
precisely ``edge expander'') refers to the fact that for a not-too-large subsets $S$ of
vertices a large fraction of the edges incident upon $S$ leave $S$.  (In
\secref{expansion} we will discuss the related but inequivalent phenomenon of vertex expansion.)
Formally, we define a discrete analogue of the isoperimetric constant, known as the
Cheeger constant
\be
h(G) = \min_{S\subseteq [n], 0 < |S| \leq n/2} \frac{|\partial_G(S)|}{|S|},
\ee
where $\partial_G(S)$ is the set of edges with one point in $S$ and one in $V - S$.
\begin{definition}
\textbf{Expander Graphs}
\noindent
A family of $d$-regular graphs $\{G_n\}_n$ is said to be expanding, if there exists a constant $h>0$
such that $h(G_n) \geq h$ for all sufficiently large $n$.
\end{definition}

Expanders can be defined equivalently in terms of the spectrum of their adjacency matrix.
For a graph $G$ we define $\lambda_2(G)$ as the second eigenvalue of its adjacency
matrix.  Since the top eigenvalue of the adjacency matrix of a $d$-regular graph is $d$,
we define the spectral gap to be $d-\lambda_2(G)$.  We say a family $\{G_n\}$ is
[spectrally] gapped if $\lim\inf d-\lambda_2(G_n)>0$.
It is was shown by Tanner, Alon and Milman that spectrally gapped graphs have a large Cheeger constant:
\begin{fact}\label{fact:cheeger}\cite{HooryLW06}
For any $d$-regular graph $G$ we have:
\be
h(G) \geq \frac{1}{2}(d - \lambda_2).
\ee
\end{fact}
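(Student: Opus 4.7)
The plan is to prove this easy direction of Cheeger's inequality via the Courant--Fischer variational characterization of $\lambda_2$, applied to a suitably centered indicator vector of a minimum-expansion cut.

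First I would set up the standard Laplacian formalism. Writing $A$ for the adjacency matrix of $G$ and $L := dI - A$ for the (unnormalized) Laplacian, the all-ones vector $\mathbf{1}$ is a top eigenvector of $A$ with eigenvalue $d$, so Courant--Fischer gives
\be
d - \lambda_2 \;=\; \min_{v \perp \mathbf{1},\, v \neq 0} \frac{v^T L v}{v^T v}.
\ee
The key algebraic identity I will invoke is $v^T L v = \sum_{(i,j)\in E}(v_i - v_j)^2$, obtained by expanding $L = dI - A$ and collecting contributions from the $d$ edges at each vertex.

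Next I would pick $S \subseteq V$ with $|S| \leq n/2$ achieving the minimum in the definition of $h(G)$ and plug in the test vector $v := \mathbf{1}_S - \tfrac{|S|}{n}\mathbf{1}$, which is orthogonal to $\mathbf{1}$ by construction. A direct computation gives $v^T L v = |\partial_G(S)|$, since $(v_i - v_j)^2 = 1$ precisely on cut edges (where one endpoint is in $S$ and one is not) and $0$ otherwise, so the intra-cluster edges contribute nothing regardless of the common additive shift. For the denominator, one expands
\be
v^T v \;=\; |S|\!\left(1 - \tfrac{|S|}{n}\right)^{\!2} + (n-|S|)\!\left(\tfrac{|S|}{n}\right)^{\!2} \;=\; \frac{|S|(n-|S|)}{n} \;\geq\; \frac{|S|}{2},
\ee
using $|S| \leq n/2$ in the final step. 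Combining gives $d - \lambda_2 \leq |\partial_G(S)|/(|S|/2) = 2\,h(G)$, which is the claim.

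There is no real obstacle here: the only design choice is picking the test vector, and the naive $\mathbf{1}_S$ itself fails solely because it has a nonzero component along $\mathbf{1}$, which is exactly remedied by subtracting its mean $|S|/n$. Everything else is routine and the constant $1/2$ falls out of the inequality $|S|(n-|S|)/n \geq |S|/2$, which is tight when $|S|=n/2$.
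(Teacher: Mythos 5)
Your proof is correct and is essentially the standard argument for the easy direction of the discrete Cheeger inequality; the paper does not supply its own proof of this fact but simply cites \cite{HooryLW06}, where the same Rayleigh-quotient computation with the mean-centered indicator vector $\mathbf{1}_S - \tfrac{|S|}{n}\mathbf{1}$ appears. Your algebra checks out: $v^T L v$ counts exactly the cut edges, $v^T v = |S|(n-|S|)/n \geq |S|/2$ for $|S|\leq n/2$, and orthogonality to $\mathbf{1}$ makes the variational bound applicable.
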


Following the seminal results of Magulis and Lubotzky, Philips and Sarnak it is known that there exist
infinite families of expander graphs of degree $d = O(1)$, with $\lambda_2 \leq 2 \sqrt{d-1}$.
These are called Ramanujan graphs.
\begin{definition}\label{def:ramanujan}

\textbf{Ramanujan graphs}

\noindent
A family of graphs $\{G_n\}_n$ is said to be Ramanujan, if $\lambda_2(G_n) \leq 2 \sqrt{d-1}$
for all sufficiently large $n$.
\end{definition}

\noindent
In \cite{LubotzkyPS88,Mor94} it was shown how to construct Ramanujan graphs explicitly:
\begin{fact}\label{fact:explicit}
There exists an explicit infinite family of $d$-regular Ramanujan graphs for every $d = q+1$, where $q$ is a prime power.
\end{fact}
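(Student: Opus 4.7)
The statement is a deep classical result of Lubotzky--Phillips--Sarnak \cite{LubotzkyPS88} and Morgenstern \cite{Mor94}, so my proof proposal can only reasonably be a roadmap of their construction; I will not attempt to reprove the number-theoretic input. The plan is to exhibit an explicit infinite family of $(q{+}1)$-regular Cayley graphs on $\mathrm{PGL}_2(\bbF_\ell)$ (or $\mathrm{PSL}_2(\bbF_\ell)$) and then to identify the nontrivial eigenvalues of the adjacency operator with eigenvalues of a Hecke operator on a space of automorphic forms, where the Ramanujan--Petersson bound applies.

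First I would fix an odd prime power $q$ (treating the $q=2$ and characteristic-$2$ cases separately, as Morgenstern does) and choose an auxiliary prime $\ell \neq q$ such that a suitable non-residue condition is satisfied. I would then work inside an appropriate quaternion algebra (the Lipschitz/Hurwitz quaternions for $q$ a prime $\equiv 1 \pmod 4$ in LPS, and a function-field analog using $\bbF_q(x)$ and ramification at $\infty$ plus one other place in Morgenstern). A counting argument (a theorem of Jacobi for LPS, the analog for function fields in Morgenstern) produces exactly $q+1$ integer quaternions of reduced norm $q$ up to units; these give $q+1$ generators $S = \{g_1,\ldots,g_{q+1}\}$ of a subgroup $\Gamma$ which, after reducing modulo $\ell$, lands in $\mathrm{PGL}_2(\bbF_\ell)$. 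The Cayley graph $X^{q,\ell} := \mathrm{Cay}(\mathrm{PGL}_2(\bbF_\ell), S)$ is then $(q+1)$-regular, and letting $\ell$ vary gives the required infinite family at fixed degree; explicitness comes from the fact that both $S$ and the group are given concretely.

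The second half is bounding $\lambda_2$. The key step is a representation-theoretic dictionary: the nontrivial eigenvalues of the adjacency operator on $L^2(\Gamma \backslash \mathrm{PGL}_2(\bbF_\ell))$ correspond, via strong approximation and the identification of $\Gamma \backslash \mathrm{PGL}_2(\mathbb{Q}_q) / K$ (resp.\ its function-field analog) with an arithmetic quotient, to Hecke eigenvalues $a_q(f)$ of automorphic forms $f$ of a specified level. The Ramanujan--Petersson conjecture for weight-$2$ holomorphic modular forms (proved by Deligne \cite{Har64} style via étale cohomology, cited in LPS) in the number-field case, and Drinfeld's proof of the function-field Ramanujan conjecture in Morgenstern's case, then yields $|a_q(f)| \leq 2\sqrt{q}$, i.e.\ $\lambda_2(X^{q,\ell}) \leq 2\sqrt{q} = 2\sqrt{d-1}$, which is exactly \defref{ramanujan}.

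The main obstacle, and the part I would simply cite rather than reprove, is the Ramanujan bound itself: the elementary combinatorics (Cayley graph construction, counting quaternions of a given norm, strong approximation to match the graph's spectrum to a Hecke operator) is mechanical, but the inequality $|a_q(f)| \leq 2\sqrt{q}$ requires either Deligne's Weil-conjecture machinery or Drinfeld's function-field analog, neither of which can be sketched in a proof of this length. A secondary technical point is handling even $q$ in Morgenstern's construction, which requires a different quaternion algebra and a separate spectral argument; I would treat that case as a parallel construction with the same Hecke-eigenvalue interpretation rather than subsume it into the odd case.
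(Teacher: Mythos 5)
Your proposal is consistent with the paper's treatment of this fact: the paper itself offers no proof, only citations to \cite{LubotzkyPS88} and \cite{Mor94}, and your roadmap accurately summarizes those constructions (quaternion-arithmetic generators for Cayley graphs on $\mathrm{PGL}_2(\bbF_\ell)$ or $\mathrm{PSL}_2(\bbF_\ell)$, the $q{+}1$ norm-$q$ elements, and the Hecke/automorphic translation of the spectral bound, with Deligne's and Drinfeld's theorems as the deep inputs). One small slip: you cite \cite{Har64} for Deligne's Ramanujan--Petersson bound, but that reference is Harper's hypercube isoperimetry theorem (used elsewhere in this paper for a completely different purpose), not Deligne; the intended reference is Deligne's work on the Weil conjectures, which is not in this paper's bibliography.
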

\noindent
We will use these expander graphs to construct an $\NLETS$ local Hamiltonian
in Section \ref{sec:nlets}.  (In fact, we do not strictly need Ramanujan graphs, but will
use specifically graphs where $h(G_n)\geq 3$.  Ramanujan graphs are simply a convenient
way to achieve this.)

Expander graphs of bounded-degree give rise naturally to {\it locally-testable codes} as follows.
Given an expander graph $G = (V,E)$ we define the following code $C(G)$.
It is the repetition code on $|V|$ bits,
with equality constraints of the form $x_i \oplus x_j = 0$ for all $(i,j)\in E$.
One can easily check that this code $C(G)$ is locally testable.
\begin{fact}\label{fact:exp2ltc}

\textbf{The repetition code from expander graphs is $\LTC$}

\noindent
The code $C(G)$ is the repetition code with a set of checks that is locally testable with query size $q = 2$, and soundness 
$\rho = 2h(G)/d$.
In particular, for $d$-regular Ramanujan graphs we have $\rho \geq 1 - \frac{2  \sqrt{d-1}}{d}$.
\end{fact}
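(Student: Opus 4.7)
The plan is to unwind all the definitions and observe that local testability of $C(G)$ is essentially the statement of edge expansion written in the language of syndromes. Concretely, $C(G) = \{0^n, 1^n\}$ (assuming the expander is connected, which follows from $h(G) > 0$), and for any word $w \in \bbF_2^n$ the quantity $\dist(w, C)$ equals $\min(|S|, |\bar S|)$ where $S = \{i : w_i = 1\}$.

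First I would count violated checks. Each check $x_i \oplus x_j = 0$ associated to an edge $(i,j) \in E$ is violated by $w$ iff $i,j$ lie on opposite sides of the cut $(S, \bar S)$, i.e.~iff $(i,j) \in \partial_G(S)$. Hence the fraction of violated checks is exactly
\be
\Prob_{i \sim U[m]}[C_i(w) = 1] \;=\; \frac{|\partial_G(S)|}{|E|} \;=\; \frac{2|\partial_G(S)|}{dn},
\ee
using $|E| = dn/2$ for a $d$-regular graph. The Cheeger constant bound $|\partial_G(S)| \geq h(G)\cdot\min(|S|,|\bar S|) = h(G)\cdot\dist(w,C)$ then immediately gives
\be
\Prob_{i \sim U[m]}[C_i(w) = 1] \;\geq\; \frac{2h(G)}{d} \cdot \frac{\dist(w,C)}{n},
\ee
which is exactly the LTC definition with $q = 2$ and $\rho = 2h(G)/d$.

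For the Ramanujan specialization, I would simply combine Fact~\ref{fact:cheeger} with the Ramanujan bound $\lambda_2 \leq 2\sqrt{d-1}$, yielding $h(G) \geq (d - 2\sqrt{d-1})/2$, hence
\be
\rho \;\geq\; \frac{2h(G)}{d} \;\geq\; 1 - \frac{2\sqrt{d-1}}{d}.
\ee
There is really no obstacle here; the only minor subtlety is ensuring $C(G) = \{0^n, 1^n\}$ (i.e.~that $G$ is connected) so that the distance formula $\dist(w, C) = \min(|S|, |\bar S|)$ is correct, but this is guaranteed by $h(G) > 0$. So the proof should be essentially a one-line application of Cheeger's inequality after interpreting violated checks as cut edges.
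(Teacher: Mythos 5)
Your proof is correct and is exactly the "one can easily check" argument the paper implicitly has in mind: identifying violated checks with cut edges and invoking the Cheeger constant (and then Fact~\ref{fact:cheeger} plus the Ramanujan bound for the specialization). The connectedness remark you add is the right sanity check and follows from $h(G)>0$; nothing further is needed.
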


In this paper we require a slightly more robust version of this fact where we allow
the adversarial removal of a small fraction of the vertices and edges.
\begin{definition}\label{def:res1}

\textbf{Maximal-connected residual graph}

\noindent
Let $G = (V,E)$, and subsets $V_{\eps}\subseteq V, E_{\eps}\subseteq E$.
A connected residual graph of $G$ w.r.t.~these sets is a graph $G' = (V',E')$ where
$V' \subseteq V_{\eps}, E'\subseteq E_{\eps}$ such that $G'= (V',E')$
is connected.
A maximal-connected residual graph $G_{\eps}$ is a connected residual graph
of maximal size $|V'|$. 
\end{definition}

Using the expander mixing lemma, it is easy to check that if $G$ is a $d$-regular Ramanujan
graph, then for sufficiently small constant $\eps>0$ there exists sufficiently large $d = O(1)$,
such that for any $V_{\eps}, E_{\eps}$ there exists a maximal-connected residual graph $G_{\eps} = (V',E')$ 
of large size
\begin{fact}\label{fact:eml1}
  Let $G=(V,E)$ be a $d$-regular Ramanujan graph, and let $\eps>0$.
For any $V_{\eps}, E_{\eps}$, where 
  $|V_\eps|\geq (1-\eps)|V|, |E_{\eps}| \geq (1-\eps)|E|$, 
  any
  corresponding maximal-connected residual graph $G_{\eps} = (V',E')$ satisfies:
  $|V'| \geq (1 - \eps') |V|, |E'| \geq (1 -2\eps') |E|$, for
\be \eps' = \frac{\eps(d+1)}{1-\frac{2\sqrt{d-1}}{d}}. \label{eq:eps-prime-def}\ee
\end{fact}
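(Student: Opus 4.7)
The plan is to bound $|V \setminus V'|$ via the Cheeger inequality for Ramanujan graphs (Fact~\ref{fact:cheeger}), treating the removed vertex set $A := V \setminus V_\eps$ and the removed edge set $E \setminus E_\eps$ as a ``damage budget'' of size $O(\eps |V|)$.

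The key observation is that, by maximality of $V'$ as a connected component of the induced subgraph $G^* := (V_\eps, \{e \in E_\eps : e \subseteq V_\eps\})$, no edge of $G^*$ can cross between $V'$ and $V_\eps \setminus V'$. Consequently every edge of $G$ between $V'$ and $S := V \setminus V'$ must either (a) be absent from $E_\eps$, or (b) be incident to some vertex of $A$. This yields the crude but sufficient upper bound $e(V', S) \leq |E \setminus E_\eps| + d|A| \leq \eps|E| + d\eps|V| = \tfrac{3d\eps}{2}|V|$. To convert this into an upper bound on $|S|$, I would invoke Cheeger $e(V',S) \geq h(G)|S|$ (valid whenever $|S| \leq |V|/2$), combined with $h(G) \geq (d - 2\sqrt{d-1})/2$ from Fact~\ref{fact:cheeger} and Definition~\ref{def:ramanujan}. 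Solving gives $|S| \leq 3\eps|V|/(1 - 2\sqrt{d-1}/d)$, which is bounded by $\eps' |V|$ since $3 \leq d+1$. The edge count follows from the same accounting: edges in $E \setminus E'$ either lie in $E \setminus E_\eps$ (contributing at most $\eps|E|$) or touch $S$ (contributing at most $d|S| \leq 2\eps'|E|$ via $d|V|=2|E|$), yielding the stated $1-2\eps'$ bound up to a lower-order correction in $\eps$.

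The one step requiring care is justifying the hypothesis $|S| \leq |V|/2$, which amounts to showing $G^*$ has a giant component. I would argue this by contradiction: if every component of $G^*$ had size at most $|V|/2$, one could greedily bundle components into a set $T \subseteq V_\eps$ with $|V|/4 \leq |T| \leq 3|V|/4$ and, by construction, with no $E^*$-edges leaving $T$. The same counting as above then forces $h(G)|V|/4 \leq \tfrac{3d\eps}{2}|V|$, which fails for all $\eps$ in the regime where $\eps' < 1/2$ -- i.e., exactly the regime in which the statement is nontrivial (it is vacuous once $\eps'\geq 1$). This giant-component step is the main obstacle; once it is in hand the rest is a direct Cheeger-plus-edge-counting calculation.
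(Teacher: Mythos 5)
Your proposal follows essentially the same route as the paper's proof: both pass to the maximal connected component of the graph $G^*=(V_\eps,E_\eps\cap(V_\eps\times V_\eps))$, observe that no retained edge crosses its boundary so that all boundary edges come from the deletion budget $|E\setminus E_\eps|+d|V\setminus V_\eps|$, and then invert the Cheeger inequality from \factref{cheeger}. Your edge-count uses the tighter $3\eps|E|$ in place of the paper's looser $(d+1)\eps|E|$, which is fine since you then relax back to $\eps'$ by $3\le d+1$.

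The genuine value added by your write-up is that you noticed and patched a gap that the paper passes over silently: to write $|E(S,\bar S)|\ge h(G)\,|\bar S|$ one must know $|\bar S|\le |V|/2$, i.e.\ that the maximal component is already a giant component, and the paper applies the Cheeger inequality to $\bar S$ without verifying this. Your bundling argument (greedily group components until the union lands in $[|V|/4,3|V|/4]$, note that no retained edge leaves the union, and derive a contradiction with \factref{cheeger}) handles this cleanly, and for the $d=14$, $\eps=10^{-9}$ regime actually used downstream the contradiction is comfortably satisfied. Two small caveats to tidy up if you were to finalize this: (i) your numerical check only forces the contradiction for $d\ge 5$ when $\eps'$ is near $1/2$, so it is cleanest to state the giant-component step under a hypothesis like $\eps'<1/2$ (or just for the concrete parameters used); (ii) your final edge bound is $\eps|E|+d|S|\le(\eps+2\eps')|E|$, which is not literally $\le 2\eps'|E|$ -- you should either absorb the extra $\eps$ into $2\eps'$ using $\eps(d+1)\le\eps'$ and $d\ge 2$ (as the paper implicitly does via its $(d+1)\eps$ accounting), or use your sharper bound $|S|\le \frac{3\eps}{1-2\sqrt{d-1}/d}|V|$, which makes $\eps+d|S|/|E|\le 2\eps'$ go through exactly.
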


Later we will choose $d=14$, in which case \eq{eps-prime-def} simplifies to 
\be \eps'\leq 31\eps \label{eq:d=14}\ee

\begin{proof}
Let $E_{\eps}' \subseteq E_{\eps}$ denote the subset of edges of $E_{\eps}$ incident on $V_{\eps}\times V_{\eps}$.
By regularity of $G$ we can upper-bound 
$$
|E_{\eps}'| \geq |E| (1 - (d+1)\eps).
$$
Consider the graph $G' = (V, E_{\eps}')$ - any connected sub-graph of $G'$
is by definition a connected residual graph of $G_{\eps}$.
Let $S\subseteq V$ be a maximal connected component in $G'$.
Then $E_{\eps}'(S,\bar{S}) = 0$, 
implying that 
\be |E(S,\bar{S})| \leq \eps(d+1)|E|.\label{eq:ESS-ub}\ee
Let  $|S| := (1-\alpha) n$.  Then by \defref{ramanujan} and \factref{cheeger}, we have
\ba
%\frac{|E(S,\bar{S})|}{|E|} \geq |S|/n \cdot (1 - |S|/n) - \frac{2}{\sqrt{d}} \sqrt{
%(|S|/n) (1 - |S|/n)}.
|E(S,\bar{S})| & \geq |\bar S| \frac{d-2\sqrt{d-1}}{2}\\
\frac{|E(S,\bar{S})|}{|E|} &\geq \alpha \left(1-\frac{2}{\sqrt d}\right)
\label{eq:edges-leaving}\ea
Combining \eq{ESS-ub} and \eq{edges-leaving} implies that $\alpha \leq \eps'$, for $\eps'$
defined in \eq{eps-prime-def}.

Now define $V' = S$, and define $E'$ to be the subset of the edges $E_{\eps}$ incident on
$S$.
Then $G' = (V',E')$ is a connected graph with $|V'| \geq (1 - \eps') |V|$
and $|E'| \geq (1 - \eps'-\eps(d+1))|E|$.
\end{proof}

We use this property to derive the following:
\begin{proposition}\label{prop:subgraph}

\textbf{Robust LTC}

\noindent
Let $G = (V,E)$ be a $d$-regular Ramanujan graph.  Let $G_{\eps}= (V',E')$ denote a {\it
  maximal connected} residual graph of $G$ induced by a subset $V_{\eps}\subseteq V$,
$|V_{\eps}| \geq (1-\eps) |V|$ and $E_{\eps}\subseteq E$, $|E_{\eps}| \geq (1-\eps) |E|$.
Then any assignment $w'$ to the vertices of $V'$ violating at most $\delta$ fraction of the
checks of $G_{\eps}$ 
%can be extended to an assignment $w$ on $V$ that 
%is $\frac{\delta + 2\eps'}{1-\frac{2\sqrt{d-1}}{d}}$-close to either
%$\mathbf{1}_{V}$ or $\mathbf{0}_{V}$.  
%This implies that $w'$ 
is $\frac{\delta + 2\eps'}{(1-\frac{2\sqrt{d-1}}{d})(1-\eps')}$-close to either
$\mathbf{1}_{V'}$ or $\mathbf{0}_{V'}$.

\end{proposition}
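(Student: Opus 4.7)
The plan is to reduce the robust statement about the residual graph $G_\eps$ back to the local testability of the repetition code $C(G)$ on the full Ramanujan graph $G$, using \factref{exp2ltc}, and then pay for the mismatch between $V'$ and $V$ using the bound $|V'|\geq (1-\eps')|V|$ from \factref{eml1}.

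Concretely, given an assignment $w':V'\to \bbF_2$ that violates at most a $\delta$-fraction of the checks in $E'$, I would extend it to an assignment $w:V\to\bbF_2$ on the whole vertex set by choosing $w(v)=0$ for $v\in V\setminus V'$ (any fixed extension would do). I then bound the number of checks of $C(G)$ violated by $w$. Every edge in $E'$ lies inside $V'\times V'$, so the violated checks among $E'$ are exactly the ones already violated by $w'$, giving at most $\delta|E'|\leq \delta|E|$ violations. Edges in $E\setminus E'$ can contribute at most $|E\setminus E'|\leq 2\eps'|E|$ additional violations. Thus the total violated fraction on $G$ is at most $\delta+2\eps'$.

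By \factref{exp2ltc}, the code $C(G)$ is $\LTC$ with soundness $\rho\geq 1-\tfrac{2\sqrt{d-1}}{d}$, so
\[
\frac{\dist(w,C(G))}{|V|}\;\leq\;\frac{\delta+2\eps'}{1-\frac{2\sqrt{d-1}}{d}}.
\]
Since $C(G)$ is the repetition code its only codewords are $\mathbf{1}_V$ and $\mathbf{0}_V$, so $w$ is within the above fractional distance of one of them. Finally, because $w$ agrees with $w'$ on $V'$, the number of positions in $V'$ where $w'$ differs from the corresponding constant codeword on $V'$ is at most $\dist(w,C(G))$, and dividing by $|V'|\geq (1-\eps')|V|$ (from \factref{eml1}) yields
\[
\frac{\dist(w',\mathbf{c}_{V'})}{|V'|}\;\leq\;\frac{1}{1-\eps'}\cdot\frac{\delta+2\eps'}{1-\frac{2\sqrt{d-1}}{d}}
\]
for some $\mathbf{c}\in\{\mathbf{0},\mathbf{1}\}$, which is exactly the claimed bound.

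There is no real obstacle here: the only mildly delicate step is being careful that edges in $E'$ contribute exactly $w'$-violations (which holds because a connected residual graph has $E'\subseteq V'\times V'$), and then tracking the two small constant losses, namely the $2\eps'$ coming from the $(1-2\eps')$ lower bound on $|E'|/|E|$ and the $1/(1-\eps')$ coming from $|V|/|V'|$. Both come directly from \factref{eml1}, so the proposition follows immediately from \factref{exp2ltc} together with \factref{eml1}.
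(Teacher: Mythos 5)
Your proof is correct and follows essentially the same route as the paper's: extend $w'$ by zeros to $V$, bound the violated-check fraction on $G$ by $\delta+2\eps'$ using $|E'|\geq(1-2\eps')|E|$ from \factref{eml1}, apply \factref{exp2ltc} to get the fractional distance bound on $V$, then restrict back to $V'$ and pay the factor $|V|/|V'|\leq 1/(1-\eps')$. The only difference is that you spell out the bookkeeping on violated edges (splitting $E$ into $E'$ versus $E\setminus E'$) a bit more explicitly than the paper does.
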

\begin{proof}
  Any word $w'$ defined on $V'$ that violates at most $\delta$ fraction of the checks of
  $E'$ can be extended with $0$'s to a word $w$ defined on $V$ that violates a fraction at
  most $\delta + 2\eps'$ of the checks of $E$, with $\eps'$ defined in \eq{eps-prime-def}.
  Since $G$ is Ramanujan then from \factref{exp2ltc}, $w$ is at fractional distance at
  most $\frac{\delta + 2\eps'}{1-\frac{2\sqrt{d-1}}{d}}$ to either $\mathbf{1}_V$ or $\mathbf{0}_V$ on $V$.  The
  fractional distance of $w'$ to either $\mathbf{1}_{V'}$ or $\mathbf{0}_{V'}$ can be larger by a factor of
  $|V|/|V'| \leq 1/(1-\eps')$, which implies that $w'$ is $\frac{\delta +
    2\eps'}{(1-\frac{2\sqrt{d-1}}{d})(1-\eps')}$ 
  close to either $\mathbf{1}_{V'}$ or $\mathbf{0}_{V'}$.
\end{proof}

\noindent
From the above one can derive the following corollary.
\begin{corollary}\label{cor:cheeger1}
  Consider the maximal connected residual graph $G_{\eps}$ above.  For
  all $d\geq 14$ and $\eps'$ defined in \eq{eps-prime-def}, the following holds: \be \forall w\in
  \F_2^{V'}, \quad 100 \eps' \leq \frac{|w|}{|V'|} \leq \frac 12 
\quad\Rightarrow\quad |\partial_{G_{\eps}} w|
  \geq 3 |w|.  \label{eq:3-expansion}\ee
\end{corollary}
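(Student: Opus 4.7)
The plan is to argue by contrapositive: assume $|\partial_{G_{\eps}} w|<3|w|$ and derive that $|w|/|V'|$ must be smaller than $100\eps'$ (using that $|w|/|V'|\leq 1/2$). The key observation is that for the repetition code on $G_{\eps}$, the number of violated parity checks under the assignment $w$ is exactly the number of edges of $E'$ crossing the support of $w$, namely $|\partial_{G_{\eps}} w|$. So the hypothesis $|\partial_{G_{\eps}} w|<3|w|$ translates into an upper bound on the fractional violation rate $\delta$.

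First I would bound $\delta$: using $|E'|\geq (1-2\eps')|E|=(1-2\eps')d|V|/2\geq (1-2\eps')d|V'|/2$ and writing $\alpha:=|w|/|V'|$, we get
\[
\delta = \frac{|\partial_{G_{\eps}} w|}{|E'|} < \frac{3|w|}{|E'|} \leq \frac{6\alpha}{(1-2\eps')d}.
\]
Next I would invoke \propref{subgraph}, which since $\alpha\leq 1/2$ forces $w$ to be $\alpha$-close to $\mathbf{0}_{V'}$ (the $\mathbf{1}_{V'}$ option is ruled out), giving
\[
\alpha \;\leq\; \frac{\delta+2\eps'}{\bigl(1-\tfrac{2\sqrt{d-1}}{d}\bigr)(1-\eps')}.
\]
Combining these two inequalities and rearranging yields
\[
\alpha\left[\Bigl(1-\tfrac{2\sqrt{d-1}}{d}\Bigr)(1-\eps') \;-\; \frac{6}{(1-2\eps')d}\right] \;<\; 2\eps'.
\]

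The only thing left is the arithmetic check that for $d\geq 14$ the bracketed coefficient is bounded below by some absolute constant strictly larger than $1/50$, so that the displayed inequality forces $\alpha<100\eps'$, contradicting the lower hypothesis on $|w|/|V'|$. With $d=14$ one has $2\sqrt{13}/14\approx 0.515$, so the bracket is approximately $0.485-6/14\approx 0.056$ for small $\eps'$, comfortably exceeding $2/100=0.02$; this is the tightest case since the bracket is monotonically increasing in $d$ (both $1-2\sqrt{d-1}/d$ grows and $6/((1-2\eps')d)$ shrinks). Hence $\alpha<2\eps'/0.056<100\eps'$ whenever $\eps'$ is small enough that the lower-order corrections from $(1-\eps')$ and $(1-2\eps')$ are negligible, which is guaranteed in the regime of interest (e.g.\ the $\eps'\leq 31\eps$ bound of \eq{d=14} for small $\eps$). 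This is the only quantitative step; no further obstacle appears.
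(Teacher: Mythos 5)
Your proof is correct and takes essentially the same route as the paper: both invoke \propref{subgraph} to bound the weight $|w|/|V'|$ in terms of the violation fraction $\delta$, use $|E'|\geq(1-2\eps')|E|$ to relate $\delta$ to $|\partial_{G_\eps} w|$, and close with arithmetic at $d=14$; you merely phrase it as a contrapositive. One small tidying remark: rather than appealing to ``the regime of interest'' for the smallness of $\eps'$, note (as the paper does) that the hypothesis $100\eps'\leq |w|/|V'|\leq 1/2$ itself forces $\eps'\leq 1/200$, which is enough for the $(1-\eps')$, $(1-2\eps')$ corrections to be harmless.
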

\begin{proof}
This follows from \propref{subgraph}.  We set $\frac{|w|}{|V'|} =
\frac{\delta + 2\eps'}{(1-\frac{2\sqrt{d-1}}{d})(1-\eps')}$ (with $\eps'\leq
31\eps$ from \eq{eps-prime-def}), solve for $\delta$ and (after some algebra) find
that $\delta \geq 0.46\frac{|w|}{|V'|}$.  
This calculation uses the fact that the LHS of \eq{3-expansion} is only possible if $\eps'\leq
1/200$.
Thus the number of violated edges is 
\be \geq \delta |E'|
\geq 0.46\frac{|w|}{|V'|} (1-2\eps')\frac{dn}{2} \geq 3.15 |w|.\ee
\end{proof}

\section{Local Hamiltonians with Approximation-Robust
  Entanglement}\label{sec:entproxy}  
\cftsectionprecistoc{Precise statement of main results}

First, we will
precisely define our model of quantum circuits.  The following
definition codifies some of the common-sense features of circuits
that we will use.
\begin{definition}[Circuits]\label{def:circuits}
A (unitary) quantum circuit $C$ on $n$ qubits of depth $d$
is a product of $d$ layers $U_1,\hdots, U_d$,
where each layer $U_i$ can be written as a tensor-product
of $2$-local unitary gates $U_{i,(j,k)}$
\be
U_i = \bigotimes_{(j,k)\in P_i} U_{i,(j,k)},
\ee
where $U_{i,(j,k)}\in U(4)$ and each $P_i$ is a (possibly incomplete) partition of $[n]$
into blocks of size 2. 
\end{definition}
Corresponding to a circuit $C$ is a unitary
operator $U\in U(2^n)$ representing its action on an input state; often we will
simply refer to $U$ as a circuit when there is no ambiguity.

Low-depth circuits generate a family of ``simple'' states, known also as trivial states (\cite{FreedmanH13}).
\begin{definition}
\textbf{Depth-$d$ Trivial States (restated)}

\noindent
We say that an $n$-qubit state $\rho$ is depth-$d$ trivial if it can be prepared
by applying a depth-$d$ quantum circuit to $\ket{0}^{\ot N}$ (for some
$N\geq n$) and tracing out $N-n$ qubits. 
\end{definition}
An infinite family ${\cal F} = \{\rho_n\}_{n}$ of quantum states is said to be trivial
if there exists a constant $d$ such that $\rho_n$ is depth-$d$ trivial for all sufficiently large $n$.
We now define a family of NLTS Hamiltonians using the notion of trivial states as follows:

\begin{definition}\label{def:nlts}

\textbf{No Low-Energy Trivial States (NLTS) (restated)}  

\noindent
An infinite family of local Hamiltonians $\{H_n\}_{n\in \mathbf{N}}$ 
is $\eps$-NLTS if for any $d$ and 
a depth-$d$ trivial state family ${\cal F}$
and all sufficiently large $n$
\be \tr[\rho_n H_n] >\lambda_{\min}(H_n)+ \eps.\ee
We say that $\{H_n\}_{n\in \mathbf{N}}$ is NLTS if it is $\eps$-NLTS
for some constant $\eps>0$.
\end{definition}

This definition was motivated, in part, to prevent the following form
of NP-approximation of the ground-state energy of such system: a prover
sends a (polynomial-size) description of the shallow quantum circuit,
and the verifier computes the expectation value of
the Hamiltonian, conjugated by this unitary circuit, on the all-zero
state.  The verifier is thus able to accept/reject correctly. 
Since the circuit has depth $O(1)$, and each term
of $H$ is local, each local term of $U H U^{\dag}$ is local, so this
computation can be carried out efficiently.  In general $\tr[\rho H]$
can be estimated in ${\rm DTIME}\left(2^{2^{O(d)}}\right)$ if $d$ is a
depth-$d$ trivial state, since it requires estimating observables on
neighborhoods of $2^{O(d)}$ qubits.  (Similar but more complicated
results hold when we replace a depth-$d$ circuit with $e^{-iH'}$ for
$H'$ a sum of local terms in which each qubit participates in
interactions with total operator norm $O(d)$.~\cite{BHF06,Osb06})

\begin{conjecture}[NLTS conjecture~\cite{FreedmanH13}]\label{conj:nlts}
There exists a family of $O(1)$-local Hamiltonians with the NLTS property.
\end{conjecture}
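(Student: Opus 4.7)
The plan is to leverage the paper's own reduction $\qLTC \Rightarrow \NLTS$ (Theorem \ref{thm:qltc}) and attempt to exhibit a suitable $\qLTC$, since this would immediately settle the conjecture. A natural candidate is the chain complex of a three-dimensional simultaneously systolic and cosystolic expander in the sense of \cite{KKL14,EK16}: Corollary \ref{cor:highdim1} turns such an object directly into an $\NLTS$ Hamiltonian. So the first step I would attempt is to upgrade the one-sided bounded-degree cosystolic expanders of Evra-Kaufman \cite{EK16} to two-sided ones, or, alternatively, to iterate a homological-product construction \`a la Bravyi-Hastings \cite{BH14} on classical $\LTC$s in the hope of preserving both linear distance and constant soundness.

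Failing a full $\qLTC$ construction, my backup plan is to upgrade the $\NLETS$ proof of Theorem \ref{thm:nlets} directly. In the $\NLETS$ argument, exact agreement with a ground state on a $(1-\eps)$-fraction of qubits implies that, measured in either the $X$ or $Z$ basis, the outcome lies (on the agreeing qubits) in a coset of a classical $\LTC$; Fact \ref{fact:cluster} then clusters the outcome around a constant codeword of the underlying repetition code, and the uncertainty principle combined with the vertex-expansion bound of Theorem \ref{thm:vertex-informal} forces nontriviality. To push this to $\tr[H\rho] \leq \eps$ rather than exact impostorship, I would apply Markov separately in each basis to ensure that with constant probability the measurement outcome violates only an $O(\sqrt{\eps})$ fraction of the checks in that basis, and then try to argue that such outcomes still cluster around constant codewords of the hypergraph product. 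If any such \emph{residual} energy-to-distance statement could be proved for the Tillich-Z\'emor product, the rest of the $\NLETS$ proof would transport verbatim.

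The main obstacle, and the reason this is still a conjecture, is that the Tillich-Z\'emor product is provably \emph{not} locally testable: the horizontal and vertical logical operators have weight $O(\sqrt n)$ but violate only $O(1)$ checks, so no energy-to-distance inequality of the form $\dist(\rho,\calC) = O(n\cdot\tr[H\rho])$ can hold. The $\NLETS$ argument avoids this by substituting a structural assumption (exact agreement on most qubits) for a soundness theorem; no such shortcut exists for genuine low-energy states. I therefore expect any honest proof of $\NLTS$ to require either genuinely two-sided high-dimensional expanders — which would simultaneously resolve open problems in algebraic topology flagged in Section \ref{sec:robust-zoo} — or a wholly new CSS construction beyond hypergraph products that balances linear distance, constant locality, and constant soundness. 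In short, the plan reduces to, and stands or falls with, a qLTC-type breakthrough.
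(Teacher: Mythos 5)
Your ``proof'' is really an honest assessment that Conjecture~\ref{conj:nlts} remains open, and that is exactly the paper's own position: the authors explicitly write ``We conjecture that $\NLTS$ is also true, and leave it as an immediate, yet challenging open question.'' The paper offers only conditional results toward $\NLTS$ (Theorem~\ref{thm:qltc}: $\qLTC\Rightarrow\NLTS$; Corollary~\ref{cor:highdim1}: two-sided systolic/cosystolic expanders $\Rightarrow\NLTS$) together with the unconditional but strictly weaker $\NLETS$ theorem, and you have correctly reconstructed all three of those routes and their interconnections.

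One small imprecision: you say ``the horizontal and vertical logical operators have weight $O(\sqrt n)$ but violate only $O(1)$ checks.'' Logical operators are codewords and violate \emph{zero} checks; what obstructs local testability of the hypergraph product is the existence of \emph{non}-codewords (e.g.~truncated columns, or half a logical operator) that are at distance $\Theta(\sqrt{N})$ from the code yet have only $O(1)$ unsatisfied checks. The paper does not actually assert that $\calC_\times$ fails to be a $\qLTC$ (it only notes that no suitable $\qLTC$s are \emph{known}), so your stated ``provably not'' is stronger than what is in the text, though the claim is believed and the spirit of your obstruction is right. Your central diagnosis --- that the $\NLETS$ proof sidesteps soundness by assuming exact agreement on $(1-\eps)n$ qubits rather than low energy, and that bridging that gap is precisely the open content of the $\NLTS$ conjecture --- matches the paper's discussion in Section~\ref{sec:robust-zoo} and Section~\ref{sec:qltc-nlts} exactly. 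In short, the paper has no proof of this statement to compare against, and you have correctly identified it as a conjecture and laid out the same conditional reductions and bottlenecks the authors do.
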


Our main result will be stated in terms of hard-to-approximate
classical probability distributions as follows.  Recall that $\QNC^1$ is the
set of languages computable in quantum bounded-error log depth.  We
will use the term to describe classical distributions that can be approximately simulated with a
quantum log-depth circuit.
\begin{definition}

\textbf{$\QNC^1$-hard distribution}

\noindent
A family of  distributions $\{\calD_n\}$ on $n$ bits is said to be
$\QNC^1$-hard if there exist constants $a,c>0$ such that for
sufficiently large $n$ any $n$-qubit
depth-$c\cdot\log(n)$ trivial state $\rho_n$ satisfies
\be \| \calD_n  - \diag(\rho_n) \|_1 = \Omega(n^{-a}).
\ee
\end{definition}
Here $\diag(\rho)$ can be thought of as the probability distribution
resulting from measuring $\rho$ in the computational basis.
Next, we define quantum states as $\QNC^1$-hard if the
classical distribution induced by their measurement is hard to simulate {\it quantumly}:
\begin{definition}

\textbf{$\QNC^1$-hard quantum states}

\noindent
A family of $n$-qubit quantum states ${\cal F} = \{\rho_n\}_n$ is said to be $\QNC^1$-hard
if the corresponding family of distributions $\calD_n = \diag(\rho_n)$ is
$\QNC^1$-hard.
\end{definition}

Now, we can define local Hamiltonians as $\QNC^1$-hard if their ground states are
$\QNC^1$-hard:
\begin{definition}

\textbf{$\QNC^1$-hard local Hamiltonian}

\noindent
A family of local Hamiltonians $\{H_n\}_n$ is said to be $\QNC^1$-hard
if any family of states ${\cal F} = \{\rho_n\}_n$, with $\rho_n\in \ker(H_n)$ is $\QNC^1$-hard.
\end{definition}

As the final step we define a robust version thereof where we ask 
that even ground-state {\it impostors} are hard: 
\begin{definition}

\textbf{$\QNC^1$-robust local Hamiltonian}

\noindent
A family of local Hamiltonians $\{H_n\}$ is $\QNC^1$-robust 
if there exists $\eps>0$ such that
any family ${\cal F} = \{\rho_n\}_n$, where $\rho_n$ is an $\eps$-impostor of $H_n$ for all sufficiently large $n$, is $\QNC^1$-hard.
\end{definition}

\noindent
We can now state our main result.
\begin{theorem}\label{thm:nlets-informal}
\textbf{$\NLETS$} (sketch)

\noindent
There exists a family of $O(1)$-local Hamiltonians that is $\QNC^1$-robust.
\end{theorem}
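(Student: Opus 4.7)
\textbf{Proof plan for \thmref{nlets-informal}.} The strategy is to take an $\eps$-impostor $\rho_n$ of a ground state of the hypergraph-product Hamiltonian sketched in the introduction, measure it in the $X$ basis or the $Z$ basis, and argue that in at least one of the two bases the resulting distribution on $\F_2^{N}$ is $(\Omega(1), \Omega(\sqrt{N}))$-approximately partitioned. Once that is established, \thmref{vertex-informal} applied to any depth-$d$ trivial state $U\proj{0^N}U^\dag$ with $d = b\log n$ and $b$ a sufficiently small constant will force $\|\diag(\rho_n) - \diag(U\proj{0^N}U^\dag)\|_1 = \Omega(n^{-a})$, which implies the trace-distance lower bound claimed in \thmref{nlets}.

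The first step is structural. Instantiate the Tillich--Z\'emor hypergraph product on the parity-check matrix $H$ of the repetition code on a $d$-regular Ramanujan graph $G = (V,E)$ with $d \geq 14$, giving a CSS code on $N = |V|^2 + |E|^2$ qubits with $16$-local checks. The transpose symmetry of the product gives two natural families of logical operators: on the $X$ side, representatives are $X^{v \otimes \mathbf{1}_V}$ for $v\in V$; on the $Z$ side, symmetrically. Each such operator has Hamming weight $|V| = \Theta(\sqrt{N})$. Moreover, by applying \propref{subgraph} and \corref{cheeger1} on each ``row'' and ``column'' of the product, any string on $\F_2^N$ that violates only a small constant fraction of the $X$-checks (resp.\ $Z$-checks) is Hamming-close to a codeword of the corresponding classical product code, and in particular clusters around cosets of the stabilizer group that are separated by Hamming distance $\Omega(\sqrt{N})$; this is the residual local-testability property advertised in \secref{robust-zoo}.

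The second step combines an uncertainty argument with the residual LTC property of step one. Let $\sigma$ be a true ground state with $\sigma_S = \rho_S$ on $|S|\geq (1-\eps)N$ qubits. Measuring $\sigma$ in the $Z$ basis yields a distribution supported on the classical $Z$-code, uniform over cosets of the $X$-stabilizer group, and symmetrically for $X$. Because logical $X$ and $Z$ operators are non-identity and of small weight, a standard Hofmann--Takeuchi/Robertson-type uncertainty argument (as alluded to in the proof outline) shows that at least one of the two Pauli-basis measurement distributions of $\sigma$ must place $\Omega(1)$ mass on at least two cosets that differ by a logical operator; these cosets are separated by Hamming distance $\geq \Delta_{\min} = \Theta(\sqrt N)$. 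Passing from $\sigma$ to $\rho$: since $\rho_S = \sigma_S$, the marginal distributions on the $S$-coordinates coincide, so the induced distribution on $\F_2^S$ inherits the same two clusters, and they remain at Hamming distance $\Omega(\sqrt{N}) - \eps N = \Omega(\sqrt N)$. Extending back to $\F_2^N$, the full diagonal distribution of $\rho$ is still $(\Omega(1), \Omega(\sqrt N))$-approximately partitioned in this basis.

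The third step is the lower bound via expansion. By \thmref{vertex-informal}, for $\ket\psi = U\ket{0^N}$ with $\mathrm{depth}(U) = d$, the measurement distribution $p$ satisfies $h_\ell(p) \geq \Omega(\alpha^2)$ whenever $\ell \geq \alpha \sqrt N\, 2^{1.5d}$. Setting $\ell = c_1 \sqrt N$ and $d = b\log n$ forces $\alpha = \Omega(n^{-1.5 b})$, so $h_\ell(p) \geq \Omega(n^{-3b})$. A $(\Omega(1),\ell)$-approximately partitioned distribution has vertex expansion $o(1)$ at scale $\ell/2$, so if $\diag(\rho)$ is approximately partitioned at scale $\Omega(\sqrt N)$ while $\diag(U\proj{0^N}U^\dag)$ has expansion $\Omega(n^{-3b})$, the two must differ in $\ell^1$ by $\Omega(n^{-3b})$, giving the required $n^{-a}$ bound with $a = 3b$.

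The main obstacle is step two: carrying out the uncertainty-plus-residual-LTC argument for an \emph{impostor} rather than a true ground state, and simultaneously controlling the damage done by the adversarial erasure of $\eps N$ qubits. Concretely, one must (i) verify that the connected residual graph of \factref{eml1} survives the erasure pattern well enough for \propref{subgraph} to apply on both rows and columns of the hypergraph product, and (ii) show that the $\sqrt{N}$ coset separation in the relevant basis is not collapsed by the missing $\eps N$ coordinates. Both points should follow from the expansion calculation in \corref{cheeger1}, but executing the uncertainty principle for CSS codes cleanly in this restricted setting, and choosing the constants $\eps, a, b$ consistently across the three steps, is where the bulk of the technical work will lie.
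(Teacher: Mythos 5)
The core gap is in your step two, and it is not a technical detail you can defer: the naive Hamming-distance bookkeeping there is wrong. You claim that after passing from $\sigma$ to $\rho$ the two clusters ``remain at Hamming distance $\Omega(\sqrt N) - \eps N = \Omega(\sqrt N)$,'' but $\eps N = \Theta(n^2)$ while the coset separation $\Delta_{\min}$ is only $\Theta(n) = \Theta(\sqrt N)$, so the correction term dominates and the difference is negative. Concretely: when you measure $\rho$ in the $Z$ basis, the coordinates in $\bar S$ (an $\eps$-fraction, i.e.\ $\Theta(n^2)$ qubits) are completely uncontrolled, so the measured strings are an $\eps N$-fattening of the support of $\diag(\sigma)$, and fattening by $\eps N \gg \Delta_{\min}$ merges the two cosets into one cluster. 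The global distance of the Tillich--Z\'emor code is simply too small to survive an adversarial erasure of $\eps N$ qubits by a direct subtraction argument.

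What the paper actually does to fix this is the genuinely new content that your sketch is missing. Instead of working with global Hamming distance, the paper first applies a Markov argument to the erasure set $\bar S$ to find large subsets $V_l\subseteq V$, $E_l\subseteq E$ of rows/columns where the erasure pattern has low density \emph{within each individual row and column}, then uses the fractal structure of the hypergraph product (\propref{fractal}) to restrict to the product code of the residual connected subgraph $G'$. On this subcode the crucial ingredient is \lemref{localized-dist}, which says that every nontrivial logical representative in $C_1^Z$ (or $C_1^X$) has weight $\Omega(n')$ concentrated in \emph{some single row or column} of $V'\times V'$ or $E'\times E'$. Since the error in any retained row/column is at most $\nu n'$ for a small constant $\nu$, the partition survives \emph{locally} even though it would not survive globally. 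Your appeal to ``\corref{cheeger1} on each row and column'' gestures in this direction, but you never state or use the localized-distance lemma, and without it your step two as written is false, not merely technically incomplete. You do flag this as the ``main obstacle,'' so you are aware something is needed there; but what is needed is not the clean execution of a routine uncertainty calculation --- it is a different distance notion altogether, and presenting the argument with the incorrect $\Omega(\sqrt N)-\eps N$ bound as if it were a small wrinkle misidentifies where the difficulty lies.

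The remaining steps (the uncertainty argument via \propref{hbar1}/\propref{vor1}, and the expansion-based circuit lower bound via \thmref{vertex} and \corref{incomplete}) match the paper's approach in spirit, though the paper routes the uncertainty argument through the decoder map in \propref{vor1} precisely to handle the fact that $\rho$ is not a codeword, another point your sketch glosses over by reasoning as if the uncertainty principle applies directly to $\rho_S$.
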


Most of the remainder of the paper is devoted to the proof of \thmref{nlets-informal}.
In \secref{expansion} we will prove that the probability distributions
resulting from low-depth circuits cannot be approximately
partitioned.  

Then we will show that the distribution resulting from measuring quantum code-states {\em can} be approximately
partitioned.  The canonical example of such a partition is the cat state, as we
mentioned in the introduction, and indeed it is well known that the
cat state cannot be prepared in sub-logarithmic depth.  In
\secref{warmup} we will prove a ``warm-up'' result showing that any
Hamiltonian corresponding to a CSS code with $n^{\frac 12 + \Omega(1)}$ distance
is $\QNC^1$-hard.

To find a $\QNC^1$-robust Hamiltonian we will need a CSS code with
stronger properties.  In \secref{qltc-nlts} we show that a qLTC with linear distance
(quantum locally testable code) gives rise to a $\QNC^1$-robust
Hamiltonian, and in fact, the $\NLTS$ property.  

While no qLTCs are known, these ideas provide a sense of
how our full proof works.  Our construction is described in
\secref{TZ-review}, where the Tillich-Z\'emor hypergraph product from
\cite{TZ09} is reviewed, and in Section \ref{sec:nlets} where we use
it together with classical LTCs to construct our family of codes.
We then prove that this family is $\QNC^1$-robust.

\section{The Uncertainty Lemma and Noisy Quantum Code-States}

We next present a version of the classic
uncertainty principle~\cite{Robertson} that implies that if two
logical operators of a CSS codes anti-commute any state must have a
high uncertainty (i.e.~variance) in at least one of these operators.
This ``sum'' version is due to Hoffman and Takeuchi~\cite{HT03}.

\begin{lemma}\label{lem:uncertainty}
Let $\ket{\psi}$ be a quantum state, and $A,B$ Hermitian observables satisfying
$AB+BA=0$ and $A^2=B^2=I$.
Define
$$
\Delta A^2 = \langle \psi | A^2 | \psi \rangle - \langle \psi | A | \psi \rangle^2.
$$
Then
\be 
\Delta A^2 + \Delta B^2 \geq 1.
\label{eq:uncertainty}
\ee

\end{lemma}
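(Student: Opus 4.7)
The plan is to reduce the inequality to the statement $\langle A\rangle^2 + \langle B\rangle^2 \leq 1$, and then prove this by exhibiting $\alpha A+\beta B$ as an operator of norm $\sqrt{\alpha^2+\beta^2}$ for a well-chosen pair $(\alpha,\beta)$.

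First I would simplify the left-hand side of \eq{uncertainty}. Because $A^2 = B^2 = I$, we have $\langle \psi|A^2|\psi\rangle = \langle \psi|B^2|\psi\rangle = 1$, so writing $a = \langle\psi|A|\psi\rangle$ and $b = \langle\psi|B|\psi\rangle$, the claim reduces to
\be a^2 + b^2 \leq 1. \ee
Note $a,b \in [-1,1]$ since $A,B$ are Hermitian with $\|A\|,\|B\| = 1$.

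Next I would exploit the anticommutation relation to compute, for any real scalars $\alpha,\beta$,
\be (\alpha A + \beta B)^2 = \alpha^2 A^2 + \alpha\beta(AB+BA) + \beta^2 B^2 = (\alpha^2 + \beta^2) I. \ee
Since $\alpha A + \beta B$ is Hermitian, this identity implies $\|\alpha A + \beta B\| = \sqrt{\alpha^2 + \beta^2}$, and therefore
\be \alpha a + \beta b = \langle \psi | (\alpha A + \beta B) | \psi \rangle \leq \sqrt{\alpha^2 + \beta^2}. \ee

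Finally, I would make the optimal choice $\alpha = a, \beta = b$ (the case $a = b = 0$ is trivial). Substituting yields $a^2 + b^2 \leq \sqrt{a^2+b^2}$, hence $a^2 + b^2 \leq 1$, which is exactly what was needed.

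There is no real obstacle here: the main step is simply recognizing that the anticommutation plus $A^2 = B^2 = I$ make $\alpha A + \beta B$ a ``Pauli-like'' operator whose square is a scalar, so its norm is easy to read off. The choice $(\alpha,\beta) = (a,b)$ then plays the role of a Cauchy–Schwarz/saturation argument in the two-dimensional span of $\{A,B\}$ of expectation values.
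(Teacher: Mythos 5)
Your proof is correct and follows essentially the same route as the paper: both compute $(aA+bB)^2=(a^2+b^2)I$ and deduce $a^2+b^2\leq\sqrt{a^2+b^2}$, differing only in whether the bound $\langle\psi|(aA+bB)|\psi\rangle\leq\sqrt{a^2+b^2}$ is phrased via the operator norm (as you do) or via $\mathbb{E}[X]^2\leq\mathbb{E}[X^2]$ applied to $C=aA+bB$ (as the paper does); these are the same observation in this setting.
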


Since the proof is short and our assumptions are slightly different
from those of \cite{HT03}, we present a proof here.

\begin{proof}
Define the operator
$$C = \vev{A} A + \vev{B} B$$
where $\vev{X} := \bra\psi X \ket \psi$.
Define $\lambda \equiv{\vev{A}^2 + \vev{B}^2}$.  Then we can directly calculate
\be C^2 = \lambda I \qand \vev{C} = \lambda \ee
For any random variable $X$, $\bbE[X]^2 \leq \bbE[X^2]$.
Thus $\lambda =\vev{C} \leq \sqrt{\vev{C^2}} = \sqrt{\lambda}$, implying that $\lambda \leq 1$.
Together with the fact that $\vev{A^2}=\vev{B^2}=1$ this implies \eq{uncertainty}. 
\end{proof}

Next, we require a simple fact that any CSS code
has a pair of bases, one for each of the quotient logical spaces, that anti-commute in
pairs.  The proof 
can be found for example in \cite{NC11}.
\begin{fact}\label{fact:basis}

\textbf{Anti-commuting logical operators}

\noindent
Let ${\cal C}$ be a $[[n,k,d]]$-CSS code: $\calC = \calC(S_x,S_z)$.
%with distance $\Delta_{\min} = n^{\Omega(1)}$.
There exist sets
\begsub{logical-bases}
 {\cal B}_x &= \{b^x_1,\ldots, b^x_k\} \subset S_z^\perp \\
{\cal B}_z  & = \{b^z_1,\ldots, b^z_k\} \subset S_x^\perp
\endsub
 such that $\{b_i^x + S_x\}_{i\in [k]}$ and $\{b_i^z + S_z\}_{i\in
   [k]}$ are bases for $S_z^\perp/S_x$ and $S_x^\perp/S_z$
 respectively and
\be \langle b^x_i,b_j^z\rangle =
\delta_{i,j}.\label{eq:conjugate-pair}
\ee
\end{fact}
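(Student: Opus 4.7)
The plan is to treat this as a standard fact about non-degenerate bilinear pairings on $\bbF_2$-vector spaces, applied to the two logical quotient spaces of the CSS code. First I would note that the CSS condition $S_x \subseteq S_z^\perp$ (equivalently $S_z \subseteq S_x^\perp$) ensures both quotients $S_z^\perp/S_x$ and $S_x^\perp/S_z$ are well-defined, and a dimension count gives
$$\dim(S_z^\perp/S_x) = n - \dim S_z - \dim S_x = \dim(S_x^\perp/S_z) = k.$$

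Next I would show that the $\bbF_2$-bilinear form $(u,v)\mapsto \langle u,v\rangle$ on $\bbF_2^n$ descends to a well-defined pairing
$$B : S_z^\perp/S_x \;\times\; S_x^\perp/S_z \;\longrightarrow\; \bbF_2.$$
Indeed, for $u\in S_z^\perp$, $s_x\in S_x$, $v\in S_x^\perp$, we have $\langle s_x,v\rangle = 0$ since $v\in S_x^\perp$; symmetrically $\langle u,s_z\rangle = 0$ for $s_z\in S_z$ since $u\in S_z^\perp$. Hence $B$ depends only on the cosets.

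Then I would verify that $B$ is non-degenerate. If $[u]\in S_z^\perp/S_x$ satisfies $\langle u, v\rangle=0$ for every $v\in S_x^\perp$, then $u\in (S_x^\perp)^\perp = S_x$, so $[u]=0$; the symmetric argument handles the other side. Since $B$ is a non-degenerate bilinear pairing between two $k$-dimensional $\bbF_2$-spaces, dual bases exist by standard linear algebra: start with any basis $\{[b_1^x],\dots,[b_k^x]\}$ of $S_z^\perp/S_x$ and any basis $\{[c_1^z],\dots,[c_k^z]\}$ of $S_x^\perp/S_z$, form the matrix $M_{ij}=\langle b_i^x,c_j^z\rangle$ which is invertible over $\bbF_2$ by non-degeneracy, and define $b_j^z := \sum_\ell (M^{-1})_{\ell j}\,c_\ell^z$. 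Lifting to representatives in $S_z^\perp$ and $S_x^\perp$ gives the sets $\mathcal{B}_x,\mathcal{B}_z$ with $\langle b_i^x,b_j^z\rangle = \delta_{i,j}$, establishing \eqref{eq:conjugate-pair}.

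There is no real obstacle here; the only point requiring even mild care is the non-degeneracy of the descended pairing, which reduces to the identity $(A^\perp)^\perp = A$ for linear subspaces $A\subseteq \bbF_2^n$. Everything else is routine linear algebra over $\bbF_2$, which is why the paper cites \cite{NC11} rather than spelling this out.
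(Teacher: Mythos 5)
Your proof is correct. The paper itself gives no argument and simply cites \cite{NC11}, so there is nothing to compare against beyond the standard textbook treatment; your reduction to the non-degeneracy of the $\bbF_2$-bilinear pairing on the quotient spaces $S_z^\perp/S_x$ and $S_x^\perp/S_z$ (using $(A^\perp)^\perp = A$), followed by the construction of a dual basis via the invertible Gram matrix $M$, is exactly the right amount of machinery and fills in the omitted details cleanly. One small remark: the final ``lifting to representatives'' step is vacuous, since your $b_j^z = \sum_\ell (M^{-1})_{\ell j} c_\ell^z$ is already an element of $S_x^\perp$ (a linear combination of the chosen lifts $c_\ell^z$), so no further choice is needed.
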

Here we should think of $\{X^{b^x_i}\}$ and $\{Z^{b^z_i}\}$ as logical
$X$ and $Z$ operators.

One useful property of CSS codes is that the value of the logical operators can be read
off from measuring each qubit individually.  If we measure a code state of
$\calC(S_x,S_z)$ in the $Z$ (resp.~$X$) basis then the outcomes will always lie in
$S_z^\perp$ (resp.~$S_x^\perp$).  The $+1$/$-1$ eigenvalues of the first logical $Z$
operator $Z^{b_1^z}$ correspond to the outcomes $S_z^\perp \cap (b_1^z)^\perp$ and 
$b_1^x + S_z^\perp \cap (b_1^z)^\perp$ when measuring each qubit in the $Z$ basis.
Observe also that $S_z^\perp \cap (b_1^z)^\perp = (S_z \cup b_1^z)^\perp  = S_x +
\Span(\calB_x - b_1^x)$.
Let us define accordingly the sets
\begsub{CXZ}
 C_0^Z &= (S_z \cup b_1^z)^\perp
& C_1^Z &= b_1^x  + C_0^Z\\
 C_0^X &= (S_x \cup b_1^x)^\perp
& C_1^X &= b_1^z  + C_0^X
\endsub
The sets $C_0^Z,C_1^Z$ (resp.~$C_0^X,C_1^X$)
partition $S_z^\perp$ (resp.~$S_x^\perp$).
Let ${\cal D}^Z_{\psi}$ (resp.~$\calD^X_\psi$)
denote the
distribution on $\bbF_2^n$ induced by measuring $\ket{\psi}$ in the tensor $Z$ basis
(resp.~the tensor $X$ basis), and
define $\vev{M} := \bra\psi M \ket\psi$ for any operator $M$.
The above discussion implies that if $\ket\psi\in\calC$ then 
\begsub{logical-ZX-exp}
 \vev{Z^{b_1^z}} &= \calD^Z_\psi(C_0^Z) - \calD^Z_\psi(C_1^Z) \\
 \vev{X^{b_1^x}} &= \calD^X_\psi(C_0^X) - \calD^X_\psi(C_1^X)
\endsub

Next we argue that uncertainty in the logical operators translates into uncertainty of
measurement outcomes in either the $X$ or $Z$ product basis.

\begin{proposition}\label{prop:hbar1}

\textbf{Uncertainty for code-states in at least one basis}

\noindent
Let $(S_x,S_z)$ be a CSS code with $\calB_x,\calB_z$ as in \factref{basis}.  Let
$\ket{\psi}$ be a quantum code-state, and $D_{\psi}^X,D_{\psi}^Z$ be the distribution of
the measurement of $\ket{\psi}$ in the Pauli-$X$ or Pauli-$Z$ basis, respectively.  Then
at least one of the following equations must hold:
\begsub{uncertainty-code-state}
D_{\psi}^Z(C_0^Z)
&\in \left[\frac{1}{2} -\frac{1}{2\sqrt{2}},  \frac{1}{2} +\frac{1}{2\sqrt{2}}\right]
\\
D_{\psi}^X(C_0^X)
&\in \left[\frac{1}{2} -\frac{1}{2\sqrt{2}},  \frac{1}{2} +\frac{1}{2\sqrt{2}}\right]
\endsub
Since $D_\psi^P(C_0^P)+D_\psi^P(C_1^P)=1$ for $P=X,Z$ we could equivalently state
\eq{uncertainty-code-state} in terms of $C_1^Z$ and $C_1^X$.
\end{proposition}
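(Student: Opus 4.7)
The plan is to apply the uncertainty lemma (\lemref{uncertainty}) directly to the pair of anticommuting logical operators $A = Z^{b_1^z}$ and $B = X^{b_1^x}$, and then translate the resulting bound on $\langle A\rangle^2+\langle B\rangle^2$ into a bound on measurement probabilities via \eq{logical-ZX-exp}.

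First, I would verify the hypotheses of \lemref{uncertainty} for this pair. Both $A$ and $B$ are tensor products of Pauli matrices, hence Hermitian and squaring to $I$. The anticommutation $AB+BA=0$ follows from \eq{conjugate-pair} in \factref{basis}, since $Z^{b_1^z}$ and $X^{b_1^x}$ commute up to a sign of $(-1)^{\langle b_1^x, b_1^z\rangle} = -1$. Because $A^2=B^2=I$, we have $\Delta A^2 = 1 - \langle A\rangle^2$ and $\Delta B^2 = 1-\langle B\rangle^2$, so \lemref{uncertainty} gives
\be \langle A \rangle^2 + \langle B\rangle^2 \leq 1. \ee
In particular, at least one of $|\langle A\rangle|, |\langle B\rangle|$ is at most $1/\sqrt{2}$.

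Next I would invoke \eq{logical-ZX-exp}, which tells us that for $\ket\psi \in \calC$,
\be \vev{Z^{b_1^z}} = 2 D_\psi^Z(C_0^Z) - 1 \qand \vev{X^{b_1^x}} = 2 D_\psi^X(C_0^X) - 1. \ee
The bound $|2 D_\psi^Z(C_0^Z) - 1| \leq 1/\sqrt{2}$ rearranges to $D_\psi^Z(C_0^Z) \in [\tfrac12 - \tfrac{1}{2\sqrt 2}, \tfrac12 + \tfrac{1}{2\sqrt 2}]$, and similarly for the $X$ side. Since at least one of the two expectation-value bounds holds, at least one of the two interval conclusions in \eq{uncertainty-code-state} holds, completing the proof.

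There is essentially no serious obstacle here: the heavy lifting was done in setting up \lemref{uncertainty} (the Hoffmann--Takeuchi sum uncertainty relation), \factref{basis} (existence of a conjugate pair of logical operator representatives), and \eq{logical-ZX-exp} (interpreting logical $Z$ and $X$ expectations in terms of the induced measurement distributions on the dual codes). The only care needed is the sign/convention check: that $\langle b_1^x,b_1^z\rangle=1$ is exactly what makes the two Pauli strings anticommute, and that the partition $C_0^P,C_1^P$ of $S_{\bar P}^\perp$ is indeed the $\pm 1$ eigenspace partition of the relevant logical operator, so that $\langle \cdot\rangle = D(C_0)-D(C_1)$. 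Both are immediate from the definitions in \eq{CXZ}.
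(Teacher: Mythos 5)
Your proof is correct and follows essentially the same route as the paper's: apply the Hofmann--Takeuchi sum uncertainty relation (\lemref{uncertainty}) to the anticommuting pair $Z^{b_1^z},X^{b_1^x}$, conclude one expectation value has magnitude $\leq 1/\sqrt 2$, and translate through \eq{logical-ZX-exp}. Your additional checks (the anticommutation sign from \eq{conjugate-pair}, and that $C_0^P, C_1^P$ partition the support so $\langle\cdot\rangle = 2D(C_0^P)-1$) are exactly the details the paper leaves implicit.
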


\begin{proof}
According to Lemma \ref{lem:uncertainty} any state $\ket\psi$
will have 
$$1 \leq (\Delta X^{b^x_1})^2 + (\Delta Z^{b^z_1})^2
 = 2 - \vev{X^{b_1^x}}^2 - \vev{Z^{b_1^z}}^2.$$
and therefore either $|\vev{X^{b_1^x}}|$ or $|\vev{Z^{b_1^z}}|$ must
be $\leq 1/\sqrt{2}$.  Assume w.l.o.g.~(since the other case is
similar) that 
\be \left |\vev{Z^{b_1^z}}\right| \leq 1/\sqrt{2} .\label{eq:large-fluct}\ee

The result now follows from \eq{logical-ZX-exp}.

\end{proof}

In this paper, we will mostly consider noisy code-states, and not actual code-states.  We
will want to argue that even noisy code-states have an uncertainty property w.r.t.~the
original logical operators.  To do that we consider pairs of Voronoi cells, corresponding to
pairs of anti-commuting logical operators as in the proposition above.  In geometry,
``Voronoi cells'' take a set of seeds and partition a space into the regions that are
closer to one seed than any other.   In classical coding theory, we can likewise partition the set
of strings according to which code word they are closest to.  Equivalently Voronoi cells
are the preimages of the maximum-likelihood decoding map when a string is subjected to
independent bit flip errors.  

For quantum CSS codes we
will partition the measurement outcomes in the $X$ and $Z$ bases into
the following analog of Voronoi cells:
\begin{proposition}\label{prop:vor1}

\textbf{Generalized uncertainty for unitary decoding}

\noindent
  Let ${\cal C} = (S_x,S_z)$ be a $[[n,k,d]]$-CSS code
  and $C_0^Z,C_1^Z,C_0^X,C_1^X$ are defined as in \eq{CXZ}.
  Let $E_x,E_z$ be some set of errors that satisfies:
\begsub{error-sets}
  S_0^z &:= C_0^Z + E_z \\
  S_1^z &:= C_1^Z + E_z \\
  S_0^{z} \cap S_1^{z} &= \emptyset
\endsub
and similarly this holds for the sets $S_0^x, S_1^x$, defined in the
same way w.r.t.~$E_x$.  
Suppose further that
\be
   {\rm supp}(D_{\psi}^Z)\subseteq S_0^{z} \cup S_1^{z} \qand
   {\rm supp}(D_{\psi}^X) \subseteq S_0^{x} \cup S_1^{x}
\ee

Then there exists a constant $c_0>0.07$ such
that 
\be
  (D_{\psi}^Z(S_0^{z}) \geq c_0 \qand D_{\psi}^Z(S_1^{z}) \geq c_0)
  \quad {\rm or} \quad
  (D_{\psi}^X(S_0^{x}) \geq c_0 \qand D_{\psi}^X(S_1^{x}) \geq c_0).
\ee
\end{proposition}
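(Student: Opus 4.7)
My plan is to reduce Proposition \ref{prop:vor1} to the code-state version, Proposition \ref{prop:hbar1}, via a \emph{unitary decoder} that strips off the errors $E_z$ and $E_x$. The disjointness hypothesis $S_0^z \cap S_1^z = \emptyset$ is what makes unambiguous $Z$-decoding possible: for each $w \in S_i^z$ I would fix a decomposition $w = c(w) + e(w)$ with $c(w) \in C_i^Z$ and $e(w) \in E_z$, and define a permutation unitary $V_Z$ via $V_Z\ket{w} := X^{e(w)}\ket{w} = \ket{c(w)}$, extended arbitrarily on the complement of $S_0^z \cup S_1^z$. I would construct $V_X$ analogously from the $X$-basis data. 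By construction, $V_Z\ket\psi$ has its $Z$-basis support in $C_0^Z \cup C_1^Z \subseteq S_z^\perp$ with $D^Z_{V_Z\psi}(C_i^Z) = D^Z_\psi(S_i^z)$ for $i \in \{0,1\}$; symmetrically for $V_X$.

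In the ideal case the composite $V := V_X V_Z$ would send $\ket\psi$ into the true code space $\calC(S_x, S_z)$ while preserving both partitions, and applying Proposition \ref{prop:hbar1} to $V\ket\psi$ would give the stronger bound $\min(D^P_\psi(S_0^p), D^P_\psi(S_1^p)) \geq \frac12 - \frac{1}{2\sqrt 2} \approx 0.146$ for some $P \in \{X,Z\}$. The weaker claimed constant $c_0 > 0.07 \approx \tfrac12(\tfrac12 - \tfrac{1}{2\sqrt 2})$ strongly suggests that only one of the two decoders can be effectively used at a time, incurring roughly a factor-of-two loss.

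The main obstacle is that $V_Z$ and $V_X$ are conditional Pauli operations in different bases and do not commute in general---applying $V_Z$ can destroy the $X$-partition structure needed by $V_X$, and vice versa. I would work around this by applying each decoder separately and invoking Lemma \ref{lem:uncertainty} on the anti-commuting logical pair $Z^{b_1^z}, X^{b_1^x}$: on $V_Z\ket\psi$ we have $\vev{Z^{b_1^z}} = d_z := D^Z_\psi(S_0^z) - D^Z_\psi(S_1^z)$, and analogously $\vev{X^{b_1^x}}_{V_X\psi} = d_x$. The cross expectations $\vev{X^{b_1^x}}_{V_Z\psi}$ and $\vev{Z^{b_1^z}}_{V_X\psi}$ would be handled by analyzing the Heisenberg-evolved operators $V_Z^\dagger X^{b_1^x} V_Z$ and $V_X^\dagger Z^{b_1^z} V_X$, which differ from the original logical operators only through conditional sign flips on the error set. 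Combining the two one-sided uncertainty bounds via a convex-combination argument on $\tfrac12 V_Z\proj\psi V_Z^\dagger + \tfrac12 V_X\proj\psi V_X^\dagger$ should yield a weakened uncertainty $d_z^2 + d_x^2 \leq C$ with $C$ slightly less than $2$, which translates via $D^P_\psi(S_i^p) = (1 \pm d_p)/2$ to $c_0 \geq \tfrac12(1 - \sqrt{C/2}) > 0.07$. The hardest technical step is controlling the Heisenberg-evolved opposite-basis operators cleanly enough to pin down the numerical constant; this is where the disjointness hypothesis on $S_0^z, S_1^z$ (and its $X$-analog) should be exploited most carefully, ensuring that the decoder-induced conditional Paulis average out under the opposite logical operator.
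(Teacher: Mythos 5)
Your basic intuition—strip errors with a unitary decoder and then invoke the code-state uncertainty result—is the same as the paper's, and you correctly anticipated the factor-of-two degradation to $c_0 \approx \tfrac12(\tfrac12 - \tfrac{1}{2\sqrt2})$. But there are two concrete gaps in how you set it up, and the second is where the real content of the proof lives.

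First, the decoder $V_Z$ as you define it cannot be a unitary on the original Hilbert space. The map $w \mapsto c(w)$ sends $S_0^z = C_0^Z + E_z$ into $C_0^Z$, and generically $|S_0^z| > |C_0^Z|$, so the map is many-to-one: distinct basis states $\ket{w_1},\ket{w_2}$ with $c(w_1)=c(w_2)$ would be sent to the same $\ket{c}$, contradicting unitarity. The paper avoids this by defining the decoder as an isometry with an ancilla register that records the error, $\calU_{\text{Dec}}^Z(\ket{w+e}_1\otimes\ket 0_2) = \ket w_1 \otimes \ket e_2$; the disjointness hypothesis $S_0^z\cap S_1^z=\emptyset$ guarantees the logical coset of $w$ is well-defined from $w+e$, which is what makes this isometry consistent. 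This is not a cosmetic change: once you have an ancilla, the decoded object on the original register is necessarily a \emph{mixed} state, and the whole shape of the remaining argument changes.

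Second, your proposal to bound the cross terms $\vev{X^{b_1^x}}_{V_Z\psi}$ and $\vev{Z^{b_1^z}}_{V_X\psi}$ by Heisenberg-evolving the opposite-basis logical operator through the conditional Paulis and then averaging is the hard part, and you do not say how to close it; I don't see a clean way to do so, because $V_Z\ket\psi$ can have arbitrary $X$-basis distribution outside $S_x^\perp$, so neither Lemma \ref{lem:uncertainty} nor Proposition \ref{prop:hbar1} applies to it. The paper sidesteps this entirely: it composes \emph{both} ancilla decoders, $\rho = \tr_2\bigl(\calU_{\text{Dec}}^Z\circ\calU_{\text{Dec}}^X\,\ket\psi_1\otimes\ket 0_2\bigr)$, using the support hypotheses in both bases to conclude $\rho$ is supported in $\calC$. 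Then $D_\rho^X$ and $D_\rho^Z$ are each a convex combination of the corresponding distributions of pure code states, and the factor of two comes from a simple measure argument: Proposition \ref{prop:hbar1} says each pure component satisfies the $X$-alternative or the $Z$-alternative, so at least half the probability mass of the mixture lies on one side, giving $c_0 = \tfrac12(\tfrac12 - \tfrac{1}{2\sqrt2})$. Tracking this back through the decoders (each basis outcome $x\in C_i^P$ comes from a unique $x' \in S_i^p$) yields the claim for the original $\ket\psi$. So the $1/2$ you anticipated arises from a pigeonhole over the mixture, not from a degraded numerical uncertainty inequality, and the decoders must be composed rather than applied one at a time.
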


\begin{proof}
Define a decoding map for $X$ errors ${\cal U}_{\text{Dec}}^X$ as follows:
\be
\forall e\in E, w\in S_x^{\perp} \qquad
  {\cal U}_{\text{Dec}}^X 
  (\ket{w + e}_1 \otimes \ket{0}_2)
  = \ket{w}_1 \otimes \ket{e}_2,
\ee
Similarly, define a decoding map ${\cal U}_{\text{Dec}}^Z$ for
$Z$ errors:
\be
\forall e\in E, w \in S_z^{\perp} \qquad
 {\cal U}_{\text{Dec}}^Z  
 (\ket{w + e}_1 \otimes \ket{0}_2) 
= \ket{w}_1 \otimes \ket{e}_2,
\ee
Since $S_0^{z} \cap S_1^{z} = \emptyset$ and $S_0^{x} \cap S_1^{x} =
\emptyset$ then these maps are well-defined and can be extended to
unitary operations. 
%Because $\calC$ is a CSS code, $\calU_{\text{Dec}}^X$ and $\calU_{\text{Dec}}^Z$ commute.
%Any quantum state $\ket{\psi}$ can be written as $\ket{\psi} = (\sum_{P\in \Pi^n} \alpha_p P) \ket{\phi}$, where $\ket{\phi}\in {\cal C}$, $\sum_p |\alpha_p|^2 = 1$.
In addition, since 
${\rm supp}(D_{\psi}^Z)\subseteq S_0^{z} \cup S_1^{z}$ and
   ${\rm supp}(D_{\psi}^X) \subseteq S_0^{x} \cup S_1^{x}$
then the decoded state
\be\label{eq:convex}
\rho = \tr_{2}(\calU_{\text{Dec}}^Z \circ \calU_{\text{Dec}}^X \ket{\psi}_1 \otimes \ket{0}_2) 
\ee
is supported entirely in $\calC$.
Let ${\cal D}_{\rho}^X, {\cal D}_{\rho}^Z$ denote the distribution on $\F_2^n$ induced by measuring $\rho$ in the $X,Z$ basis, respectively.

By Proposition \ref{prop:hbar1} for any code-state $\ket{\phi}\in {\cal C}$ the distribution ${\cal D}_{\phi}^X$ has a measure at least $1/2 - 1/(2\sqrt{2})$ on both sets
$C_0^X$ and $C_1^X$
or 
${\cal D}_{\phi}^Z$ has at least that measure
on both sets
$C_0^Z$ and $C_1^Z$ (defined in \eq{CXZ}).

By \eq{convex} the distributions ${\cal D}_{\rho}^X, {\cal D}_{\rho}^Z$ are each a convex combination 
of corresponding distributions ${\cal D}_{\phi}^X,{\cal D}_{\phi}^Z$, for code-states $\ket{\phi}\in {\cal C}$.
Hence by the above at least one of ${\cal D}_{\rho}^X, {\cal D}_{\rho}^Z$ has a measure at least 
$$
c_0 := 1/2 \cdot (1/2 - 1/(2\sqrt{2})) > 0.07
$$ 
on both $C_0^X, C_1^X$
or on both $C_0^Z, C_1^Z$.
Since by definition of the decoders $\calU_{\text{Dec}}^X, \calU_{\text{Dec}}^Z$ each $x\in C_0^X$ (or $x\in C_1^X$) could only have come from some $x'\in S_0^x$, or some $x'\in S_1^x$,
and not from both,
(and the same for the $Z$ basis), this implies that
${\cal D}_{\psi}^X(S_0^x) \geq c_0$ 
and
${\cal D}_{\psi}^X(S_1^x) \geq c_0$,
or this holds for the $Z$ basis.
\end{proof}

\section{Vertex expansion bounds for low-depth circuits}
\label{sec:expansion}
\cftsectionprecistoc{Proves \thmref{vertex-informal}, showing that trivial states have high vertex expansion.}

As stated above, a central notion of this paper (following Lovett and Viola~\cite{LV}) is that distributions
over codewords of good codes look very different from the outputs of
low-depth circuits.  We will see in this section that these can be
distinguished by comparing the different values of vertex expansion
that they induce on a particular graph.

Consider $\bbF_2^n$ to be the vertices of a graph with an edge
between all pairs $x,y$ with $\dist(x,y)\leq \ell$.  If $\ell=1$ then
this is the usual hypercube, but we will be interested in $\ell\approx\sqrt{n}$.  For a set $S\subseteq \bbF_2^n$
define $\pell(S)$ to be the boundary of $S$, meaning points
in $S$ connected by an edge to a point in $S^c := \bbF_2^n - S$,
along with points in $S^c$ connected to a point in $S$.  In
other words
\be\pell(S) = \{x \in S : \exists y \in S^c, |x-y|\leq \ell\}
\cup
\{x \in S^c : \exists y \in S, |x-y|\leq \ell\}
.\ee
Let $p$ be a probability
distribution over $\bbF_2^n$. 
The $p$-weighted vertex expansion is defined to be
\be h_\ell(p) := \min_{S, 0 < p(S) \leq \frac 1 2 }
\frac{p(\pell(S))}{p(S)}.\ee

\ignore{ Define the conductance of $p$ to be
\be \phi(p) := \min_{\substack{S \subset \bbF_2^n \\ 0 < p(S) \leq
    1/2}}
\Pr[y \not\in S : x \in S, |x-y| \leq \ell],\ee
where $x,y$ are drawn from $p$. }

In this section we argue that the outputs of low-depth circuits have
high vertex expansion for a suitable value of $\ell$.  To get intuition for
this, we consider first the case of the uniform distribution over
$\bbF_2^n$.  Here Harper's Theorem~\cite{Har64} implies that
$h_{\ell}(U[\F_2^n])$ is $\geq \Omega(1)$ when
$\ell = \Omega(\sqrt n)$.  In fact it goes further and gives the
exact isoperimetric profile, meaning it calculates
$\min_{p(S)=\mu} p(\pell(S)\backslash S)$ for all $\mu$, and shows that this minimization
is achieved for the Hamming ball.
Similar bounds are known
for any product distribution $p$.

This can be extended to the case when $p$ is the output of a classical
depth-$d$ circuit
$C: \{0,1\}^m \mapsto \bbF_2^n$
 which accepts $m$ uniformly random input bits and fan-in, fan-out
both $\leq 2$. In this case each output bit depends on at most $2^d$
bits.  Let
$S\subset \bbF_2^n$, $T = C^{-1}(S)\subseteq \{0,1\}^m$, and $p = U[\F_2^m]$.  
Since the output can depend on $\leq n 2^d$ bits of the
input, we can assume without loss of generality that $n\leq m\leq n
2^d$, or if $d$ is constant then $m = \Theta(n)$.
Using, for example, Harper's Theorem, 
one can show that if $p(T)\leq 1/2$, $x$ is drawn uniformly
from $T$ and $z$ is drawn uniformly from the Hamming ball $\{|z|\leq
\sqrt{m}\}$ then $x+ z$ has $\Omega(1)$ probability of lying in
$T^c$.  
Now we can use the assumption that the circuit is low depth to
argue that 
\be \dist(C(x), C(x+ z)) \leq |z|2^d \leq \sqrt{m}2^d \leq
\sqrt{n} 2^{1.5d}.\ee
Since $C(x) \in S$ and $C(x+ z)\in S^c$ this implies that $C(x)
\in \pell(S)$ with $\ell = \sqrt{n} 2^{1.5d}$.  We conclude that
$h_\ell(p) \geq \Omega(1)$.  This argument is a restatement of
Fact 4 from \cite{E15} (correcting a missing factor of $\sqrt{n}$ there).

The main result of this section is that a similar bound also holds
for the output of low-depth {\em quantum} circuits.  We first
formalize the fact that in sufficiently low depth circuits not all input bits can
influence a given output bit.

\begin{definition}[Light cone]
Given a depth-$d$ quantum circuit $C$ on $n$ qubits 
%$C = C_1 \cdot \hdots \cdot C_d$,
we define a directed acyclic graph $G=(V,E)$ by considering
$d+1$ layers of $n$ vertices, 
\be
V = \{V_0,\hdots, V_{d}\},  |V_i| = n\ \  \forall t\in \{0,\ldots,d\},
\ee 
where $V_0$ is the set of input qubits, and $V_{d}$
is the set of output qubits.
Recall from \defref{circuits} that the set of two-qubit gates at time $t$ defines a partition
$P_t$.  Then 
define the edge set $E$ by
connecting for all $t\in [d], k,l\in [n]$ the vertex $V_{t-1,k}$ to $V_{t,l}$ if 
$(k,l)\in P_t$. 
For a subset $S\subseteq V_{d}$ of output qubits, the {\em light cone} of $S$ 
is defined as the set 
$L(S)\subseteq V_0$ of input qubits from which there exists a directed
path in $G$ to some vertex in $S$.  The ``blow-up'' $B = B(C)$ of the circuit $C$
is defined as:
$$
B = B(C) = \max_{v\in V} L(v).
$$
\end{definition}

For depth-$d$ circuits comprised of $k$-qubit gates, the
blow-up is at most $k^d$.  
If the gates are required to be spatially
local in $D$ dimensions then this is $\leq (ckd)^D$ for some universal
constant $c$ depending on the specific geometry.  In this paper we
mostly are interested in the case of constant-depth circuits of
two-qubit gates with unrestricted geometry, although our results hold
more generally.  

We now state the main theorem of this section:
\begin{theorem}\label{thm:vertex}
Let $\ket\psi = U\ket{0^N}$ for $U$ a circuit with blow-up $B$.  Let $p$ be
the probability distribution that results from measuring the first $n$
qubits in the computational basis; i.e.~
\be p(x) = \sum_{y\in \{0,1\}^{N-n}} |\braket{x,y}{\psi}|^2.
\label{eq:low-depth-p}\ee
Then for $\ell = \frac 14B\cdot (B n)^{1/2 - \gamma}$ with $\gamma\in [0,1/2]$ 
we have:
\be h_{\ell}(p) \geq \frac{1}{8} (nB)^{-2\gamma}.\label{eq:poly-bound}\ee
\end{theorem}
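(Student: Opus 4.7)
}
The overall strategy is a two-step reduction: first, relate the vertex expansion $h_\ell(p)$ to a bilinear bound on $\sqrt{p(A)\,p(B)}$ for sets $A,B\subseteq \bbF_2^n$ separated in Hamming distance by more than $\ell$; second, prove the bilinear bound using the light cone structure of $U$ together with a Chebyshev polynomial. For the reduction, given $S$ with $p(S)\leq 1/2$, take $A = \{x\in S : \dist(x,S^c) > \ell\}$ and $B = \{x\in S^c : \dist(x,S) > \ell\}$; these satisfy $\dist(A,B) > \ell$, and $p(A), p(B)$ can be related to $p(S)$ and $p(\partial_\ell S)$ in an elementary way. An upper bound on $\sqrt{p(A)\,p(B)}$ of order $(nB)^{-\gamma}$ should translate, after some algebra, into the claimed lower bound on $p(\partial_\ell S)/p(S)$ of order $(nB)^{-2\gamma}$.

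For the bilinear bound I would follow the approach of~\cite{AradKLV12area, FT00} based on Chebyshev polynomials. The key observation is that if we set $H = \sum_{i=1}^n X_i$ on the output qubits, then for any polynomial $q$ of degree $t$, the operator $q(H)$ satisfies $P_A\, q(H)\, P_B = 0$ whenever $\dist(A,B) > t$: each $X_i$ shifts Hamming distance by exactly one, so a degree-$t$ polynomial in $H$ has vanishing matrix elements between strings more than $t$ apart. Using $|\psi\rangle = U|0^N\rangle$, we have $q(H) = U\, q(\widetilde H)\, U^\dagger$ where $\widetilde H = U^\dagger H U = \sum_i \widetilde X_i$ and each $\widetilde X_i = U^\dagger X_i U$ is supported on at most $B$ input qubits by the light cone property. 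Thus $q(\widetilde H)$ is a degree-$t$ polynomial in $B$-local operators, and its expectation against the product state $|0^N\rangle$ can be analyzed by organizing monomials by their support size, which is bounded by $tB$.

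The concrete plan is to use $q = T_t$, the degree-$t$ Chebyshev polynomial of the first kind, applied to a rescaled version of $H$. Chebyshev polynomials are nearly extremal: $|T_t(x)|\leq 1$ on $[-1,1]$ while $T_t$ grows rapidly outside, which allows one to (a) obtain a bounded spectral norm for $q(H)$, (b) extract a lower bound on combinations like $\langle\psi|q(H)P_A q(H)|\psi\rangle + \langle\psi|q(H)P_B q(H)|\psi\rangle$ in terms of $p(A), p(B)$, and (c) bound $\langle 0^N|q(\widetilde H)^2|0^N\rangle$ by a polynomial in $tB$ using the product structure of $|0^N\rangle$ and the localization after conjugation by $U$. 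Matching $t\sim (Bn)^{1/2-\gamma}$ and $\ell \sim tB$ should produce the claimed bound $\sqrt{p(A)p(B)}\lesssim (nB)^{-\gamma}$.

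The main obstacle I anticipate is tuning the polynomial degree $t$ and the normalization of $H$ to obtain the sharpest tradeoff between the operator-norm bound on $q(H)$, the off-diagonal vanishing at range $\ell$, and the expectation on the input state $|0^N\rangle$ after conjugation (which costs a factor of $B$ per degree through the light cone). Each of these three constraints gives a different tension on $t$, and balancing them is precisely what should produce the interpolation parameter $\gamma\in[0,1/2]$: $\gamma = 0$ corresponds to pushing $\ell$ all the way to $B\sqrt{Bn}$ at the price of only a constant expansion lower bound, while larger $\gamma$ trades a shorter $\ell$ for a correspondingly weaker polynomial bound $(nB)^{-2\gamma}$.
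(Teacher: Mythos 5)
The proposal takes a genuinely different route from the paper, and the reduction step has a concrete gap that prevents it from yielding the stated expansion bound.

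The paper's proof introduces the reflection operator $R$ (the $\pm 1$ diagonal operator associated to the indicator of $S$) and the state $\ket{\psi'}=R\ket\psi$, then builds $K=C_m(H)$ as a Chebyshev polynomial of $H=UH_0U^\dagger$, where $H_0=\frac{1}{|L|}\sum_{i\in L}\ketbra{1}_i$ acts on the input light-cone qubits and in particular annihilates $\ket{0^N}$.  The three key facts are $\bra\psi K\ket\psi=0$ (since $\ket\psi\in\ker K$), $\bra{\psi'}K\ket{\psi'}\leq 4p(\pell(S))$ ($\ell$-locality kills all off-diagonal contributions except those touching the boundary), and $\bra{\psi'}K\ket{\psi'}\geq\frac{1}{2}(nB)^{-2\gamma}p(S)$ (via the exact computation $\bra{\psi'}P\ket{\psi'}=(1-2p(S))^2$ using $P=UP_0U^\dagger$).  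Taking the ratio of the last two gives the vertex-expansion bound \emph{for any} $p(S)\in(0,1/2]$.

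Your reduction breaks down precisely in the regime of small $p(S)$.  A uniform bilinear bound $\sqrt{p(A)p(B)}\lesssim(nB)^{-\gamma}$ for $\ell$-separated $A,B$ cannot give expansion.  With $A=S\setminus\pell(S)$ and $B=S^c\setminus\pell(S)$, if $p(\pell(S))$ is small compared to $p(S)\leq 1/2$, then $p(B)\approx p(S^c)\geq 1/2$, so the bilinear bound only forces $p(A)=O((nB)^{-2\gamma})$ and hence $p(S)\leq p(\pell(S))+O((nB)^{-2\gamma})$.  This is vacuous whenever $p(S)\ll(nB)^{-2\gamma}$: it is entirely consistent with $p(\pell(S))=0$, i.e., $h_\ell(p)=0$, which contradicts the theorem.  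The theorem genuinely rules out isolated support clusters of arbitrarily small measure, and a bound of the form $\sqrt{p(A)p(B)}\lesssim(nB)^{-\gamma}$ simply cannot see that.  (A Friedman--Tillich style diameter bound with logarithmic dependence on $p(A)p(B)$ has the same problem: it bounds the distance, not the relative boundary mass.)

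There is also a structural obstacle in your choice $H=\sum_iX_i$ on the output qubits.  You are right that a degree-$t$ polynomial in this $H$ has vanishing matrix elements between strings at Hamming distance $>t$, and right that conjugation pushes it back to a $tB$-local operator on the inputs.  But this $H$ does \emph{not} annihilate $\ket\psi=U\ket{0^N}$, so there is no analogue of the zero-expectation identity $\bra\psi K\ket\psi=0$, and no reflection trick giving a closed-form expression like $(1-2p(S))^2$ tying the Chebyshev gap to $p(S)$.  Your proposed moment analysis of $\bra{0^N}q(\widetilde H)^2\ket{0^N}$ would at best control an overall diameter-type quantity, not the per-set ratio the theorem asserts.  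The paper's choice of $H_0$ supported on the light cone with $\ket{0^N}\in\ker H_0$ is exactly what supplies both the kernel projection $P_0$ and the exactness of the overlap computation, and that is the key idea you would need to import.
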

% alt version:
% \be h_{\frac 14 B^{\frac 32}n^{\frac 12}\alpha}(p) \geq \frac{\alpha^2}{8},\ee
% for $0<\alpha\leq 1$
Our proof is inspired by the use of Chebyshev polynomials by Friedman
and Tillich~\cite{FT00} to relate the diameter of a graph to the
spectral gap of its adjacency matrix, as well as by
\cite{AradKLV12area} to show that ground states of 1-d gapped
Hamiltonians have bounded entanglement. 

\begin{proof}
%We first prove \eq{poly-bound}.
For $S\subseteq \bbF_2^n$ let
$\chi_S(x)$ denote the characteristic function of $S$: it is $-1$ if $x\in S$ and $1$ if not.
Define
the reflection operator 
\be R = \sum_{\substack{x\in \bbF_2^n\\ y\in \{0,1\}^{N-n}}}
\chi_S(x) \proj{x,y}
.\ee
Now
define $\ket{\psi'} = R\ket{\psi}$.  To gain intuition, if $\ket{\psi}$ is analogous to
the cat state $(\ket{0^N} + \ket{1^N})/\sqrt 2$ then $\ket{\psi'}$
would be $(\ket{0^N} - \ket{1^N})/\sqrt 2$.  Our proof strategy will
be to construct an operator $K$ such that
\begsub{expansion-bound}
\bra\psi K \ket\psi & = 0 \label{eq:eb-zero} \\
\bra{\psi'}K\ket{\psi'} & \leq 4p(\pell(S)) 
\label{eq:eb-first}
\\
\bra{\psi'}K\ket{\psi'} & \geq \frac 12 (nB)^{-2\gamma} p(S) \label{eq:eb-second}
\endsub
Proving \eq{eb-first} will require that $K$ cannot detect the phase
flip far from the boundary, while proving \eq{eb-second} will require
that $K$ can nevertheless distinguish $\ket\psi$ from $\ket{\psi'}$.

Let $L\subseteq [N]$ denote the qubits in the light cone of $[n]$.
By the definition of blow-up, 
\be
|L| \leq nB.
\ee
%In fact it will be convenient to assume that $|L|=nB$, so we add
%arbitrary qubits to $L$ as necessary to achieve this.
Define the Hamiltonians 
\be\label{eq:h0}
H_0 = \frac{1}{|L|} \sum_{i\in L} \ketbra{1}_i \qand H =
UH_0U^\dag.
\ee
Note that $H_0$ can be thought of as the code Hamiltonian
(cf.~\eq{code-Ham}) for the subspace with $\ket{0}^{\otimes |L|}$ for the qubits in
$L$ and arbitrary states elsewhere.
Both $H_0$ and $H$ have all eigenvalues between 0 and 1, both have a $2^{N-|L|}$-fold
degenerate ground space and both have gap $1/|L|$ to the first non-zero eigenvalue.
Define $P_0$ to project onto the states that
are $\ket 0$ in each of the qubits in $L$, i.e.
\be P_0 = \proj{0}^{\ot L} \ot I^{L^c}.\ee
Similarly define $P = UP_0U^\dag$.
The idea of $H_0$ (resp.~$H$) is to approximate $I-P_0$
(resp.~$I-P$), and indeed they have the same 0-eigenspace.  However,
$H_0$ and $H$ have other eigenvalues as small as $1/|L|$, making this
a rather weak approximation.  We will obtain a better approximation by taking polynomials
of these operators, and will find that higher degree buys us a better
approximation.  Indeed $\lim_{m\rightarrow \infty}(I-H_0)^m = P_0$.  But our
proof will require the sharper degree/error tradeoff that comes from
using Chebyshev polynomials, which we will see allows a degree of approximately $\sqrt{|L|}$.

Let $m = \frac 12 (B n)^{1/2 - \gamma}$ so that $\frac{m^2}{L} \geq \frac
14(nB)^{-2\gamma}$. 
%\anote{added $\frac 12$ to definition of $m$}
Following Lemma 4.1 of \cite{AradKLV12area} we let $T_m(x)$ denote the degree-$m$ Chebyshev
polynomial defined implicitly by the equation $T_m(\cos(x)) = \cos(mx)$ 
One can also write $T_m(x)  =\cos(m\cos^{-1}(x))$ for $|x|\leq 1$ 
and
$\cosh(m\cosh^{-1}(x))$ for $|x|\geq 1$.  Next define
\be C_m(x) := 1 - \frac{T_m(f(x))}{T_m(f(0))},
\qquad \text{where }
f(x) := \frac{1 + 1/|L| -2x}{1- 1/|L|}.\ee
Our choice of $C_m(x)$ guarantees that
\be C_m(0)=0,\label{eq:zero-fixed}\ee
and we have chosen $f$ so that it maps the interval $[1/|L|,1]$ to $[-1,1]$, and thus 
$C_m([1/|L|,1])$ takes values only in $[-1,1]$. 
This implies that
\be
C_m(x)   \geq 1 - \frac{1}{T_m(f(0))}
 \qquad \text{for }\frac{1}{|L|}\leq x \leq 1
\ee
To evaluate this, follow again Lemma 4.1  of \cite{AradKLV12area} to observe that
\ba T_m(f(0)) &=\cosh(m\cosh^{-1}(f(0))) \geq 
1 + \frac{(m\cosh^{-1}(f(0)))^2}{2}
\\
\cosh^{-1}(f(0)) &\geq 2\tanh\left(\frac 12\cosh^{-1}(f(0))\right) =
2\sqrt{\frac{f(0)-1}{f(0)+1}} = \frac{2}{\sqrt{|L|}} \\
C_m(x)  & \geq 1 - \frac{1}{1 + 2\frac{m^2}{|L|}} 
\geq 1 - \frac{1}{1 + \frac 12(nB)^{-2\gamma}}
\label{eq:Cmx-LB}\ea
From Taylor's Theorem $(1+x)^{-1} \leq 1-x+x^2$ for $x\geq 0$, and so \eq{Cmx-LB} yields
\ba
C_m(x)
\geq 
\frac{1}{2}
(nB)^{-2\gamma} - \frac{1}{4}(n B)^{-4\gamma} \geq \frac{1}{4} (nB)^{-2\gamma}
&\text{ for }\frac{1}{|L|}\leq x \leq 1
\label{eq:Cm-gap}\ea
Finally  we would like to bound $C_m(x)$ throughout its range.
\be 0\leq C_m(x) \leq 1 + \frac{1}{T_m(f(0))}\leq 2
\qquad \text{for } 0\leq x \leq 1.
\label{eq:Cm-range}\ee

Define $K_0 = C_m(H_0)$ and $K = UK_0U^\dag = C_m(H)$.  
Note that $K_0$ is $m$-local and $K$ is $\ell$-local for $\ell = Bm = \frac 14 B^{1.5-\gamma} n^{1/2 - \gamma}$.
Since $C_m(H_0)$ consists only of powers of $H_0$, it 
commutes with $H_0$ and can be evaluated by applying $C_m$ to each
eigenvalue of $H_0$; the same applies to $C_m(H)$ and $H$.
Thus
$0 \preceq K \preceq 2 I$.
By \eq{zero-fixed}, the 0-eigenvalues of $K$ are the same as the
0-eigenvalues of $H$, and from \eq{Cm-gap} and \eq{Cm-range}, all the other
eigenvalues of $K$ are between $\frac{1}{4}(nB)^{-2\gamma}$ and 2.  Thus we establish
\eq{eb-zero} as well as the
operator inequality
\be \frac{1}{4}(nB)^{-2\gamma} (I-P) \preceq K \preceq 2 (I-P) .\label{eq:KP-op-ineq}\ee

Now we proceed to compute the upper and lower bounds on
$\bra{\psi'}K\ket{\psi'}$ claimed in \eq{expansion-bound}.  Partition
$\bbF_2^n$ into four sets $S_1, S_2, S_3, S_4$ as follows:
\begsub{four-part}
S_1 &= (S\backslash \pell(S)) \times \{0,1\}^{N-n} \\
S_2 & =(S \cap \pell(S)) \times \{0,1\}^{N-n} \\
S_3 & =(S^c \cap \pell(S)) \times \{0,1\}^{N-n} \\
S_4 & =(S^c\backslash \pell(S)) \times \{0,1\}^{N-n}
\endsub
Decompose $\ket{\psi}$ accordingly as
$$\ket{\psi} = \ket{\psi_1} + \ket{\psi_2} + \ket{\psi_3} +
\ket{\psi_4},$$
where the $\ket{\psi_i}$ are sub-normalized states whose support is contained in
$S_i$.  
Note that  $p(\pell(S)) =
\| \ket{\psi_2}\|^2 + \| \ket{\psi_3}\|^2$.
Using this notation we can write
$$\ket{\psi'} = -\ket{\psi_1} - \ket{\psi_2} + \ket{\psi_3} +
\ket{\psi_4}.$$
Let $K_{ij} := \bra{\psi_i} K \ket{\psi_j}$. 

The fact that $K$ is
$\ell$-local means that $K_{13}=K_{14} = K_{24} = 0$.  Additionally
\eq{eb-zero} 
means that $\sum_{i,j \in [4]} K_{ij} = 0$.  Together, and since $K$ is Hermitian,
these mean that
\be \bra{\psi'}K\ket{\psi'} = -2K_{23} - 2K_{32} =
-4\text{Re}K_{23}.\ee
Since $\|K\|\leq 2$ we can use Cauchy-Schwarz to bound
\be  \label{eq:upper1}
|\bra{\psi'}K\ket{\psi'}| \leq 
8 \| \ket{\psi_2} \| \cdot \|\ket{\psi_3}\| \leq
4(\| \ket{\psi_2} \|^2 +  \|\ket{\psi_3}\|^2)
= 4 p(\pell(S)),\ee
thus establishing \eq{eb-first}.  
%(Here to interpret $p(S_2),p(S_3)$ we consider the projectors of $S_2,S_3$ onto the first $n$ bits.)
% xy <= (x^2 + y^2)/2

We now turn to the proof of \eq{eb-second}.  Observe that $U^\dag R U$
acts only on the qubits in $L$.  Thus 
$U^\dag R U\ket{0}^{\ot N} = U^\dag R\ket\psi
= U^\dag\ket{\psi'}$ is a superposition of states of the form
$\ket{x}^L \ot \ket{0}^{L^c}$, implying that
$$P_0 U^\dag \ket{\psi'} \propto \ket{0}^{\ot N}.$$
We can determine the proportionality constant by calculating
$$\bra{0}^{\ot N} P_0U^\dag \ket{\psi'} =
\bra{0}^{\ot N} U^\dag \ket{\psi'} 
= \braket{\psi}{\psi'} = \sum_{x\in \bbF_2^n} \chi_S(x) p(x) = 1-2p(S).$$
Thus $P_0U^\dag\ket{\psi'} = (1-2p(S))\ket{0}^{\ot N}$.  We can then
calculate
\begsub{UPU-prime}
\bra{\psi'}P \ket{\psi'}
& = \bra{\psi'}UP_0U^\dag \ket{\psi'} \\
& = \bra{\psi'}U(1-2p(S))\ket{0}^{\ot N} \\
& = (1-2p(S))\braket{\psi'}{\psi} \\
& = (1-2p(S))^2
\endsub
Finally we can bound
\begsub{K-prime}
\bra{\psi'}K\ket{\psi'} 
& \geq \frac{1}{4}(nB)^{-2\gamma} \bra{\psi'}(I-P)\ket{\psi'} & \text{using \eq{KP-op-ineq}} \\
& = \frac{1}{4}(nB)^{-2\gamma} (1 - (1-2p(S))^2)\\
& \geq \frac{1}{2}(nB)^{-2\gamma} p(S) & \text{using }p(S)\leq 1/2
\endsub
Taking now the ratio with the upper bound from \eq{upper1} implies
\be
h_l(p)=
\frac{p(\partial_l(S))}{p(S)} 
\geq
\frac{1}{8} (nB)^{-2\gamma}.
\ee

\end{proof}

Our proof that $\qLTC$s  of distance $\omega(\sqrt{n})$ are $\NLTS$ in Theorem \ref{thm:qltc} will require a slightly different graph property:
upper bounds on the distance between large sets instead of lower
bounds on the vertex expansion; i.e.~showing that $p$ cannot be
approximately partitioned.  The relation between these properties
is a standard fact that does not involve any features of quantum
circuits.

\begin{corollary}\label{cor:incomplete}
Let $p$ be a probability distribution on $n$ qubits generated by a
quantum circuit with blow-up $B$, as in \eq{low-depth-p}.  If $S_1,S_2\subset \bbF_2^n$
satisfy $p(S_1)\geq \mu$ and $p(S_2)\geq \mu$, then
$\dist(S_1,S_2)\leq 4\sqrt{n}B^{1.5}/\mu$.
\end{corollary}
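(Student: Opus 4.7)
The plan is to use Theorem \ref{thm:vertex} with $\gamma = 0$ (giving $\ell_0 := \tfrac{1}{4}\sqrt{n}B^{1.5}$ and $h_{\ell_0}(p) \geq 1/8$) and then run a standard expansion-to-diameter iteration along the chain of $\ell_0$-neighborhoods of $S_1$. Specifically, I would suppose for contradiction that $D := \dist(S_1,S_2) > 4\sqrt{n}B^{1.5}/\mu = 16\ell_0/\mu$, set $K := \lfloor 16/\mu\rfloor$, and define $A_k := \{x : \dist(x,S_1) \leq k\ell_0\}$ for $k = 0,1,\ldots,K$. Since $K\ell_0 < D$, each $A_k$ is disjoint from $S_2$, so $p(A_k) \in [\mu, 1-\mu]$.

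A preliminary remark I would make is that the proof of Theorem \ref{thm:vertex} is symmetric under $S \leftrightarrow S^c$: replacing $S$ by $S^c$ merely flips the sign of $\ket{\psi'}$ and leaves $\bra{\psi'}K\ket{\psi'}$ unchanged, while $\pell(S) = \pell(S^c)$. Consequently the vertex-expansion inequality upgrades to the symmetric form $p(\pell(S)) \geq h_{\ell_0}(p) \min(p(S), p(S^c))$ valid for all $S$, which is what I need because $p(A_k)$ may cross $1/2$.

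The core geometric fact is the containment $\pell(A_k) \subseteq A_{k+1} \setminus A_{k-1}$. The outer boundary $O(A_k)$ trivially lies in $A_{k+1} \setminus A_k$ by definition of the $\ell_0$-neighborhood, and the inner boundary $I(A_k)$ cannot meet $A_{k-1}$ by the triangle inequality: any $x$ within distance $(k-1)\ell_0$ of $S_1$ is at distance $>\ell_0$ from every point of $A_k^c$. Summing over $k = 1,\ldots,K-1$ and telescoping,
\[
\sum_{k=1}^{K-1} p(\pell(A_k)) \leq p(A_K) + p(A_{K-1}) - p(A_0) - p(A_1) \leq 2 - 4\mu,
\]
while the symmetric expansion bound combined with $p(A_k), p(A_k^c) \geq \mu$ gives $p(\pell(A_k)) \geq \mu/8$ for every $k$. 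Hence $(K-1)\mu/8 \leq 2 - 4\mu$, forcing $K \leq 16/\mu - 31$, which contradicts $K = \lfloor 16/\mu\rfloor \geq 16/\mu - 1$.

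The only (mild) obstacle is that the vertex-expansion bound mixes the inner and outer boundaries of $A_k$, so a single application does not yield a multiplicative growth of $p(A_k)$; the fix is exactly the two-step telescoping above, in which the inner and outer pieces are packaged into $A_{k+1} \setminus A_{k-1}$. A secondary and genuinely minor point is the regime $\mu$ close to $1$, where the chain of inequalities as written is less efficient; there the claimed bound $4\sqrt{n}B^{1.5}/\mu$ is nearly trivial and follows from a single application of the expansion bound to $S_1$, so I would treat it as a separate easy case.
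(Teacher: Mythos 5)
Your proof is correct and takes essentially the same approach as the paper: both decompose $\bbF_2^n$ into concentric $\ell_0$-annuli around $S_1$ and invoke the vertex-expansion bound on the resulting nested sets; the paper pigeonholes to find one thin annulus and applies the bound once, whereas you telescope over all of them, but these are interchangeable bookkeeping choices yielding the same constant. (Two small unnecessary caveats in your writeup: the symmetric form $p(\pell(S))\geq h_{\ell_0}(p)\min(p(S),p(S^c))$ follows directly from the definitions of $h_\ell$ and $\pell$, not from examining the theorem's proof; and the ``$\mu$ close to $1$'' regime never arises, since disjoint $S_1,S_2$ with $p(S_i)\geq\mu$ already force $\mu\leq 1/2$.)
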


The contrapositive of this claim is that if $p$ is 
$(\mu, D)$-approximately partitioned for $D>4\sqrt{n}B^{1.5}/\mu$ then it cannot result from
measuring the state resulting from a circuit with blowup at least $B$.  In terms of depth, $p$
cannot result from a circuit with depth at most
\be \leq \frac 23 \log\left(\frac{\mu D}{ 4\sqrt{n}}\right). \label{eq:depth-LB}\ee
%We refer to $p$ with large well-separated sets as ``approximately
%partitioned.''   Roughly speaking \corref{incomplete} says that the
%outputs of low-depth circuits cannot be approximately partitioned.

\begin{proof}
Let $D = \dist(S_1,S_2)$ and $\ell = \sqrt{n}B^{1.5}/4$.  Assume $\mu\leq
1/2$, since otherwise we would have $D=0$.
For $t=0,1,2,\ldots$ define the sets
$$U_t := \{x : (t-1)\ell < \dist(x,S_1) \leq t\ell\}.$$
Then $S_1 = U_0$ and $S_2\cap U_t = \emptyset$ for all $t < D/\ell$.

This implies that 
$$\sum_{\substack{1 \leq t\leq \frac{D}{\ell}-2 \\ t \text{ odd}}} p(U_t) +
p(U_{t+1}) \leq 1 - 2\mu $$ and
in turn that there exists a particular $t_0$ for which 
\be p(U_{t_0}) + p(U_{t_0+1}) \leq
\frac{1-2\mu}{\frac{D}{2\ell}-1}
\label{eq:particular-t0}\ee
We will use $t_0$ to define a partition.  Let
$$\bar S_1 = \bigcup_{t\leq t_0} U_t
\qand
\bar S_2 =  \bigcup_{t> t_0} U_t.$$
Since $S_i \subseteq \bar S_i$ we have $p(\bar S_i)\geq \mu$ but now $\bar
S_1, \bar S_2$ form a partition of $\bbF_2^n$.  
Thus by \thmref{vertex}, $p(\pell(\bar S_1)) \geq \mu/8 $.
On the other hand $p(\pell(\bar S_1))\leq p(U_{t_0}) + p(U_{t_0+1})$. 
From \eq{particular-t0} we have
\be \frac{D}{2\ell} \leq 1 + \frac{1-2\mu}{\mu/8} \leq \frac 8\mu.\ee
\end{proof}

\corref{incomplete} applies to any value of $\mu$ but the bounds become weak when $\mu$ is
small or when $p(S_1),p(S_2)$ are not comparable.  One plausible generalization of
\corref{incomplete} would give upper bounds on $\dist(S_1,S_2)$ that scale optimally in
terms of both $p(S_1)$ and $p(S_2)$; see \cite{FT00} for the precise statement.  Since our
main results in this
paper do not make use of such a generalization we do pursue this further here.  

\section{Warm-up: 
\\ Quantum CSS Code-States are $\QNC^1$-hard}
\label{sec:warmup}
\cftsectionprecistoc{Non-robust circuit lower bounds for any CSS code}

Circuit lower bounds for generating quantum code states {\it exactly} can be readily
derived from the local indistinguishability property.  In this section, we show that our
techniques can be used to derive a robust version of this property, which is that quantum
CSS codes cannot be approximated by bounded-depth quantum circuits, even up to constant
$l_2$ error.  
Such robust versions have been shown for example by  Bravyi, Hastings
and Verstraete~\cite{BHV06}. 

Notably, even such a hardness-of-approximation claim is by no
means robust, because we still consider approximation of {\it perfect} ground states of
the code Hamiltonian.  In other words, while a code-state is $\QNC_1$-hard, not every
$\eps$-impostor of a code-state is $\QNC_1$-hard.
In fact, many constructions of quantum CSS codes are known to be {\it not}
$\QNC_1$-robust: i.e.~one can find $\eps$-impostors of such codes that are trivial (see \secref{previous}).

The claims in this section demonstrate our techniques by improving the approximation bounds on {\it perfect} code-states from $0$ error to constant $l_2$ error.
In subsequent sections we will then strengthen these bounds even further and demonstrate a quantum code
for which {\it every} $\eps$-impostor is $\QNC_1$-hard - which amounts to
the $\NLETS$ theorem.
Hence, the following claim on CSS codes relates only to code-states and not code-state impostors:
\begin{proposition}\label{prop:qecc-hard}

\textbf{Code-states of quantum CSS codes with large distance are $\QNC^1$-hard}

\noindent
Let ${\cal C} = [[n,k,\Delta_{min}]]$ be a quantum CSS code.
Preparing any $\ket\psi\in\calC$ up to $l_2$
 error at most $0.14$ requires depth $\Omega(\log(\Delta_{\min}/\sqrt n))$.
In particular, if $\Delta_{\min}\geq n^{1/2 + \Omega(1)}$ then $\ket\psi$ is $\QNC^1$-hard.
\end{proposition}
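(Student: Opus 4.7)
The proof combines the two main tools developed in the preceding sections: the uncertainty principle for CSS code states (\propref{hbar1}) and the vertex expansion bound for low-depth circuits (\corref{incomplete}). The overall strategy is to show that for any code state $\ket\psi\in\calC$, measurement in either the $X$ or $Z$ basis yields a distribution that is approximately partitioned into two sets whose Hamming distance is at least $\Delta_{\min}$. Since distributions arising from low-depth circuits cannot have this property when $\Delta_{\min}\gg\sqrt n$, we obtain the depth lower bound. Any circuit output $\ket\phi$ that is $l_2$-close to $\ket\psi$ inherits the approximate partition (with slightly degraded constants), so the lower bound applies to $\ket\phi$ as well.

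More concretely, I would first invoke \propref{hbar1} to conclude that $\calD^Z_\psi$ or $\calD^X_\psi$ places mass at least $\mu := \tfrac 12 - \tfrac{1}{2\sqrt 2}$ on each of $C_0^Z, C_1^Z$ or $C_0^X, C_1^X$. Assume WLOG that the $Z$ case holds. The key geometric step is then to show $\dist(C_0^Z, C_1^Z) \geq \Delta_{\min}$. Since $C_0^Z$ is a linear subspace, the set of differences $C_0^Z - C_1^Z$ lies inside $b_1^x + C_0^Z$. For any $w\in C_0^Z$, the element $b_1^x + w$ satisfies $(b_1^x + w)\cdot b_1^z = 1 + 0 = 1$, so it cannot lie in $S_x$ (elements of $S_x$ are orthogonal to $b_1^z\in S_x^\perp$); since $b_1^x + w \in S_z^\perp$, it represents a nontrivial logical $X$ operator, hence by the definition of code distance has weight at least $\Delta_{\min}$.

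Next, I would translate the $l_2$-closeness hypothesis into closeness of measurement distributions. If $\|\ket\psi - \ket\phi\|_2 \leq 0.14$, then $|\braket\psi\phi|^2 \geq 0.98$, so the trace distance $\|\proj\psi - \proj\phi\|_1 = 2\sqrt{1 - |\braket\psi\phi|^2}$ is at most roughly $0.283$. The TVD between the induced $Z$-basis distributions is therefore at most half of this, about $0.141$, leaving mass at least $\mu - 0.141 > 0$ on each of $C_0^Z, C_1^Z$ for the distribution $p^Z_\phi$ produced by measuring $\ket\phi$. (The constant $0.14$ in the statement is precisely calibrated so that this residual mass is a positive constant.) Applying \corref{incomplete} with $S_1 = C_0^Z$, $S_2 = C_1^Z$, and $\mu' = \mu - 0.141 = \Omega(1)$ gives $\Delta_{\min} \leq 4\sqrt n \cdot B^{1.5}/\mu'$, hence blow-up $B \geq \Omega((\Delta_{\min}/\sqrt n)^{2/3})$, which for a depth-$d$ circuit of $2$-qubit gates (where $B \leq 2^d$) yields $d \geq \Omega(\log(\Delta_{\min}/\sqrt n))$.

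The main obstacle is essentially the tightness of constants in Step 3 rather than any deep technical difficulty: one must check that the $l_2$-error threshold $0.14$ is small enough that the partition of \propref{hbar1} survives after the perturbation, while the core argument (uncertainty plus expansion) is a straightforward composition. The only subtlety if ancillas are allowed is to interpret ``$l_2$ error'' via the reduced state on the first $n$ qubits (e.g.\ by bounding trace distance instead); this does not affect the argument because \corref{incomplete} and \thmref{vertex} already handle arbitrary $N\geq n$ ancillas. Once $\Delta_{\min} \geq n^{1/2 + \Omega(1)}$, the argument gives $d \geq \Omega(\log n)$, which is the $\QNC^1$-hardness claim.
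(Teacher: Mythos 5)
Your proof is correct and follows essentially the same route as the paper: derive the $\Delta_{\min}$-separated, $\Omega(1)$-mass partition $(C_0^Z,C_1^Z)$ from \propref{hbar1}, then invoke \corref{incomplete} to get the $\Omega(\log(\Delta_{\min}/\sqrt n))$ depth bound. You are slightly more careful than the paper's written proof in explicitly converting the $l_2$ state error into a total-variation bound on the induced measurement distributions (via the trace-distance formula for pure states); the paper treats $\eps=0.14$ directly as the distributional $l_1$ error and leaves that conversion implicit, but the constants work out either way.
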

\begin{proof}[Proof of \propref{qecc-hard}]
Let $\ket{\psi}$ be some code-state of ${\cal C}$.
By Fact \ref{fact:basis} above, one can find bases ${\cal B}_x, {\cal
  B}_z$ satisfying \eq{conjugate-pair}.  Choose, say, the first pair
$b^x := b^x_1\in\calB_x$, $b^z := b^z_1\in\calB_z$.

Let $C_0$ denote the linear space $C_0 = S_z^\perp \cap (b^z)^\perp\subset \mathbf{F}_2^n$,
and define the affine space $C_1 = C_0 + b^x$.  
If $s_0\in C_0, s_1\in C_1$ then $s_0+ s_1 \in C_1
\subseteq S_z^\perp - S_x$, implying that $|s_0+ s_1|\geq
\Delta_{\min}$,
and so
\be \dist(C_0,C_1)\geq \Delta_{\min} \label{eq:Ci-far}.\ee

Let ${\cal D}_{\psi}^Z$ denote the
distribution on $\bbF_2^n$ induced by measuring $\ket{\psi}$ in the tensor $Z$ basis.
Then by Proposition \ref{prop:hbar1} we either have
\be \calD^Z_\psi(C_0) \geq \frac 1 2 - \frac 1 {2\sqrt{2}}
\qand \calD^Z_\psi(C_1) \geq \frac 1 2 - \frac 1 {2\sqrt{2}}
\label{eq:DC-lower-bound},\ee
or a similar statement holds for measuring in the $X$ basis.  WLOG assume that
\eq{DC-lower-bound} holds.
Thus ${\calD^Z_\psi}$ is approximately partitioned with measure at least
$\mu = \frac{1}{2}-\frac{1}{2\sqrt{2}}$ 
and distance $\Delta_{\min}$.
Hence, any distribution $p$ that is $\eps$-close to ${\calD^Z_\psi}$ for $\eps<\mu$ is 
$(\mu- \eps, \Delta_{\min})$-approximately partitioned.
Therefore, by \corref{incomplete} (and specifically \eq{depth-LB}) producing $\ket\psi$ to
error $\eps$ requires  depth
\be \geq \frac{2}{3} \log \left(\frac{(\mu-\eps)\Delta_{\min}}{4\sqrt n}\right).\ee
Since $\mu \geq 0.142\ldots$, if we take $\eps=0.14$ then this implies a
depth lower bound of $\frac 23 \log\frac{\Delta_{\min}}{\sqrt{n}} - O(1)$.  If
$\Delta_{\min}=n^{1/2+\Omega(1)}$ then this bound is $\Omega(\log n)$ and so $\ket\psi$ is
$\QNC^1$-hard. 
\end{proof}

\begin{figure}[h]
\center{
 \epsfxsize=5in
 \epsfbox{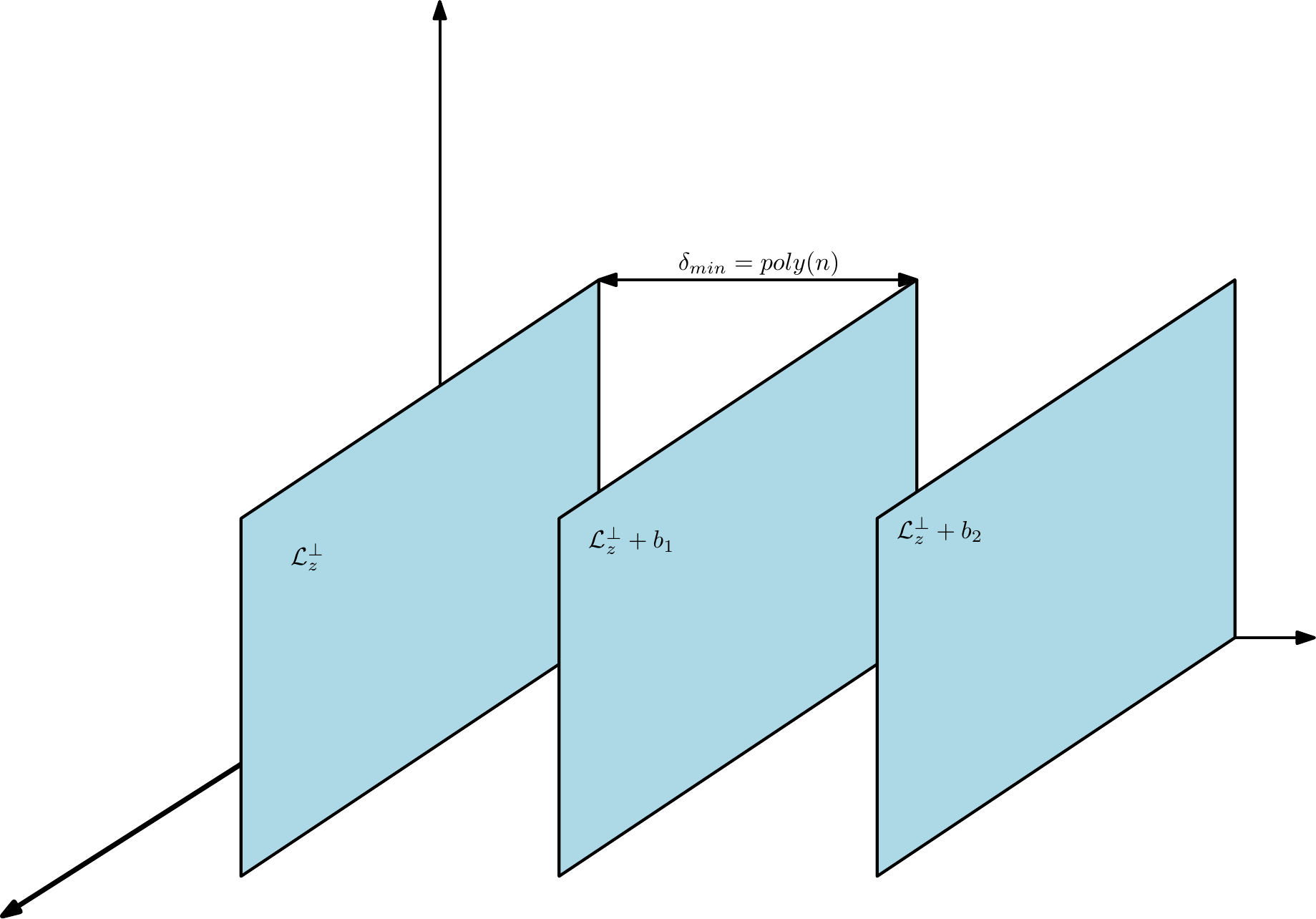}
 \label{fig:code}
 \caption{\small{Depiction of the approximate partition of a quantum CSS
     code with large distance.  Any code state must superpose non-negligibly in at least one of the two bases, on two distinct affine spaces separated by a large distance.}}}
 \end{figure}

\paragraph{Implications for known quantum codes}

\propref{qecc-hard} provides a nontrivial quantum circuit lower bound on the quantum LDPC
codes due to \cite{FML02}.  These codes are CSS codes and have distance
$\Omega(\sqrt{n\log(n)})$ which corresponds to a circuit depth lower bound of
$\Omega(\log\log(n))$.

One can also consider the toric code which has distance $\Theta(\sqrt{n})$.
In such a case, while one cannot apply directly \propref{qecc-hard} - one can still show
a similar proposition 
where the distribution is approximated to within distance $\eps =
n^{-1-\delta}$ for some constant $\delta>0$.   
\begin{proposition}
Let ${\cal C} = [[n,k,\Delta_{min}]]$ be a quantum CSS code with $\Delta_{\min}
\geq n^\alpha$ for $\alpha>0$ and $k\geq 1$. If $\ket{\psi}\in\calC$ and $\| \rho
- \proj\psi\| \leq n^{-1-\beta}$ for $\beta>0$ then preparing $\rho$ requires
depth $\Omega(\log(n))$.
\end{proposition}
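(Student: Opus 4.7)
The plan is to mirror the structure of the proof of \propref{qecc-hard}, exploiting the fact that the approximation error $n^{-1-\beta}$ is far smaller than the constant $\mu := 1/2 - 1/(2\sqrt 2)$ coming from the uncertainty bound. First I would invoke \factref{basis} to pick an anti-commuting pair $b^x \in S_z^\perp \setminus S_x$, $b^z \in S_x^\perp \setminus S_z$, and define the affine partition $C_0 := S_z^\perp \cap (b^z)^\perp$, $C_1 := b^x + C_0$. Exactly as in the proof of \propref{qecc-hard}, any $c_0 \in C_0, c_1 \in C_1$ satisfy $c_0 + c_1 \in b^x + C_0 \subseteq S_z^\perp - S_x$, so $|c_0+c_1| \geq \Delta_{\min}$, giving $\dist(C_0, C_1) \geq \Delta_{\min} \geq n^\alpha$.

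Next, by \propref{hbar1} applied to $\ket\psi$, WLOG the $Z$-basis measurement distribution satisfies $\calD_\psi^Z(C_0), \calD_\psi^Z(C_1) \geq \mu$. Since $\|\rho - \proj\psi\|_1 \leq n^{-1-\beta}$, any POVM (in particular the product $Z$ measurement) induces distributions differing by at most $n^{-1-\beta}$ in total variation. Therefore $\calD_\rho^Z(C_i) \geq \mu - n^{-1-\beta}$ for $i=0,1$, a quantity bounded below by a positive constant for all sufficiently large $n$. In other words, $\calD_\rho^Z$ is $(\mu - n^{-1-\beta}, \Delta_{\min})$-approximately partitioned, and applying \corref{incomplete} yields a depth lower bound of $\tfrac{2}{3}\log\bigl((\mu - n^{-1-\beta})\Delta_{\min}/(4\sqrt n)\bigr)$.

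The main obstacle is exactly the one that limits \propref{qecc-hard}: this direct partition-based bound is only $\Omega(\log n)$ once $\Delta_{\min} = n^{1/2 + \Omega(1)}$, and degenerates for the more interesting regime $\alpha \leq 1/2$ (which covers the toric code at $\alpha=1/2$). To bridge this gap one must genuinely exploit the vanishingly small error budget $n^{-1-\beta}$ rather than merely using it to preserve a constant-sized partition. I expect the right approach is a refined shell/Chebyshev argument in the spirit of \thmref{vertex} with $\gamma$ close to $1/2$, together with the observation that $\calD_\rho^Z$ must place all but $n^{-1-\beta}$ of its mass inside $S_z^\perp$; one then trades the weaker per-shell expansion $(nB)^{-2\gamma}/8$ against the much smaller ``leakage'' $n^{-1-\beta}$ one must pay to detect the partition. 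Making this trade give a clean depth lower bound of the form $\beta\log n - O(1)$, rather than the $\log(\Delta_{\min}/\sqrt n)$ bound that comes out of naively applying \corref{incomplete}, is the technically delicate step, and is where the difference between the constant-error and polynomial-error regimes must genuinely be exploited.
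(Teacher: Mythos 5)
Your diagnosis is accurate and your proposed direction is exactly the one the paper takes, but you stop at precisely the step the paper actually carries out, and that step turns out to be shorter and more elementary than you anticipate --- not a ``technically delicate'' refinement of the shell argument, but a direct application of \thmref{vertex} at $\gamma=1/2$ combined with one clean observation you almost articulate but don't quite nail down.

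The observation is this. Set $B=2^d$ and suppose $2B<\Delta_{\min}$ (otherwise $d\geq\alpha\log n-1$ and we are done). Let $S$ be the $B$-fattening of $C_0$. Since $D_\psi^Z$ is supported \emph{entirely} on $C_0\cup C_1 = S_z^\perp$, and $\partial_B(S)$ consists only of points at distance in $(0,2B]$ from $C_0$ (hence disjoint from both $C_0$ and $C_1$ when $2B<\dist(C_0,C_1)$), one has $D_\psi^Z(\partial_B(S))=0$ \emph{exactly}, not merely ``most mass inside $S_z^\perp$.'' Hence $p(\partial_B(S))\leq \eps := \tfrac12\|p-D_\psi^Z\|_1\leq n^{-1-\beta}$. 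Meanwhile $p(S)=\Omega(1)$ from \propref{hbar1}, and \thmref{vertex} at $\gamma=1/2$ gives $h_B(p)=\Omega(1/nB)$ (the paper even remarks that at this parameter you need no Chebyshev polynomial --- you can take $K=H$). Combining, $\Omega(1/nB)\leq h_B(p)\leq p(\partial_B(S))/p(S)\leq O(\eps)$, hence $B=\Omega(1/(n\eps))=\Omega(n^\beta)$ and $d=\Omega(\log n)$. So the missing ingredient in your sketch is not a new isoperimetric inequality, but the choice of $S$ as the $B$-fattening of $C_0$ together with the exact-vanishing of $D_\psi^Z$ on its boundary shell; once that is in place, the trade between $O(n^{-1-\beta})$ leakage and $\Omega(1/nB)$ expansion you describe falls out in one line.

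Two small corrections to your write-up: you want $\gamma=1/2$ exactly, not ``close to $1/2$''; and the mechanism does not need to track mass ``inside $S_z^\perp$'' approximately for $\rho$ --- it needs the fact that the \emph{ideal} state $\ket\psi$ gives measure zero to the boundary shell, so the impostor $\rho$ can put at most $\eps$ there. Your first two paragraphs (the construction of $C_0,C_1$, the uncertainty bound, and the transfer of the partition from $\ket\psi$ to $\rho$) are correct and match the paper.
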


We note that other methods are
known~\cite{BHF06,H-locality,Haah14} for showing that QECC
ground states, and even low-temperature thermal states of the 4-d
toric code~\cite{Has11}, are nontrivial.  Indeed our proof can be
viewed as a certain way of generalizing the argument of \cite{BHF06}.
Since the proof is very similar to that of \propref{qecc-hard} we provide only a brief
sketch of the proof here. 

\begin{proof}
Setting $\gamma=1/2$ in \thmref{vertex} yields\footnote{We observe that for these parameters the proof of \thmref{vertex} does not need to make use of any Chebyshev
polynomials and could simply set $K=H$.}
\be h_B(p) = \Omega(1/nB).\label{eq:weak-expansion}\ee

Next we follow the proof of \propref{qecc-hard} and construct the same
pair of sets $C_0,C_1$, with $D_\psi^Z(C_0),D_\psi^Z(C_1) \geq \Omega(1)$,
where $D_\psi^Z$ is the probability distribution resulting from
measuring $\ket\psi$ in the $Z$ basis.  Additionally observe that $D_\psi^Z(C_0\cup
C_1)=1$ and $\dist(C_0,C_1) \geq \Delta_{\min}$.

Let $p:=\diag(\rho)$.  By hypothesis 
\be \frac 1 2 \|p - D^Z_\psi\|_1 := \eps \leq n^{-1-\beta}.\label{eq:really-close}\ee
Suppose that $\rho$ can be generated by a circuit of 
depth $d$ and define $B=2^d$, so that \eq{weak-expansion} holds.  We will argue that \eq{really-close}
implies that $B$ is large.

If $B \geq \frac 12 \Delta_{\min}$ then we immediately have $d \geq \alpha\log(n)-1$.  
Otherwise, assume
$\dist(C_0,C_1)>2B$, and let
$S$ be the $B$-fattening of
$C_0$.  Then $D_\psi^Z(\partial_B(S))=0$, implying that
$p(\partial_B(S)) \leq \eps$.  On the other hand, $p(S)
= \Omega(1)$, so by \eq{weak-expansion}
we have $p(\partial_B(S))  = \Omega(1/nB)$.  Combining these we have
$B = \Omega(\frac{1}{n\eps})=\Omega(n^\beta)$, which again implies the $\Omega(\log(n))$ circuit
lower bound. 
\end{proof}

We remark that these arguments never made use of the LDPC property of
codes, and would apply equally well to say, a random stabilizer code
with linear distance and linear-weight stabilizers.  Rather the claims
are only nontrivial when applied to LDPC codes since it would not be
surprising if $n$-local stabilizers forced a system into an
$n$-partite entangled state.

\paragraph{Non-CSS codes.} It is tempting to speculate that \propref{qecc-hard}
should hold for any quantum code
(i.e.~not only CSS) with distance $\omega(n^{1/2})$ and at least one
logical qubit.  While we believe this is likely to be true, we would
need new techniques to prove it.  It is possible for such codes to
yield a nearly uniform distribution when measured in any local basis
(e.g.~consider a random 2-dimensional subspace of $(\bbC^2)^{\ot n}$)
which cannot be approximately partitioned.
We note that using a proof similar to that of \propref{qecc-hard} one can 
show $\QNC^1$-hardness for general (i.e.~non-Pauli) stabilizer codes, but we omit this here
for simplicity.

\section{A Bit Further: \\
Quantum Locally Testable Codes are $\NLTS$}
\label{sec:qltc-nlts}
\cftsectionprecistoc{Another warmup: qLTCs (which are not known to
  exist) would yield a proof of NLTS.}

We now connect between quantum locally testable codes (see definition of $\qLTC$'s in Definition \ref{def:qltc}) and local Hamiltonians with approximation-robust entanglement:
For a $\qLTC$ with large minimal distance, all low-energy states are
non-trivial.  This implies that the corresponding Hamiltonians have
the NLTS property.  
\begin{theorem}\label{thm:qltc}
Let ${\cal C} = [[n,k,\Delta_{min}]]$ be a quantum locally testable CSS code with soundness
$\rho>0$, $k\geq 1$, and 
$\Delta_{\min} = \Omega(n)$.
Then the local Hamiltonian of ${\cal C}$, $H({\cal C})$, is $\NLTS$.
\end{theorem}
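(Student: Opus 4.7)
The plan is to bootstrap the warm-up \propref{qecc-hard} from exact code states to arbitrary low-energy states, using the $\qLTC$ hypothesis as the bridge and a tail-tolerant variant of \propref{vor1} in place of \propref{hbar1}. Concretely, for a sufficiently small constant $\eps>0$, I aim to show that any $\sigma$ with $\tr[\sigma H(\calC)]\leq \eps$ has the property that either its $Z$- or $X$-basis measurement distribution (call them $p_Z,p_X$) is $(\Omega(1),\Omega(n))$-approximately partitioned. By \corref{incomplete} this forces any circuit producing $\sigma$ to have depth $\Omega(\log n)$, so no constant-depth trivial state can achieve energy $\leq \eps$---precisely the $\NLTS$ property.

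The first step is to convert the quantum $\qLTC$ hypothesis into a classical $\LTC$ statement about measurement outcomes. By \factref{CSS}, both $S_z^\perp$ and $S_x^\perp$ are classical $(q,\rho/2)$-$\LTC$s. A state $\sigma$ with $\tr[\sigma H(\calC)]\leq \eps$ violates on average an $O(\eps)$ fraction of the $Z$-checks, so classical $\LTC$ soundness gives $\E_{w\sim p_Z}[\dist(w,S_z^\perp)/n]\leq O(\eps/\rho)$ and analogously for $p_X$ and $S_x^\perp$. Using $\Delta_{\min}=\Omega(n)$ and Markov, $p_Z$ places mass $\geq 1-O(\eps/\rho)$ on the $\tfrac{\Delta_{\min}}{3}$-neighborhood of $S_z^\perp$ (and analogously for $p_X$). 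Since $k\geq 1$, I pick a conjugate logical pair $(b_1^x,b_1^z)$ as in \factref{basis}; the quantum distance condition yields $\dist(C_0^Z,C_1^Z)\geq \Delta_{\min}$, so the $\tfrac{\Delta_{\min}}{3}$-fattenings $S_0^z,S_1^z$ of $C_0^Z,C_1^Z$ are disjoint, and similarly $S_0^x\cap S_1^x=\emptyset$.

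The heart of the proof is a tail-tolerant version of \propref{vor1}. I extend the unitary decoders $\calU_{\text{Dec}}^X,\calU_{\text{Dec}}^Z$ to all of $\bbF_2^n$ by routing strings outside $S_0^z\cup S_1^z$ (resp.~$S_0^x\cup S_1^x$) onto a failure-flag ancilla, so the decoded state is $O(\eps/\rho)$-close in trace distance to a state supported entirely in $\calC$. Applying \propref{hbar1} to the code-state component yields that in at least one basis, the decoded measurement mass on $C_0$ and on $C_1$ is each $\geq c_0>0.07$, losing only $O(\eps/\rho)$ to the flag; pulling back through the unitary decoders preserves this partition of $p_Z$ or $p_X$, so for $\eps$ small enough compared to $c_0\rho$ we obtain a $(c_0/2,\Delta_{\min}/3)$-approximate partition of one of them, and \corref{incomplete} then forces circuit depth $\Omega(\log n)$, proving $\NLTS$. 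The main obstacle is precisely establishing this tail-tolerant form of \propref{vor1}: I expect this to go through by combining the flag-ancilla extension with a gentle-measurement-style estimate, so that the $O(\eps/\rho)$ tail only degrades the $X$-$Z$ uncertainty partition additively, rather than qualitatively destroying it.
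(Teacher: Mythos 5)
Your proposal follows essentially the same route as the paper: you use \factref{CSS} to convert the $\qLTC$ hypothesis into classical soundness for $S_x^\perp$ and $S_z^\perp$, a Markov step to show that the measured outcome distribution concentrates near the underlying classical code, the $\Delta_{\min}=\Omega(n)$ hypothesis to make the Voronoi cells of $C_0^P,C_1^P$ disjoint and far apart, the decoder/uncertainty machinery of \propref{vor1} and \propref{hbar1} to extract an approximate partition in one of the two Pauli bases, and \corref{incomplete} to translate that into an $\Omega(\log n)$ depth lower bound. The one genuine difference is the order of operations: the paper invokes \propref{vor1} first, with error sets $\{|w|<\Delta_{\min}/2\}$, and only afterwards applies the LTC/Markov argument to restrict to the low-error portion of the Voronoi cells. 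Strictly read, that invocation requires $\supp(\calD_\psi^Z)\subseteq S_0^z\cup S_1^z$ and $\supp(\calD_\psi^X)\subseteq S_0^x\cup S_1^x$, which a general low-energy state does not satisfy; the paper does not address this. Your reordering --- establish via LTC/Markov that all but an $O(\eps/\rho)$ mass lies in the cells, then run a flag-ancilla ``tail-tolerant'' variant of \propref{vor1} --- is the cleaner way to make the argument rigorous, and is morally the same thing the paper implicitly relies on. The cost is that you must actually prove that variant. Your sketch (route bad strings to a flag, argue the decoded state is close to a state supported on $\calC$, then apply \propref{hbar1} with an additive loss) is the right shape, but note that the closeness is not a simple trace-distance fact: the $X$- and $Z$-basis ``good'' projections do not commute, so the decoded state is not obviously within $O(\eps/\rho)$ trace distance of a genuine code state. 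The cleaner route, and the one closer to what the paper actually does, is to avoid asserting closeness of the decoded state and instead argue purely at the level of the output distributions: apply the original \propref{vor1} logic to a normalized projection of $\ket\psi$ onto the good region (which then exactly satisfies the support hypothesis), and observe that this projection changes $\calD_\psi^Z$ and $\calD_\psi^X$ by at most $O(\eps/\rho)$ in total variation. With that modification your argument goes through and matches the paper's conclusion.
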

However, no qLTCs with the required properties are
known. In fact, we do not even know of quantum LDPC codes
(i.e.~$O(1)$-weight check operators) with distance greater than
$O(\sqrt{n} \log^{1/4}(n))$.
Still the proof is
conceptually a bridge between the proof of the exact case in
\propref{qecc-hard} and our proof of NLETS in \thmref{nlets}.
\begin{proof}
Let $\ket{\psi}$ be some quantum state with:
\begin{equation}\label{eq:energy}
\langle \psi | H({\cal C}) | \psi \rangle \leq \eps,
\end{equation}
for $\eps>0$ a constant we will choose later.

From \factref{basis}, there exist a pair of logical operators $b^x_1,
b^z_1$.  Define $C_{0,1}^{X,Z}$ as in \eq{CXZ}.  We will apply
\propref{vor1} with error sets $E_x = E_z = \{w\in \F_2^n : |w|
< \frac 12 \Delta_{\min}\}$.  This implies the existence of sets
$S_0^x,S_1^x,S_0^z,S_1^z$ defined as in \eq{error-sets} such that 
such that $\ket{\psi}$ has projection at least $c_0$ on either both
$S_0^x$ and $S_1^x$ or both $S_0^z$ and $S_1^z$.
Assume w.l.o.g.~that
\be 
\calD_\psi^Z(S_0^z) \geq c_0
\qand \calD_\psi^Z(S_1^z) \geq c_0
\label{eq:D-tilde-C-LB}.
\ee

Now we cannot directly use the distance guarantees of the code since
$\ket\psi$ is not necessarily a code state.  However, by
\factref{CSS}, $S_x^\perp$ is locally testable with parameter
$\rho/2$.  Moreover, by \eq{energy} the expected fraction of violated
clauses in $H_Z$ is at most $2\eps$.  Thus 
\be \bbE_{z \sim \calD^Z_\psi} \dist(z, S_x^\perp)
\leq \frac{4\eps n}{\rho} 
\ee
So by a Markov argument:
\be
\calD^Z_\psi \left(\{z : \dist(z,S_X^\perp) \geq \frac{8\eps n}{ \rho c_0}\}\right) 
 \leq
c_0/2 \label{eq:mostly-close}
\ee
We now define the restriction of the Voronoi cells $S_0^x,S_1^x,S_0^z,S_1^z$ to the set of words that violate few parity checks:
\be \tilde{S}_i^z = \{z  \in S_i^z  : \dist(z, S_z^\perp) \leq \frac{8\eps n}{\rho c_0}\},\ee
for $i=0,1$.  By \eq{D-tilde-C-LB} and \eq{mostly-close}, we have for all $i\in \{0,1\}$,
$\calD^Z_\psi(\tilde{S}_i^z)\geq c_0 - c_0/2 = c_0/2$.  

On the other
hand, we claim that $\dist(\tilde{S}_0^z, \tilde{S}_1^z)$ is large:
Let $s_0\in \tilde{S}_0^z, s_1\in \tilde{S}_1^z$.
Then we can write $s_i = t_i + u_i$ with  $t_0, t_1 \in
S_z^\perp $ 
and $|u_i| \leq \frac{8\eps n}{\rho c_0}$.
In addition, 
since $s_0\in \tilde{S}_0^z, s_1\in \tilde{S}_1^z$ then 
their respective closest codewords are in {\it different} cosets modulo $S_x$, i.e.:
$t_0\in C_0^Z$ and 
$t_1\in C_1^Z$.
Hence by the minimal distance of the code: 
$\dist(t_0,t_1) \geq
\Delta_{\min}$. 
Then 
$\dist(s_0,s_1) \geq \Delta_{\min} - 16\frac{ \eps n}{\rho c_0}$.  
Thus
\be \dist(\tilde{S}_0^z, \tilde{S}_1^z) \geq \Delta_{\min} - 16 \frac{\eps n}{\rho c_0}.
\ee
Thus,
$D_\psi^Z$ is approximately partitioned with parameters each of which is at least $(c_0/2,
\Delta_{\min} - 16\frac{ \eps n}{\rho c_0})$.  For sufficiently small constant $\eps>0$,
\corref{incomplete} then implies an $\Omega(\log n)$ lower bound on
the depth required to prepare $\ket\psi$.
\end{proof}

\section{The Hypergraph Product}
\label{sec:TZ-review}
\cftsectionprecistoc{Review of a quantum coding technique used in the
  following sections.}

\subsection{General}
In this section, we survey the hypergraph product due to Tillich-Z\'{e}mor \cite{TZ09}.
We provide here only the very basic definitions that are required to prove our main
theorem, and refer the reader to the original paper~\cite{TZ09} for an in-depth view.  The
hypergraph-product code takes in two classical codes defined by their Tanner constraint
graphs and generates a product of these codes as hypergraphs.  Then it attaches a CSS code
to the product graph.  Formally stated:
\begin{definition}\label{def:TZ}

\textbf{The Hypergraph Product}

\noindent
Let $(V_1,E_1),(V_2,E_2)$ be two constraint hypergraphs with
corresponding edge-vertex incidence
operators $\partial_1,\partial_2$ and codes $\calC_1=\ker\partial_1,
\calC_2=\ker\partial_2$.
Then the Tillich-Z\'{e}mor hypergraph product 
of these codes, denoted by 
\be
\calC_{\times} = {\cal C}_1 \times_{TZ} {\cal C}_2,
\label{eq:TZ-notation}\ee
is defined by the hypergraph product of the corresponding graphs.
Specifically, its Hilbert space is comprised of qubits corresponding to
$$\left(V_1\times V_2\right) \cup \left(E_1\times E_2\right),$$
and check matrices are
\be  H_x = \left( \partial_1 \otimes I_{V_2} \middle| I_{E_1} \otimes \partial_2^T \right)
\qquad
 H_z = \left( I_{V_1} \otimes \partial_2 \middle| \partial_1^T \otimes I_{E_2}\right)
\ee
\end{definition}
These matrices have rows indexed by qubits and columns indexed by checks.  The
$X$ constraints, for example, are labeled by elements of $E_1\times V_2$, with 
constraint $(e_1,v_2)$ is connected to all elements $(u,v_2) \in V_1\times V_2$ for
$u\in \partial^T e_1$ and also to all elements $(e_1,f)\in E_1\times E_2$ for
$f \in \partial v_2$. (Here we view $\partial^T e_1, \partial v_2$ equivalently both as vectors
in $\bbF_2^{V_1}, \bbF_2^{E_2}$ respectively and as subsets of $V_1,E_2$.)  
It follows from this definition that ${\cal C}_{\times}$ is a CSS code
${\cal C}_{\times}(S_x,S_z)$, where as usual $S_x = \Img{H_x}$ and $S_z = \Img{H_z}$.
For
$|V_1| = n_1, |V_2| = n_2, |E_1| = m_1, |E_2| = m_2$, the code ${\cal C}_{\times}$ is a
quantum CSS code on $n_1 n_2 + m_1 m_2$ qubits, with $n_1 m_2 + n_2 m_1$ local checks.
One can check that $\calC_\times$ is determined only by $\calC_1,\calC_2$ and not the
specific choices of $\partial_1,\partial_2$, so \eq{TZ-notation} is well defined.

\begin{figure}[h]\label{fig:tzconst}
\center{
 \epsfxsize=2in
 \epsfbox{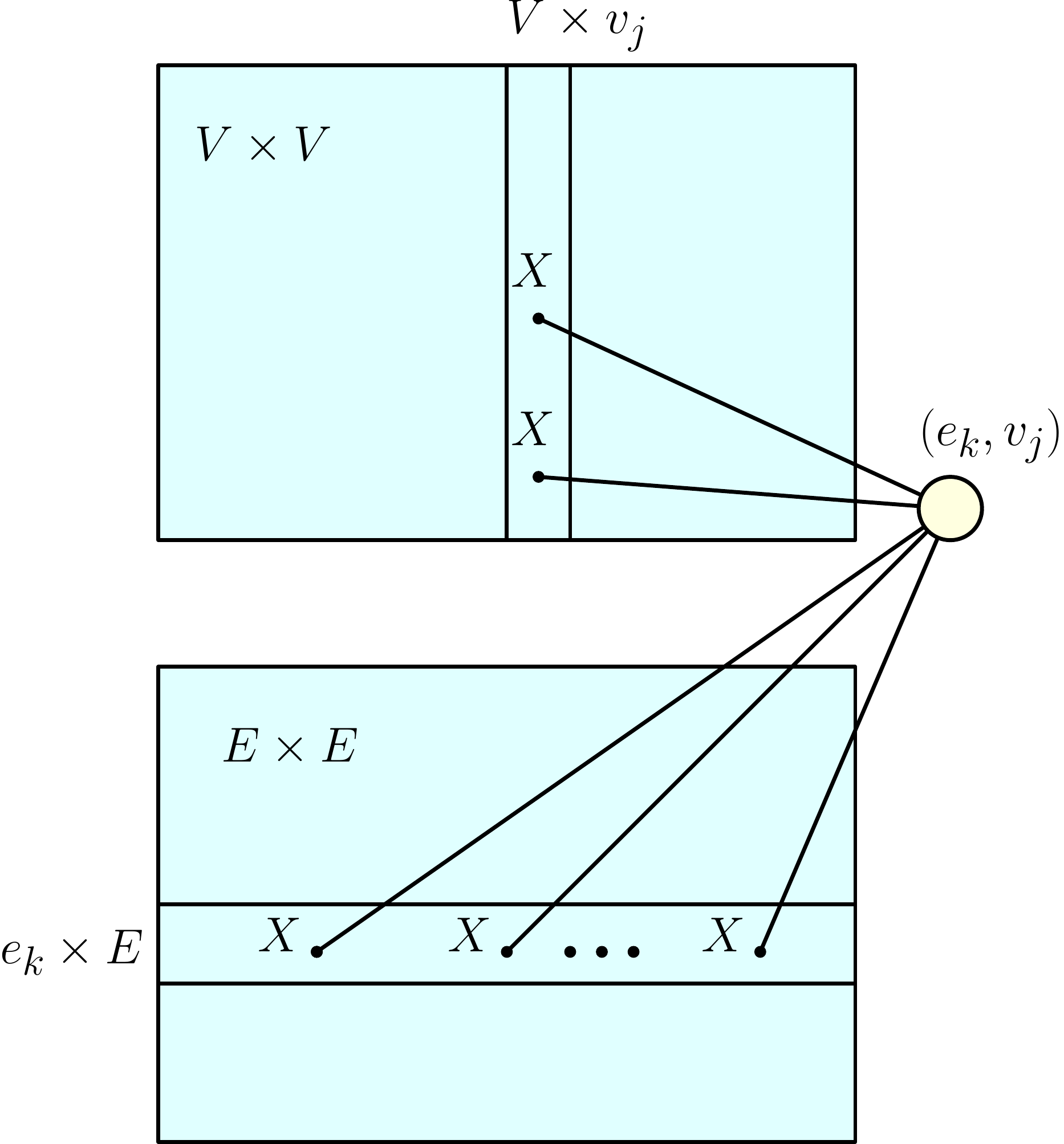}
 \caption{
\small{An example of a check term $(e_k,v_j)$ of $H_x$.  It is a parity check on
     all bits $(v_m,v_j)$ in the $j$-th column of $V\times V$ such that $v_m$ is examined
     by $e_k$ in the original code ${\cal C}$, 
 and on all bits in the $k$-th row of $E\times E$ that corresponds to checks incident on
 $v_j$ in ${\cal C}$.
If we specialize to the case when $\calC$ is the repetition code with checks corresponding
to a $d$-local graph (as in \secref{TZ-expander}) then each check
examines two bits in the
$V\times V$ block and $d$ bits in the $E\times E$ block.
}}} 
\end{figure}

We now state several useful facts on this construction, which can all be found in \cite{TZ09}:
\begin{fact}\label{fact:TZ}

\textbf{Basic Properties of the hypergraph product}\cite{TZ09}

\noindent
\begin{enumerate}
\item
If $C_1,C_2$ have locality parameters $l_1,l_2,l_1^T,l_2^T$, respectively, 
($l_i^T$ is the maximum number of checks incident upon any bit in code $C_i$)
then
$\calC_{\times}$ has locality parameter $l_1 + l_2^T$ for $H_x$, and $l_2 + l_1^T$ for $H_z$.
\item
$
\delta_{\min}({\cal C}_{\times})
\geq
\min \left\{ \delta_{\min}({\cal C}_1),\delta_{\min}({\cal C}_2),\delta_{\min}({\cal C}_1^T),\delta_{\min}({\cal C}_2^T)\right\}
$
\item Let $r(\calC)$ denote the number of qubits in a code $\calC$.
Then
$
r({\cal C}_{\times}) = r({\cal C}_1) \cdot r({\cal C}_2) + r({\cal C}_1^T) \cdot r({\cal C}_2^T).
$
\end{enumerate}
\end{fact}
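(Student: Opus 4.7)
The three claims are logically separate, and I would handle them in order of increasing difficulty.

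For claim 1 (locality), the plan is to count row-weights of $H_x$ directly from the block decomposition in \defref{TZ}. A row indexed by a check $(e_1,v_2)\in E_1\times V_2$ decomposes into a contribution of weight $|\partial_1^T e_1|\le l_1$ from the $\partial_1\otimes I_{V_2}$ block and a contribution of weight $|\partial_2 v_2|\le l_2^T$ from the $I_{E_1}\otimes\partial_2^T$ block; these two supports lie in disjoint qubit blocks, so the row weight is bounded by $l_1+l_2^T$. The bound for $H_z$ follows by swapping the roles of the two input codes. Claim 3 is an immediate count: $|(V_1\times V_2)\sqcup(E_1\times E_2)|=n_1 n_2+m_1 m_2=r(\calC_1)r(\calC_2)+r(\calC_1^T)r(\calC_2^T)$, using the convention that $\calC_i^T$ has its bits indexed by the original checks of $\calC_i$.

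For the distance bound (claim 2), which is the substantive part, my plan is to translate the CSS structure into matrix equations on qubit vectors and then perform the case analyses for $Z$- and $X$-type logical operators separately. Writing a qubit vector as a pair $(A,B)\in\F_2^{V_1\times V_2}\oplus\F_2^{E_1\times E_2}$, unpacking the tensor products shows that $\ker H_x$ is cut out by the ``syndrome-balance'' equation $\partial_1 A+B\partial_2=0$, while $\Img H_z^T$ consists of the pairs $(Y\partial_2,\partial_1 Y)$ for $Y\in\F_2^{V_1\times E_2}$. Following Tillich-Z\'emor, the $Z$-distance is bounded below by $\min(\Delta_{\min}(\calC_1),\Delta_{\min}(\calC_2^T))$: a column of $A$ lying in $\ker\partial_1=\calC_1\setminus\{0\}$ contributes at least $\Delta_{\min}(\calC_1)$ to $|A|$, and a row of $B$ lying in $\ker\partial_2^T=\calC_2^T\setminus\{0\}$ contributes at least $\Delta_{\min}(\calC_2^T)$ to $|B|$. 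The completely symmetric analysis for $X$-type operators in $\ker H_z\setminus\Img H_x^T$ yields bounds $\Delta_{\min}(\calC_2)$ and $\Delta_{\min}(\calC_1^T)$; taking the overall minimum gives the claim.

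The main obstacle is proving that a minimum-weight element $(A,B)\in\ker H_x\setminus\Img H_z^T$ really does have either a column of $A$ in $\calC_1\setminus\{0\}$ or a row of $B$ in $\calC_2^T\setminus\{0\}$, rather than being some ``generic'' solution of the syndrome-balance equation avoiding these structures. The natural route is to pick a minimum-weight representative of the coset $(A,B)+\Img H_z^T$ and argue it must have the required structure; concretely, one shows that if neither a codeword column of $A$ nor a codeword row of $B$ exists, one can construct $Y\in\F_2^{V_1\times E_2}$ satisfying $A=Y\partial_2$ and $B=\partial_1 Y$ \emph{simultaneously}, certifying $(A,B)\in\Img H_z^T$ and contradicting the logical-operator assumption. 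This simultaneous-preimage construction uses the syndrome-balance identity $\partial_1 A=B\partial_2$ as the bridge between the row structure of $A$ and the column structure of $B$, and is really a K\"{u}nneth-type decomposition of the (co)homology of the tensor product of the two classical chain complexes. Extracting a single $Y$ that matches both equations, rather than a pair of independently chosen preimages, is where the real Tillich-Z\'emor content lies and warrants the most care in a full writeup.
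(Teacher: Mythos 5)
The paper does not actually prove this statement; \factref{TZ} is stated verbatim as a black-box import from \cite{TZ09}. So there is no internal proof to compare against, and your proposal is a from-scratch reconstruction rather than a different route to the same end.

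Your treatments of claims 1 and 3 are correct, modulo one cosmetic slip: the paper's stated convention in \secref{TZ-review} is that rows of $H_x,H_z$ are indexed by qubits and columns by checks, so the object you are weight-counting is a \emph{column} of $H_x$, not a row. The substance (two disjoint blocks contributing $|\partial_1^T e_1|\le l_1$ and $|\partial v_2|\le l_2^T$) is right. Your identifications $\ker H_x = \{(A,B): \partial_1 A + B\partial_2 = 0\}$ and $\Img H_z^T = \{(Y\partial_2,\partial_1 Y)\}$ are also consistent with \defref{TZ}.

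The genuine gap is in claim 2, and you have correctly located it but not closed it. Your dichotomy is: a minimum-weight coset representative $(A,B)$ of a nontrivial $Z$-logical class has either a column of $A$ in $\calC_1\setminus\{0\}$ or a row of $B$ in $\calC_2^T\setminus\{0\}$. The contrapositive you propose (``if neither exists, build a single $Y$ with $A=Y\partial_2$ and $B=\partial_1 Y$'') is \emph{not} something one can read off from a K\"unneth decomposition: K\"unneth identifies the homology as a tensor product, but says nothing about Hamming weights of minimum-weight coset representatives, which is the entire point of the lemma. What Tillich and Z\'emor actually do is an explicit weight-reduction argument: starting from a representative with no codeword column/row, they exhibit a stabilizer (an explicit choice of $Y$) whose addition strictly decreases weight, contradicting minimality — rather than producing a global $Y$ exactly cancelling $(A,B)$. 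Some side conditions on the rank of the parity-check matrices also appear in their Theorems 7 and 9 that your sketch does not engage with. As written, ``one shows that \ldots one can construct $Y$ \ldots simultaneously'' asserts the conclusion without supplying the weight-reduction step, so the distance bound is not yet proved. To turn this into a proof you would need to spell out that reduction (or quote it explicitly from \cite{TZ09}, as the paper does).
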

These logical operators of ${\cal C}_{\times}$ can assume very complex forms, due in part, to the fact that the rate of the code
scales like $r({\cal C}_1) \cdot r({\cal C}_2)$. Hence, the hypergraph product of codes with linear rate is linear itself,
i.e.~scales like $\Omega(|V|^2)$.

\subsection{Column-wise logical operators}\label{sec:column-logical}

A particularly interesting subset of the logical operators, which is a subgroup w.r.t.~addition modulo $\bbF_2$,
has a very succinct and useful form.
We exploit the structure of this group to inherit,  in some sense,
the classical property of local testability.  
\begin{fact}\label{fact:logicalshape}

\textbf{Group of logical operators isomorphic to the original code}

\noindent
For any $x\in {\cal C}_1$, and $y\notin {\cal C}_2^{\perp}$, the word 
\be
\left(\left(x \ot y\right)_{V_1\times V_2} , \mathbf{0}_{E_1\times E_2}\right)
 \in S_x^\perp - S_z
\ee
Similarly, for $x\notin \calC_1^\perp, y\in\calC_2$, 
\be\left(\left(x \ot y\right)_{V_1\times V_2} , \mathbf{0}_{E_1\times E_2}\right)
 \in S_z^\perp - S_x.
\ee
One can also show that
\begsub{cycles}
 (\mathbf{0}_{V_1\times V_2}, \calC_2^T \ot (\calC_1^{T\perp})^c) & \subset
 S_x^\perp - S_z \\
 (\mathbf{0}_{V_1\times V_2}, (\calC_2^{T\perp})^c \ot \calC_1^T) & \subset
 S_z^\perp - S_x 
\endsub
In particular, if ${\cal C}_1,{\cal C}_2,{\cal C}_1^T,{\cal C}_2^T$ 
are linear codes in which each bit appears at least once as $0$ and once as $1$ in some non-zero word,
then 
\begsub{special}
\left ((\calC_1 \ot \F_2^{V_2})_{V_1\times V_2}, \mathbf{0}_{E_1\times E_2}\right)
& \subset  S_x^\perp - S_z \\
\left ((\F_2^{V_1} \ot \calC_2)_{V_1\times V_2}, \mathbf{0}_{E_1\times E_2}\right)
& \subset  S_z^\perp - S_x
\endsub
\end{fact}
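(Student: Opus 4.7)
The plan is to leverage two standard facts about CSS codes: (a) $w\in S_x^\perp$ iff $H_x w=0$ (and symmetrically for $S_z^\perp$); and (b) for a linear subspace $U\subseteq\bbF_2^n$, $w\notin U$ iff there exists $u\in U^\perp$ with $\langle u,w\rangle=1$. Each assertion in the fact then decomposes into two steps: verify that the candidate word lies in the appropriate $\ker H$, and produce an explicit witness in the dual kernel that pairs to $1$ with it.

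For assertion 1 the kernel check is the direct tensor computation
\[
H_x\bigl(x\otimes y,\,\mathbf{0}\bigr) = (\partial_1\otimes I_{V_2})(x\otimes y)+(I_{E_1}\otimes\partial_2^T)\mathbf{0} = (\partial_1 x)\otimes y,
\]
which vanishes iff $x\in\calC_1=\ker\partial_1$. For the witness I would propose $u=(x'\otimes y'',\mathbf{0})$ with $y''\in\calC_2$; then $H_z u = x'\otimes(\partial_2 y'')=0$, so $u\in S_z^\perp$, and the pairing $\langle u, w\rangle=\langle x',x\rangle\langle y'',y\rangle$ can be forced to $1$ by choosing $y''\in\calC_2$ with $\langle y'',y\rangle=1$ (possible precisely because $y\notin\calC_2^\perp$) and $x'$ concentrated on any coordinate of the support of $x$. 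Assertion 2 follows by the symmetric argument, swapping the two tensor factors and the roles of $X$/$Z$.

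For the cycle-type assertions 3--4, the same template applies with $\partial_1^T,\partial_2^T$ in the roles of $\partial_1,\partial_2$: for $w=(\mathbf{0},a\otimes b)$ with $a\in\bbF_2^{E_1}\setminus\calC_1^{T\perp}$ and $b\in\calC_2^T=\ker\partial_2^T$, I would check $H_x w = a\otimes(\partial_2^T b)=0$ and then build a witness $u=(\mathbf{0},a'\otimes b')$ with $a'\in\calC_1^T$, choosing $a',b'$ so that $\langle a',a\rangle=\langle b',b\rangle=1$. Existence of such $a'$ is guaranteed by $a\notin\calC_1^{T\perp}$ and of $b'$ by $b\neq 0$.

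The ``in particular'' assertions 5--6 extend these to arbitrary elements of the full spaces $\calC_1\otimes\bbF_2^{V_2}$ and $\bbF_2^{V_1}\otimes\calC_2$ by linearity. The main obstacle here is that certain simple tensors $x\otimes y$ with $y\in\calC_2^\perp$ genuinely do lie in $S_z$ (explicitly, if $\partial_2^T u = y$ then $V=xu^T$ has $V\partial_2=x\otimes y$ and columns in $\calC_1$), so the correct reading of the claim is at the level of cosets in the logical quotient $S_x^\perp/S_z$. The stated support conditions on the four underlying codes are tailored exactly to ensure that no nonzero element of $\calC_1\otimes\bbF_2^{V_2}$ collapses to zero modulo $S_z$; once that is established, the witness construction from assertion 1 extends by bilinearity to the whole tensor-product space.
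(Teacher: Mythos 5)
Your treatment of assertions 1--4 is correct, and it takes a genuinely different route from the paper, which simply defers to \cite{TZ09} with no argument. Your method --- verify the kernel condition $H_x w = 0$ (resp.\ $H_z w = 0$) by a direct tensor computation, then exhibit an explicit witness in the dual kernel that pairs to $1$ with $w$ --- is a clean, self-contained argument, and the choice of witnesses is forced by the structure: a $V_1\times V_2$-supported word is killed by testing against $V_1\times V_2$-supported words in the opposite stabilizer's kernel, with the $\calC_2$ (resp.\ $\calC_1$) and $\calC_1^T$ factors exactly providing the orthogonality to the other block. You also quietly corrected what appears to be a tensor-factor-order typo in the paper's assertions 3--4: to live in the $E_1\times E_2$ block and match \propref{tz1}, the set should read $(\calC_1^{T\perp})^c\otimes\calC_2^T$ rather than $\calC_2^T\otimes(\calC_1^{T\perp})^c$.

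For the ``in particular'' assertions 5--6, however, your proposed repair does not hold up. You correctly observe that nonzero simple tensors $x\otimes y$ with $x\in\calC_1$, $y\in\calC_2^\perp$ land in $S_z$, but you then claim the stated support hypotheses ``ensure that no nonzero element of $\calC_1\otimes\bbF_2^{V_2}$ collapses to zero modulo $S_z$.'' That is false. Chasing the definitions, $(w,\mathbf{0})\in S_z$ iff $(w,\mathbf{0})=H_z^T u$, which forces $(\partial_1\otimes I_{E_2})u = 0$, hence $u\in\calC_1\otimes\bbF_2^{E_2}$, and then $w=(I_{V_1}\otimes\partial_2^T)u\in\calC_1\otimes\Img\partial_2^T=\calC_1\otimes\calC_2^\perp$. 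So the kernel of the natural map $\calC_1\otimes\bbF_2^{V_2}\to S_x^\perp/S_z$ is exactly $\calC_1\otimes\calC_2^\perp$, which is nonzero whenever $\calC_1\neq\{0\}$ and $\partial_2\neq 0$ --- entirely independent of the ``each bit appears once as $0$ and once as $1$'' condition. The defensible version of assertion 5 is that $\calC_1\otimes(\bbF_2^{V_2}/\calC_2^\perp)$ embeds in the logical quotient $S_x^\perp/S_z$, not that $(\calC_1\otimes\bbF_2^{V_2},\mathbf{0})$ avoids $S_z$. Fortunately the paper only invokes \factref{logicalshape} through \propref{tz1}, where the sole $V\times V$ logical it needs is $b_1^z=\mathbf{1}_{V\times v_1}=\mathbf{1}_V\otimes\mathbf{1}_{v_1}$, a clean special case of your assertion 1 with $x=\mathbf{1}\in\calC$ and $y=\mathbf{1}_{v_1}\notin\calC^\perp$ (since $\calC^\perp$ consists of even-weight vectors), so the imprecision in 5--6 does not propagate to the main theorems.
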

\noindent
The proof of this fact is straightforward and can be found in
\cite{TZ09}.

\subsection{The Hypergraph Product of a Connected Graph}
\label{sec:TZ-expander}

\begin{proposition}\label{prop:prodexp}

\textbf{The hypergraph product of a connected graph}

\noindent
Let $G = (V,E)$ denote a $d$-regular connected graph on $n$ vertices.
Let ${\cal C} = {\cal C}(G)$ denote the repetition code on $n$ bits defined by treating the edges of $G$ as equality constraints.
Let ${\cal C}_{\times}(G)$ denote the hypergraph product of ${\cal C} \times_{TZ} {\cal C}$.
Then:
\begin{enumerate}
\item
Denote $|V| = n$, $|E| = m = dn/2$, and so
$|V\times V| = n^2, |E\times E| = d^2n^2/4, |V\times E| = |E\times V| = dn^2/2$.
The number of qubits is $N = (1+d^2/4)n^2$ and the number of checks is $dn^2$.
\item
${\cal C}_{\times}$ is a quantum code 
on the space of $\bbF_2^{V\times V}\oplus\bbF_2^{ E\times E} = \F_2^{N}$,
constrained by the $d+2$-local checks from the columns of $\{H_x,H_z\}$.
\item
The following set of vectors, indexed by $v\in V,e\in E$,  generates $S_z$, 
\be\label{eq:sz}
s_z(v,e) =  H_z^T(v\ot e) = v \otimes \partial^Te + \partial v \otimes e
\ee
%\{ (v,a), (v,b), (e_1(v),e), (e_2(v),e), (e_3(v),e) \},
% where $(v,a),(v,b)\in V\times V$ and $(e_1(v),e), (e_2(v),e), (e_3(v),e)\in E\times E$ and
%$e$ is the edge of $G$ corresponding to the vertex pair $(a,b)$, and
%$e_i(v)$ for $i\in \{1,2,3\}$ are the $3$ edges incident on $v$ in $G$.
Likewise $S_x$ is generated by the vectors
\be \label{eq:sx}
s_x(e,v) = H_x^T(e\ot v) = \partial^Te\otimes v + e\ot\partial v. \ee
%s_x(e,v) = \{(a,v), (b,v), (e,e_1(v)), (e,e_2(v)), (e,e_3(v))\}.
\item
$\dim(S_x^{\perp} / S_z) = 1 + \dim({\cal C}^T)^2$.
%In particular, $b_1\in S_x^{\perp} - S_z$.  
This follows from Proposition 14 in \cite{TZ09}.  

\item
The distance of the code is 
given by the minimum of the distance of the code ${\cal C}$ and the transposed code ${\cal C}^T$.
\end{enumerate}
\end{proposition}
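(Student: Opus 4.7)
The plan is straightforward specialization of the Tillich--Z\'{e}mor construction (\defref{TZ} and \factref{TZ}) to the case $\calC_1 = \calC_2 = \calC(G)$, since every item essentially unfolds the definitions once and combines them with the structure of the repetition code on a $d$-regular connected graph.

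First, I would verify Part 1 by pure counting. Since $G$ is $d$-regular on $n$ vertices, $|E| = dn/2$, so the qubit set $(V\times V)\cup(E\times E)$ has size $n^2 + d^2 n^2/4 = (1+d^2/4)n^2$ and the check set $(V\times E)\cup(E\times V)$ has size $2 \cdot n \cdot (dn/2) = dn^2$, matching the stated totals.

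For Parts 2 and 3, I would apply \factref{TZ}(1) and \defref{TZ} directly. Each equality constraint of the repetition code involves exactly two bits, so the locality parameter is $l = 2$; since $G$ is $d$-regular, each bit appears in exactly $d$ checks, so $l^T = d$. Consequently both $H_x$ and $H_z$ have weight $l + l^T = d+2$, which together with the block form in \defref{TZ} gives Part 2. For Part 3, I would expand $H_z^T$ applied to the standard basis element $v \otimes e \in \bbF_2^{V \times E}$ block by block: the first block contributes $(I_V \otimes \partial^T)(v\otimes e) = v \otimes \partial^T e \in \bbF_2^{V \times V}$, and the second block contributes $(\partial \otimes I_E)(v\otimes e) = \partial v \otimes e \in \bbF_2^{E\times E}$, recovering formula \eq{sz}. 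An analogous calculation for $H_x^T$ yields \eq{sx}. Because $\{v \otimes e\}_{v,e}$ spans the check space, the images span $S_z$ (resp.~$S_x$).

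For Part 4, I would invoke \factref{TZ}(3), interpreted as counting the dimension of the logical space $\dim(S_x^\perp/S_z) = \dim(\calC_1)\dim(\calC_2) + \dim(\calC_1^T)\dim(\calC_2^T)$. Since $G$ is connected, $\calC = \ker\partial$ is the repetition code $\Span\{\mathbf{1}_V\}$, of dimension $1$, so the first term equals $1$; the second term equals $\dim(\calC^T)^2$. (Concretely, $\calC^T = \ker\partial^T$ is the $\bbF_2$-cycle space of $G$, of dimension $m - n + 1$, though this explicit value is not needed for the statement.) For Part 5, \factref{TZ}(2) specialized to $\calC_1 = \calC_2 = \calC$ gives the lower bound $\delta_{\min}(\calC_\times) \geq \min\{\delta_{\min}(\calC), \delta_{\min}(\calC^T)\}$; the matching upper bound comes from the column-wise logical operators in \factref{logicalshape}, specifically strings of the form $(x \otimes \mathbf{1}_{v_2^*}, \mathbf{0})$ with $x \in \calC$ and $(\mathbf{0}, y \otimes \mathbf{1}_{e_1^*})$-type operators with $y \in \calC^T$, which exhibit representatives of weight $\Delta_{\min}(\calC)$ and $\Delta_{\min}(\calC^T)$ respectively.

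The entire proof is mechanical and I do not foresee a real obstacle. The only place requiring genuine care is keeping straight the indexing of the $V \times V$ and $E \times E$ qubit blocks against the $V \times E$ and $E \times V$ check blocks, and correctly reading \factref{TZ}(3) as a statement about the dimension of the logical code rather than about the number of physical qubits.
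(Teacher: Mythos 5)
Your proof is correct and takes essentially the same route the paper takes implicitly: the proposition is a direct unfolding of Definition~\ref{def:TZ}, Fact~\ref{fact:TZ}, Fact~\ref{fact:logicalshape} and Proposition~14 of~\cite{TZ09} applied to $\calC_1=\calC_2=\calC(G)$, and the paper gives no separate proof beyond the in-line pointer. Your block-by-block computation of $H_z^T(v\otimes e)$ and $H_x^T(e\otimes v)$, the $l+l^T=d+2$ locality count, and the reading of Fact~\ref{fact:TZ}(3) as $\dim\calC_1\dim\calC_2+\dim\calC_1^T\dim\calC_2^T$ all check out. Two tiny points worth a sentence if you write this out in full: (i) for the $\calC^T$-side upper bound via $(\mathbf{0},\,y\otimes\mathbf{1}_{e^*})$ you need $\mathbf{1}_{e^*}\notin\calC^{T\perp}=\operatorname{im}\partial$, i.e.\ $e^*$ not a bridge, which holds whenever $G$ has a cycle (automatic for connected $d$-regular with $d\geq2$); and (ii) Fact~\ref{fact:TZ}(2) as typeset uses $\delta_{\min}$ but must be read as absolute distance $\Delta_{\min}$ (as \cite{TZ09} states it and as the paper itself uses it in \secref{TZ-expander}) for the bound to be nontrivial.
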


We can also specialize our characterization of logical operators from \factref{logicalshape}
to the repetition code with 2-bit check operators.
\begin{proposition}\label{prop:tz1}
Let ${\cal C}_\times = {\cal C}\times {\cal C} = {\cal C}_{\times}(G)$, where $G$ is a connected graph,
and ${\cal C}(G)$ is the repetition code constrained by parity checks corresponding to the edges of $G$.
There exists a spanning set ${\cal B}_z$ of $S_x^{\perp}$, and a spanning set $\calB_x$ of
$S_z^\perp$, as follows:
\begsub{spanning}
{\cal B}_z &:= \{ b_1^z\} \cup \{ e \ot c\}_{e\in E,c\in {\cal C}^T} \cup S_z \\
{\cal B}_x &:= \{ b_1^x\} \cup \{ c \ot e\}_{e\in E,c\in {\cal C}^T} \cup S_x 
\endsub
where 
${\cal C}^T = \ker \partial^T$ denotes the linear span of all indicator vectors of edges
corresponding to cycles in $G$.
\end{proposition}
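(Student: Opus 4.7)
The plan is to show $\calB_z$ spans $S_x^\perp$ by first verifying containment and then matching dimensions against the known value $\dim(S_x^\perp/S_z) = 1+(\dim \calC^T)^2$ from \propref{prodexp}(4); the claim for $\calB_x$ will follow by swapping the two blocks. I work in the decomposition $\F_2^{V\times V}\oplus \F_2^{E\times E}$ and view elements as pairs $(M,N)$ of matrices. From \eq{sz}, a general element of $S_z$ has the form $(A\partial,\partial A)$ for some $A\in \F_2^{V\times E}$, and the defining condition $(M,N)\in S_x^\perp$ unpacks (via the inner product against $s_x(e',v')$ from \eq{sx}) to the matrix equation $\partial M = N\partial$.

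First I would verify containment. $S_z\subseteq S_x^\perp$ is the CSS condition. For $(\mathbf{0},\, e\otimes c)$ with $e\in E$ and $c\in \calC^T$, only the $E\times E$ block is nonzero, and its inner product against $s_x(e',v')$ is $\langle e,e'\rangle \langle c,\partial v'\rangle$; the second factor vanishes since $c\in\ker\partial^T$ gives $\sum_{f\ni v'} c_f = 0$. The generator $b_1^z$ will be exhibited below.

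For the dimension count, the span of $\{(\mathbf{0},\, e\otimes c): e\in E,\, c\in \calC^T\}$ equals the subspace $\{(\mathbf{0},N): N\in \F_2^E\otimes \calC^T\}$, of dimension $|E|\cdot \dim \calC^T$. Its intersection with $S_z$ consists of $(A\partial,\partial A)$ with $A\partial=0$; this forces every row of $A$ to lie in $\ker\partial^T=\calC^T$, so $A\in \F_2^V\otimes \calC^T$. On such $A$ the map $A\mapsto \partial A$ acts tensor-wise on the first factor with image $\Img(\partial)\otimes \calC^T$ of dimension $(|V|-1)\dim\calC^T$, using $\ker\partial=\{\mathbf{0},\mathbf{1}\}$ for a connected graph. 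Hence modulo $S_z$ these generators contribute $|E|\dim\calC^T - (|V|-1)\dim\calC^T = (m-n+1)\dim\calC^T = (\dim\calC^T)^2$ independent directions, invoking $\dim\calC^T = m-n+1$.

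To supply the last dimension, I set $b_1^z := (\mathbf{1}\otimes \mathbf{1}_v,\, \mathbf{0})$ for any fixed vertex $v$; by \factref{logicalshape} this lies in $S_x^\perp\setminus S_z$ since $\mathbf{1}\in \calC$ and $\mathbf{1}_v\notin \calC^\perp$. To separate $b_1^z$ from the other generators, right-multiplying $\partial M = N\partial$ by the all-ones column $\mathbf{1}\in \F_2^V$ and using $\partial\mathbf{1}=\mathbf{0}$ shows that the row-sum vector $M\mathbf{1}$ lies in $\ker\partial=\{\mathbf{0},\mathbf{1}\}$, so the common row sum is a well-defined $\F_2$-linear functional on $S_x^\perp$. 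It vanishes on $S_z$ (rows of $A\partial$ have even weight) and on the $e\otimes c$ generators (whose $V\times V$ block is zero) but equals $1$ on $b_1^z$, exhibiting the missing direction and completing the dimension count $1+(\dim\calC^T)^2$. The only mildly subtle point is identifying $S_z\cap \{(\mathbf{0},N): N\in \F_2^E\otimes \calC^T\}$ correctly; the claim for $\calB_x$ then follows verbatim by swapping $X$ with $Z$ and the $V\times V$ block with the $E\times E$ block.
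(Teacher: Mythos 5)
Your proof is correct and it is a genuinely self-contained linear-algebra argument, whereas the paper's proof of \propref{tz1} simply cites Lemma~17 and Proposition~14 of \cite{TZ09} for containment and for the dimension count. Your strategy --- unpack $S_z=\{(A\partial,\partial A)\}$, characterize $S_x^\perp$ as $\{(M,N):\partial M=N\partial\}$, check the three generating families lie there, then count independent directions modulo $S_z$ against the known value $1+(\dim\calC^T)^2$ from \propref{prodexp}(4) --- does fill in the substance of \cite{TZ09}'s Lemma~17 that the paper treats as a black box, though you still import the dimension identity $\dim(S_x^\perp/S_z)=1+(\dim\calC^T)^2$ (Proposition~14 of \cite{TZ09}). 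The computation of $W\cap S_z$ as $\Img(\partial)\otimes\calC^T$, and the ``common row sum'' functional $M\mapsto M\mathbf{1}\in\ker\partial=\{\mathbf{0},\mathbf{1}\}$ that separates $b_1^z$ from $W+S_z$, are both correct, and correctly use connectedness of $G$ via $\ker\partial=\{\mathbf{0},\mathbf{1}\}$ and $\dim\calC^T=m-n+1$. This buys a reader who does not want to track down \cite{TZ09} a complete argument.

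One small imprecision in the last sentence: the symmetry taking the $\calB_z$-case to the $\calB_x$-case is \emph{not} ``swapping the two blocks'' $(M,N)\mapsto(N,M)$ --- that is not even dimensionally sensible since $M$ is $V\times V$ and $N$ is $E\times E$. The right symmetry is \emph{transposition within each block}, $(M,N)\mapsto(M^T,N^T)$, i.e.~swapping the two tensor factors of each block (equivalently, exchanging the roles of $\calC_1$ and $\calC_2$ in the hypergraph product, which here are both $\calC(G)$). Under this map $\partial M=N\partial$ becomes $M^T\partial^T=\partial^T N^T$, which is exactly the $S_z^\perp$ condition; $e\ot c$ becomes $c\ot e$; and $\mathbf{1}\ot\mathbf{1}_v$ becomes $\mathbf{1}_v\ot\mathbf{1}$. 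Alternatively, rerun the argument verbatim with $\partial$ replaced by $\partial^T$ and use the column-sum functional $M\mapsto M^T\mathbf{1}$ instead of the row-sum functional. Either fix is immediate, so this is a phrasing slip rather than a gap.
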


\begin{proof}
  This follows from \cite{TZ09} as follows.  From the proof of Lemma 17 of \cite{TZ09} we
  have that $b_1^z$ and $e\ot c$, for each $c\in {\cal C}^T$ are in $S_x^{\perp} - S_z$.
  By Proposition 14 of \cite{TZ09} it follows that these words, with $S_z$, span the
  entire $S_x^{\perp}$ space.  The argument for $\calB_x$ is the same.
\end{proof}

\subsubsection{Fractal Structure}

Another important property of the hypergraph product of a connected graph, is that  the
hypergraph product exhibits a fractal structure as follows:

\begin{proposition}\label{prop:fractal}
  Let $G = (V,E)$ be some graph, and let ${\cal C}_\times(G)$ denote the hypergraph product of the repetition code induced by equality constraints of $E$, with itself.  Let $V_l\subseteq V, E_l\subseteq E$ denote some subsets.  Then there exists a graph $G' = (V_l, E_l \cap V_l\times V_l)$ such that ${\cal C}_{\times}(G')$ is supported on $V_l \times V_l \cup E_l\times E_l$.
\end{proposition}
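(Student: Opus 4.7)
The statement is essentially a direct consequence of Definition~\ref{def:TZ} applied to a restricted graph. My plan is to explicitly construct $G'$ and verify by unfolding definitions that the hypergraph product of the repetition code on $G'$ lives on the claimed qubit set $V_l \times V_l \cup E_l \times E_l$.

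First I would let $E' := E_l \cap (V_l \times V_l)$, namely the set of edges of $E_l$ both of whose endpoints lie in $V_l$. This is a well-defined edge set on $V_l$, so $G' := (V_l, E')$ is a valid graph. The repetition code $\calC(G')$ is then defined by the equality constraints along $E'$, with edge-vertex incidence operator $\partial' : \bbF_2^{V_l} \to \bbF_2^{E'}$ obtained by restricting the original boundary map $\partial$ to the appropriate coordinates. In particular $\partial'$ is well-defined because an edge $e \in E'$ has both endpoints in $V_l$, so the nonzero entries of the corresponding row of $\partial$ are indexed by elements of $V_l$.

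Next, applying Definition~\ref{def:TZ} to $\calC_\times(G') = \calC(G') \times_{TZ} \calC(G')$, the resulting code is a CSS code whose qubits are indexed precisely by $V_l \times V_l \cup E' \times E'$, with check matrices
\be H'_x = \left( \partial' \ot I_{V_l} \,\middle|\, I_{E'} \ot \partial'^T \right),
\qquad H'_z = \left( I_{V_l} \ot \partial' \,\middle|\, \partial'^T \ot I_{E'} \right).\ee
Since $E' \subseteq E_l$ by construction, we have $E' \times E' \subseteq E_l \times E_l$. Hence the full qubit set of $\calC_\times(G')$, namely $V_l \times V_l \cup E' \times E'$, is contained in $V_l \times V_l \cup E_l \times E_l$, which is exactly what the proposition asserts.

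I do not anticipate any substantive obstacle: the statement is essentially syntactic, and the only mild subtlety is parsing the notation $E_l \cap V_l \times V_l$ as $E_l \cap (V_l \times V_l)$, i.e.~the edges surviving after restriction to the induced subgraph on $V_l$. The value of the proposition lies not in its difficulty but in providing the ``fractal'' building block for later arguments that zoom into a small residual region of the Tillich--Z\'emor construction and still recognize it as a smaller instance of the same product code family.
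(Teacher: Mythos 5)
Your proposal is correct and takes essentially the same route as the paper: define $G'$ by restricting to the induced subgraph on $V_l$ intersected with $E_l$, and observe by inspection of \defref{TZ} that the resulting product code sits inside the original. The one cosmetic difference is that the paper's own proof records the containment of the \emph{check} sets, $E' \times V' \cup V' \times E' \subseteq E\times V \cup V\times E$, rather than of the qubit sets, because that is the form in which \propref{fractal} is actually invoked in Part~1 of the \NLETS{} proof (``the full hypergraph product contains all parity checks of $\calC_\times^{\text{(low)}}$''); both containments follow immediately from $V'\subseteq V$, $E'\subseteq E$ and either serves.
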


\begin{proof}
By definition, the checks of ${\cal C}_{\times}$ are the Cartesian product $E\times V$
for $S_x$ and $V\times E$ for $S_z$.
Define $G' = (V',E')$ as in the statement of the proposition, i.e.~with 
$E'$ the set of edges in $E_l$ that have both endpoints in $V_l$.
Hence $E'\subseteq E$, $V'\subseteq V$, and so in particular
$E' \times V' \cup V'\times E' \subseteq E\times V\cup V\times E$.
\end{proof}

\subsection{The Hypergraph Product of an Expander Graph}

In this section, we consider the hypergraph product ${\cal C}_\times(G) = {\cal C}(G) \times {\cal C}(G)$, where $G$ is a $d$-regular
Ramanujan expander graph.
We note that while the minimal distance of ${\cal C}$ is exactly $n$, as it is the repetition code, the 
minimal distance of ${\cal C}^T$ is much smaller, i.e.~$O(\log(n))$ - given by the minimum length
cycle in the expander graph. Hence
$$
\delta_{\min}(\calC_\times) = \min\{\delta_{\min}({\cal C}), \delta_{\min}({\cal C}^T)\} = \min\{O(n), O(\log(n))\} = O(\log(n)).
$$

\subsubsection{Comparison to the Toric Code}
One can first compare ${\cal C}_\times(G)$ to the Toric Code.
The Toric Code can be seen as the hypergraph product of the repetition code, with equality
constraints in a cycle, i.e.~$x_1=x_2, x_2=x_3,\ldots, x_n=x_1$.  (By contrast our code
has equality constraints $x_i=x_j$ for $(i,j)$ running over the set of edges in an
expander graph.)
It follows from the hypergraph product, that the distance of such a code is precisely $n$
(out of $n^2$ total qubits), which is larger than the $O(\log n)$ minimum distance of our
code. However, the toric code also has low-error trivial states, since we can delete an
$O(\eps)$ fraction of constraints and leave it disconnected into blocks of $1/\eps^2$
qubits.

\subsubsection{Localized Minimal Distance}

As stated above we have $\delta_{min}({\cal C}_\times) = O(\log(n))$.
However,
not all logical qubits are equally protected: we
focus on the logical qubit with distance $n$, meaning that
all elements of $C_1^Z$ and $C_1^X$ have weight $\geq n$.  
It turns out, that for this logical qubit, an even stronger property is true:
we will show that any element of $C_1^Z$ or $C_1^X$
must have weight $\Omega(n)$ in some row or column of $V\times V$ or $E\times E$.  
In other words, the minimal distance of this logical qubit is manifested {\it locally}:

\begin{lemma}\label{lem:localized-dist}

\textbf{Locally-manifested minimal distance}

\noindent
Let ${\cal C}_{\times}(V'\times V' \cup E'\times E') = {\cal C}(G')$ denote the hypergraph product of  
a graph $G' = (V',E')$,
which is a connected $\eps$-residual graph of a Ramanujan graph of degree $d$.
If $d\geq 14$ and  $\eps\leq \frac{1}{6200d}$ then
\be 
\forall w\in C_1^Z \quad
\left(\exists v\in V' \quad
|w_{V'\times v}| \geq \frac 12 n'
\qquad\text{or}\qquad
\exists e\in E'\;\;|w_{E'\times e}| \geq \frac{3}{8d} n'\right).\ee
where $n'=|V'|$.  Similarly,
\be 
\forall w\in C_1^X \quad
\left(\exists v\in V' \quad
|w_{v\times V'}| \geq \frac 12 n'
\qquad\text{or}\qquad
\exists e\in E'\;\;|w_{e\times E'}| \geq \frac{3}{8d} n'\right).
\ee
\end{lemma}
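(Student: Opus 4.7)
My approach is a proof by contradiction, pivoting on the Cheeger-type expansion of the residual graph $G'$ from \corref{cheeger1} and the structural description of $S_z^\perp$ from \propref{tz1}. Assume for contradiction that some $w\in C_1^Z$ has $|w_{V'\times v}|<n'/2$ for every $v\in V'$ and $|w_{E'\times e}|<\tfrac{3n'}{8d}$ for every $e\in E'$. Using \propref{tz1} I would expand
\ba
w = b_1^x + s + \sum_{c\in\calC^T,\,e\in E'}\beta(c,e)\,(c\otimes e),\qquad s\in S_x.
\ea
Since each $c\otimes e$ lives in $E'\times E'$, the $V\times V$ block is $w^{VV}=b_1^x^{VV}+s^{VV}$. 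Writing out the $S_x$-generators $s_x(e,v)=\partial^Te\otimes v+e\otimes\partial v$, each column $C^{VV}_v$ of $w^{VV}$ has the explicit form $\delta_{v_0}+\partial^T\alpha_v$ for some $\alpha_v\in\F_2^{E'}$; in particular every column of $w^{VV}$ has odd Hamming weight, consistent with $\langle w,b_1^z\rangle=1$.

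Next, I would read off the CSS local identity between blocks. The relation $\langle w,s_z(v,e)\rangle=0$ rearranges, for every edge $e=(u,u')\in E'$, to
\ba
C^{VV}_u + C^{VV}_{u'} = \partial^T\bigl(C^{EE}_e\bigr)\in\F_2^{V'},
\ea
so the hypothesis $|C^{EE}_e|<\tfrac{3n'}{8d}$ forces neighboring columns of $w^{VV}$ to agree up to a vector of weight $<\tfrac{3n'}{4d}$. Transposing perspective, for each $v'\in V'$ set $f_{v'}(v):=C^{VV}_v(v')$; then the set of $G'$-edges where $f_{v'}$ disagrees on endpoints has size $|\bigl(\partial^T C^{EE}_e\bigr)(v')|$ summed over $e$, and a direct double count yields
\ba
\sum_{v'\in V'}\mathrm{viol}_{G'}(f_{v'}) \;\leq\; 2|w^{EE}| \;<\; \tfrac{3n'^2}{8},
\ea
i.e., on average over $v'$ at most a fraction $\tfrac{3}{4d}$ of edges of $G'$ are violated.

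Now I would apply \corref{cheeger1}: since $G'$ is a connected residual Ramanujan graph of degree $d\geq 14$, any subset of $V'$ of fractional weight in $[100\eps',\tfrac12]$ has $G'$-boundary at least $3$ times its size, so a Markov argument on the displayed sum pins each $f_{v'}$ into a narrow window around $\mathbf{0}_{V'}$ or $\mathbf{1}_{V'}$ except for a controlled subset of $v'$'s. The contradiction hypothesis $|C^{VV}_v|<n'/2$ means $\sum_{v'}|f_{v'}|=|w^{VV}|<n'^2/2$, so too many $v'$ cannot sit in the close-to-$\mathbf{1}$ window; almost every $f_{v'}$ is therefore close to $\mathbf{0}_{V'}$. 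Feeding this back into the decomposition, the row $v_0$ of $w^{VV}$ equals $\mathbf{1}_{V'}+\sum_{e\ni v_0}\alpha(e,\cdot)$, so having essentially every $f_{v'}$ small forces $\sum_{e\ni v_0}\alpha(e,\cdot)\approx\mathbf{1}_{V'}$. But these same $\alpha$'s appear in $s^{EE}$ as $\alpha_u+\alpha_{u'}$ on column $(u,u')$ (the $(c\otimes e)$ corrections lie in $\ker\partial^T$ and cannot offset this at the $\partial^T$ level), and a final weighted count shows the total $\alpha$-mass required near $v_0$ pushes some $|C^{EE}_e|$ above $\tfrac{3n'}{8d}$, the desired contradiction. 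The $C_1^X$ case is dual, obtained by exchanging the roles of rows and columns and of the two blocks.

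The main obstacle is matching the quantitative constants. The $3/(8d)$ bound for $E\times E$ columns has to be tight enough that the averaged fractional violation $\tfrac{3}{4d}$ falls below the expansion threshold of size $100\eps'$ at which the Cheeger factor $3$ kicks in, and the choices $d\geq 14$ together with $\eps\leq 1/(6200d)$ (so that $\eps'\leq 31\eps$ by \eq{d=14}) leave a narrow but sufficient margin. The trickiest quantitative point is the final step, where the cancellation budget spent by $s^{VV}$ on row $v_0$ of $b_1^x^{VV}$ must be tied via the shared $\alpha$-coefficients to a lower bound on the aggregate column-weight of $s^{EE}$; any looseness there breaks the $3/(8d)$ threshold that the lemma claims.
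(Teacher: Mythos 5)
Your route shares the Cheeger input (\corref{cheeger1}) and the general spanning-set decomposition with the paper, but the central combinatorial step is quite different, and as written it does not close. The paper writes $w=\mathbf{1}_{V'\times v_1}+s+c$ via \propref{tz1}, with the stabilizer part parameterized by a tensor $\alpha\in\F_2^{V'\times E'}$ and $c$ chosen of \emph{minimal weight modulo the stabilizer}; the key move is a parity (handshaking) argument: classify each $e\in E'$ as light, medium, or heavy by $|\alpha_{V'\times e}|$. If no medium edge exists, the column-weight hypothesis forces $v_1$ to have an odd number of incident heavy edges and every other vertex an even number, contradicting $\sum_v\xi(v)$ being even. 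The resulting medium edge $e_0$, combined with \corref{cheeger1} and the minimality of $c$, yields $|w_{E'\times e_0}|\geq\tfrac{3}{8d}n'$ with no averaging whatsoever. Your version replaces this step by an average over the rows $f_{v'}$, and that averaging is exactly where the argument leaks.

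Concretely: (i) the inference ``almost every $f_{v'}$ is therefore close to $\mathbf{0}$'' does not follow from $\sum_{v'}|f_{v'}|=|w^{VV}|<n'^2/2$; that inequality only caps the mass of near-$\mathbf{1}$ rows at roughly one half, and a $50$--$50$ split between near-$\mathbf{0}$ and near-$\mathbf{1}$ rows is fully consistent with all the constraints you have established. (ii) The Markov/Cheeger step is quantitatively far too weak to produce a ``narrow window.'' From $\sum_{v'}\mathrm{viol}_{G'}(f_{v'})<\tfrac{3n'^2}{8}$ and \corref{cheeger1} one can deduce $\sum_{v'}\min(|f_{v'}|,n'-|f_{v'}|)\leq\tfrac{n'^2}{8}+100\eps'n'^2$, so a Markov bound only says a typical row has $\min(|f_{v'}|,n'-|f_{v'}|)$ of order a constant times $n'$, not a $100\eps'$-fraction; if you instead demand $\min\leq100\eps' n'$, the Markov bound on the fraction of bad rows is $1/(800\eps')$, which with the lemma's constants ($\eps'$ on the order of $10^{-4}$) exceeds $1$ and is vacuous. (iii) The closing step---converting ``rows near $\mathbf{0}$'' into a large column of $w^{EE}$---is only asserted, and it has to reconcile constraints that live on different slices of the $\alpha$-tensor (sums over $e\ni v_0$ of $\alpha_{e,\cdot}$ versus differences $\alpha_u+\alpha_{u'}$ along a fixed edge); as you flag yourself, there is no quantitative slack in $\tfrac{3}{8d}$ to absorb looseness here. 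I would therefore treat this as an incomplete sketch rather than an alternative proof: the medium-edge parity argument is precisely the ingredient that lets the paper avoid the averaging loss that sinks steps (i)--(iii).
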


\begin{proof}
Consider $w\in C_1^Z$.  The proof for $C_1^X$ is essentially the same so we omit it.
Since ${\cal C}_{\times} = {\cal C}_\times(G')$ for a connected graph $G'= (V',E')$ then by
\propref{tz1}, we can represent $w$ as the following sum:
\be w = b_1^z + s +c ,\quad b_1^z = \mathbf{1}_{V'\times v_1}, \quad s \in S_z, \quad c \in (\mathbf{0}_{V'\times V'}, \F_2^{E'} \ot \ker \partial^T),
\ee
and we assume w.l.o.g. that $c$ has minimal weight modulo $S_z$.
By the hypergraph product we can write $s$ as
\be\label{eq:s-alpha}
s 
= \sum_{v\in V', e\in E'} \alpha_{ve} s_z(v,e)
%= \sum_{v\in V', e\in E'} \alpha_{ve} (v \ot \partial^T e + \partial v \ot e)
 = (I_{V'} \ot \partial^T + \partial \ot I_E') \alpha,
% + (I_{V'} \ot \ker \partial^T) \beta,
\ee
for some $\alpha\in\F_2^{V'\times E'}$. %, \beta\in \F_2^{E'\times V'}$.
Let us focus initially on the $V'\times V'$ block:
\be w_{V'\times V'} = \mathbf{1}_{V'\times v_1} + (I \ot \partial^T)\alpha.\ee
Hence,
\be 
\forall v\in V' \quad
w_{V'\times v} = \delta_{v,v_1} \mathbf{1}_{V'} + \bigoplus_{e\in\partial v} \alpha_{V'\times e}
\label{eq:w-column}.
\ee
Suppose that for all $v\in V'$, this column has low weight, i.e.~
\be\label{eq:lowv}
 |w_{V'\times v}| <   \frac 12 n'\label{eq:v-col-light},
 \ee 
(Otherwise we are done.)  
For each edge $e$, we consider
$|\alpha_{V'\times e}|$ and define as follows:  
\bas
e \text{ is light} & && \frac{|\alpha_{V'\times e}|}{n'} < \frac{1}{2d} \\
e \text{ is medium} && \frac{1}{2d} \leq &\frac{|\alpha_{V'\times e}|}{n'}
\leq 1-\frac{1}{2d} \\
e \text{ is heavy} && 1-\frac{1}{2d}< &\frac{|\alpha_{V'\times e}|}{n'}
\eas
We claim that there exists at least one medium edge.  Suppose, towards a contradiction,
that all edges are light or heavy.
For each $v$, let $\xi(v)$ denote the number of heavy edges incident upon $v$.  

We now focus on the second term in the RHS of \eq{w-column}.  From the triangle inequality,
\be   \left\{
\begin{array}{ll}
      \left|\sum_{e\in\partial v} \alpha_{V'\times e}\right| > \frac {n'}{2} & \text{ if }\xi(v) \mbox{ odd } \\
      \left|\sum_{e\in\partial v} \alpha_{V'\times e}\right| < \frac {n'}{2} & \text{ if }\xi(v) \mbox{ even }
\end{array} 
\right. \ee

Since by \eq{v-col-light} the Hamming weight in {\it each} column is $<\frac {n'}{2}$
we conclude
that $\xi(v_1)$ is odd and all other vertices have an even value of $\xi(v)$.  
On the other hand, since
each heavy edge is incident upon two vertices we have that $\sum_{v\in V'} \xi(v)$ is even.
This is a contradiction, and so we conclude that there exists at least one
medium edge.  Let $e_0$ denote a medium edge.
\be \label{eq:e0}
\exists e_0\in E' \quad
\frac{n'}{2d} \leq |\alpha_{V'\times e_0}| \leq
\left(1 - \frac{1}{2d}\right)n'. 
\label{eq:medium}\ee
We turn now to the $E'\times E'$ block.
By the hypergraph product and \eq{s-alpha}:
\be |s_{E'\times e_0}| = |\partial \alpha_{V' \times e_0}|.
\ee
By minimality of $c$ modulo $S_z$ we claim that
\be\label{eq:s+c}
| (s+c)_{E'\times e_0}| \geq \frac 14 |\partial \alpha_{V'\times e_0}|,
\ee
Suppose, towards contradiction, that $c_{E'\times e_0}$ is $|\partial
\alpha_{V'\times e_0}|/4$-close to $\partial \alpha_{V'\times e_0}$, 
and let
\[   \tilde\alpha := \left\{
\begin{array}{ll}
      \alpha_{V'\times e_0} &  |\alpha_{V'\times e_0}| \leq n'/2\\
       \mathbf{1}_{V'} + \alpha_{V'\times e_0} & \mbox{ o/w }.
\end{array} 
\right. \]
By \eq{e0} we have $|\tilde\alpha| \in [n'/2d, n'/2]$.
By Corollary \ref{cor:cheeger1} 
for $d=14$, and $\eps \leq 1/6200$ we have:
\be\label{eq:rho2}
\forall x\in \F_2^{V'}, \ 3100\eps n' \leq |x| \leq n' / 2,\qquad  |\partial x | \geq  3 \cdot x,
\ee
Choosing $\eps \leq \frac{1}{6200d}$ (here $\eps=10^{-9},d=14$ works) implies that Corollary \ref{cor:cheeger1} applies to $\tilde\alpha$: 
\be\label{eq:ltc3}
|\partial \tilde\alpha | \geq
3 \cdot |\tilde\alpha|
\ee
Consider now the following stabilizer word:
$$
s = \sum_{v, \tilde\alpha(v) = 1} s_z(v,e_0) \in S_z
$$  
We claim it reduces the weight of $c$, by
contradiction to $c$'s minimality modulo $S_z$, as follows:
By the triangle inequality:
\be
\left|c + \sum_{v, \tilde\alpha(v) = 1}  s_z(v,e_0)\right|
\leq
|c| + \frac{1}{4}| \partial \tilde\alpha| - | \partial \tilde\alpha| + 2 |\tilde\alpha| 
\ee
where the first three terms come from our assumption that $c_{E'\times e_0}$ is 
$|\partial \tilde\alpha|/4$-close to $\partial \tilde\alpha$, and the last term from the fact that
$s_z(v,e_0)_{V'\times V'}$ has weight exactly $2$,
By \eq{ltc3} we upper-bound the above by:
\be
|c| -  \frac{1}{3} |\partial \tilde\alpha| +  \frac{1}{4} |\partial \tilde\alpha| 
<
|c|.
\ee
Hence \eq{s+c} holds, and so:
\be\label{eq:lowe}
| w_{E'\times e_0}|
=
| (s+c)_{E'\times e_0} |  \geq 
\frac 14 |\partial \alpha_{V'\times e_0}| \geq
\frac{3}{4} |\tilde\alpha|
{\geq} 
\frac{3}{8d} n'. \ee

\end{proof}

\section{Explicit $\QNC^1$-Robust Local Hamiltonians}
\label{sec:nlets}
\cftsectionprecistoc{We describe our construction of Hamiltonians with
  $\cNLTS$.}

\subsection{The construction}

In this section, we show how to construct $\QNC^1$-robust local
Hamiltonians based on CSS codes.
Let ${\cal G}$ be an explicit family of $d$-regular Ramanujan graphs, for $d = q+1$, where $q$ is prime,
following Fact \ref{fact:explicit}.
We define
\be \calC_{\times} = \calC(G).
\ee

\subsection{$\NLETS$ Theorem Statement}

\begin{theorem}

\textbf{$\NLETS$}

\noindent
Let ${\cal C}_\times^{(N)}$ denote the hypergraph product above that is defined on a space
of $N = (1+d^2/4)n^2$ qubits.
The family of
local Hamiltonian $\left\{H \left({\cal C}_{\times}^{(N)}\right)\right\}_N$ is $\NLETS$
for $d=14$ and $\eps= 10^{-9}$.
\end{theorem}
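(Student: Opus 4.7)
The plan is to combine the uncertainty principle for CSS codes with a localization argument: \propref{hbar1} provides an approximately partitioned structure for the $X$- or $Z$-measurement distribution of the ground state $\sigma$; \lemref{localized-dist} transports this partition to a specific row or column where it survives Hamming-weight restriction; the residual $\LTC$ \propref{subgraph} ensures that the Voronoi cells $S_0,S_1$ required by \propref{vor1} remain disjoint despite $O(\eps')$ syndrome noise; and finally \corref{incomplete}, made quantitative via \thmref{vertex} (with $\gamma$ slightly below $1/2$, as in \propref{qecc-hard}), delivers both the $\Omega(\log n)$ depth lower bound and the $n^{-\Omega(1)}$ trace-distance separation.

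First I would set up the residual structure. Given the $\eps$-impostor $\rho$ with $\rho_S=\sigma_S$ and bad set $B=[N]\setminus S$, a counting argument shows that only a small constant fraction of rows/columns of $V\times V$ and $E\times E$ contain more than an $O(\sqrt\eps)$-fraction of bad qubits. Applying \factref{eml1} to the remaining good vertex/edge sets $V_\eps,E_\eps$ yields a maximal connected residual subgraph $G_\eps=(V',E')$ with $|V'|\geq(1-\eps')|V|$, $|E'|\geq(1-2\eps')|E|$ and $\eps'\leq 31\eps$. For $d=14$ and $\eps=10^{-9}$ the hypotheses of \corref{cheeger1} are satisfied, so \propref{subgraph} applies to $G_\eps$. \propref{fractal} then identifies $\calC_\times(G_\eps)$ as a CSS sub-code of $\calC_\times(G)$ on the qubits $V'\times V'\cup E'\times E'\subseteq S$ (after mildly shrinking $V',E'$ to exclude the remaining boundary bad bits).

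Second, I would construct the Voronoi cells for \propref{vor1}. Every $Z$-outcome of $\sigma$ lies in $S_z^\perp$ of $\calC_\times(G)$ and violates at most an $O(\eps')$-fraction of the residual $Z$-checks (those straddling the boundary). \factref{CSS} combined with \propref{subgraph} then forces each row/column of $V'\times V'$ (and $E'\times E'$) to lie within fractional Hamming distance $O(\eps')$ of $\mathbf{0}$ or $\mathbf{1}$, defining the error-fattenings $E_x,E_z$. Their disjointness follows from \lemref{localized-dist}, which forces any nontrivial logical representative to have weight $\Omega(n')$ on some row/column, dwarfing the fattening radius. \propref{vor1} then yields $c_0>0.07$ such that either $D_\sigma^Z$ or $D_\sigma^X$ assigns mass $\geq c_0$ to each of $S_0,S_1$.

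Finally, I would pick $L$ to be the row/column on which \lemref{localized-dist} manifests the logical operator $b_1^{\bar P}$ with weight $\Omega(|L|)$. Restricted to $L$, the $S_0$-elements cluster within $O(\eps'|L|)$ of $\mathbf{0}_L$ or $\mathbf{1}_L$ (by the previous step), while $S_1$-elements cluster within the same distance of $b_1^{\bar P}|_L$ or its complement. Since $b_1^{\bar P}|_L$ has weight $\Omega(|L|)$, these two clusters are separated by $\Omega(|L|)$, yielding the promised $(\Omega(c_0),\Omega(|L|))$-approximate partition of $D_\sigma^P|_L=D_\rho^P|_L$. Invoking \corref{incomplete} (strengthened via \thmref{vertex}) completes the proof. \textbf{The main obstacle} is verifying the Voronoi hypothesis of \propref{vor1} in Step~2, since $\sigma$ is a code-state of $\calC_\times(G)$ rather than of the residual $\calC_\times(G_\eps)$: one must show that the support of $D_\sigma^P$ on the residual region truly stays inside the $\LTC$ fattening $S_0\cup S_1$. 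This is where the tight numerical constants $d=14$, $\eps=10^{-9}$, and the $3$-expansion margin of \corref{cheeger1} enter---they must absorb the boundary-straddling syndromes, the $\LTC$-decoding slack, and the Voronoi disjointness simultaneously.
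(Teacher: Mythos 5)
Your Steps~1 and~2 follow the paper's Parts~1--2 closely (residual connected subgraph via \factref{eml1}, fractal subcode via \propref{fractal}, uniform low-weight error leading to the Voronoi cells for \propref{vor1}). The departure is in your Step~3, and it introduces a genuine gap.

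You propose to pick a single fixed row/column $L=V'\times v_1$ (where $b_1^x=\mathbf{1}_{V'\times v_1}$ lives) and then argue that $D_\rho^P|_L$ is $(\Omega(c_0),\Omega(|L|))$-approximately partitioned. This fails for two reasons. First, \lemref{localized-dist} only asserts that for \emph{each} $w\in C_1^Z$ there is \emph{some} column $v$ or row $e$ (depending on $w$) where weight concentrates; different elements of $C_1^Z$ concentrate in different places, so no single $L$ works for the whole coset. Second, and more fundamentally, the partition $C_0^Z\sqcup C_1^Z$ does not project to a partition of $\F_2^{V'\times v_1}$: the $Z$-stabilizer generators $s_z(v,e)=v\otimes\partial^Te+\partial v\otimes e$ with $e\in\partial v_1$ restrict on column $v_1$ to single qubits $(v,v_1)$, and by varying $v$ one sees that $S_z|_{V'\times v_1}$ spans all of $\F_2^{V'}$. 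Consequently $C_0^Z$ (which contains $S_z$, and more) contains words whose restriction to column $v_1$ is $\mathbf{1}_{V'}$, i.e.\ identical to $b_1^x|_L$. So your claim that $S_0$-elements cluster near $\mathbf{0}_L/\mathbf{1}_L$ while $S_1$-elements cluster near $b_1^x|_L$ (presumed distinct) is false; the two cells overlap completely after restriction.

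The paper sidesteps this by never restricting to a row/column. It keeps the full $N_{\mathrm{low}}$-qubit distribution, establishes $\Delta(S_0^{z,\nu},S_1^{z,\nu})=\Omega(n)=\Omega(\sqrt{N_{\mathrm{low}}})$ globally (Fact~\ref{fact:partition1}, built on \lemref{localized-dist} applied to $\hat{x}_0+\hat{x}_1$ for each pair, not a fixed column), and then applies \thmref{vertex} with $\gamma$ strictly below $1/2$. This yields $h_\ell(\tilde p)\geq\frac{1}{2}(BN_{\mathrm{low}})^{-\alpha}$ at scale $\ell\leq B^{1.5-\alpha/2}O(n^{1-\alpha})$, and comparing with the distance-$\Omega(n)$ partition forces $B\geq n^{\Omega(1)}$, giving depth $\Omega(\log n)$. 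The small $\alpha>0$ also absorbs the $n^{-a}$ trace-distance slack. If you want to salvage a single-block argument you would need a lemma that the partition survives projection onto some fixed $O(n)$-qubit subset, and the stabilizer structure above shows this is not available for any row or column.
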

(Our proof applies to any $d\geq 14$ and sufficiently small $\eps>0$,
which may depend on $d$.)

\noindent
The proof has $3$ steps.
\begin{enumerate}
\item In the first part of the proof we show that any quantum state $\ket{\psi}$
  that is an $\eps$-impostor of $H({\cal C}_\times)$ obeys, in fact, a more
  stringent constraint on a {\it subsystem} of the full Hilbert space, 
  which is the uniform low-weight error condition: there exists some large
  subset $V_l\subseteq V$ and large subset $E_l\subseteq E$ such that for {\em each} $v\in V_l$
  at most an $O(\sqrt{\eps})$ fraction of the qubits $V\times v$ have errors,
  and the same holds for a large fraction of columns $E\times e$, for $e\in E_l$.
  
  A certain ``fractal'' property of the hypergraph product of the repetition code
  defined by an expander graph,
  allows us to argue that inside ${\cal C}_{\times}$
  there exists a complete smaller product-hypergraph 
  of a connected sub-graph $(V',E')$ induced by $(V_l, E_l)$.
  Hence, we can reduce the problem of an $\eps$-impostor to the hypergraph product code of $G$
  to the problem of an $\eps$-impostor to the hypergraph product code of $G_l$, with the extra
  condition of uniform low-weight error.
  
\item In the second part we show that this ``uniform low-weight error condition'' implies
that there exists a distance partition in either the $X$ or $Z$ basis.  
\item In the third part, we finish the proof by using the distance partition above to
  argue that the $\ket{\psi}$ has low vertex expansion in at least the $X$ or $Z$ basis.
\end{enumerate}

\subsection*{Part 1: Uniform Low-weight Error on a Sub-code}

Fix some integer $N>0$.  We think of $N$ as being sufficiently large given any choice of the other parameters.
We denote ${\cal C}_\times = {\cal C}_\times^{(N)}$ for simplicity.
We are given a quantum state $\rho$ that is an $\eps$-impostor to $H({\cal C}_\times)$.
To establish the theorem we show that $\rho$ is $\QNC_1$-hard.

The fact that $\rho$ is an $\eps$-impostor means that there exists a state $\sigma$ with $\tr[H\sigma]=0$ and a set $S\subseteq (V\times V) \cup (E \times E)$ such that $|S| \geq N(1-\eps)$ and $\rho_S=\sigma_S$.  Here we use the convention that $\rho_S := \tr_{\bar S}\rho$.

Now define sets $V_l \subseteq V, E_l \subseteq E$ by
\ba
V_l &= \left \{ v :
|V\times v \cap S| \geq (1-d\sqrt{\eps}) n
\quad\wedge\quad
|v\times V \cap S| \geq (1-d\sqrt{\eps}) n
 \right\}\\
E_l &= \left \{ e :
|E\times e \cap S| \geq (1-d\sqrt{\eps}) m 
\quad\wedge\quad
|e\times E \cap S| \geq (1-d\sqrt{\eps}) m 
\right\}
\ea
By Markov's inequality we can bound $|V_l| \geq (1-d\sqrt\eps)n$ and
$|E_l| \geq (1-d\sqrt\eps)m$.

Let $G' = (V',E')$ denote the 
{\it maximal connected} residual graph of $G$ 
(see Definition \ref{def:res1})
induced by the vertices $V_l$, and edges $E_l$.
By Fact \ref{fact:eml1}, \eq{d=14} and the fact that $d=14$,  $V',E'$ satisfy
$|V'| \geq (1 - 31d\sqrt\eps)n$ and $|E'| \geq (1 - 62d\sqrt\eps) m$.
Let $S' := (V'\times V')\cup(E'\times E')$ be the corresponding qubits.

By Proposition \ref{prop:fractal} the full hypergraph product ${\cal
  C}_{\times}$ contains {\it all}
parity checks of the following hypergraph product:
\be
{\cal C}_{\times}^{\text{(low)}}= {\cal C}(G') \times_{TZ} {\cal C}(G').
\ee 
%In addition, since $|V_l| \geq (1-3\sqrt{\eps}) n, |E_l| \geq (1-3\sqrt{\eps}) m$ one can apply Proposition \ref{prop:subgraph} to $G_l$ - 
%any assignment to $V_l$ violating at most $\sqrt{\eps}$ checks is at distance at most $5\sqrt{\eps}n /\rho$ from either $\mathbf{1}_{V_l}$ or 
%$\mathbf{0}_{V_l}$.  
%That is the code corresponding to $G_l$ has soundness 
%$$
%\rho_l \geq \frac{\rho}{5 \sqrt{\eps}}
%$$
Now decompose the Hilbert space ${\cal H}$ of ${\cal C}_\times$ according to the uniform low-violation condition:
\be
{\cal H} = {\cal H}_h \otimes {\cal H}_l,
\qquad\text{where }\qquad
{\cal H}_l:= \bbC_2^{S'}
\ee
In words, ${\cal H}_l$ is the support of ${\cal C}_{\times}^{\text{(low)}}$ and contains
only (though not all) qubits that belong to columns/rows of $V\times V$ or $E\times E$ with
individually low-weight error, and ${\cal H}_h$ contains qubits in high-error columns/rows over which we have no control.  

When we restrict our attention to the qubits in $S'$, we find that $\rho_{S'}$ satisfies a stronger version of the impostor 
condition for $H({\cal C}_\times^{\text{(low)}})$.  Define $\nu :=  \frac{d\sqrt{\eps}}{1-62d\sqrt\eps}$.   Since $S\cap S'$ contains a $\geq 1-\nu$ fraction of the bits of each row and column of $S'$, then $\rho_{S'}$ agrees with some ground state of $H({\cal C}_\times^{\text{(low)}})$ (namely $\sigma_{S'}$) for a $\geq 1-\nu$ fraction of  each row and column.

\subsection*{Part 2: Distance Partition on the Subcode}

Let $n' = |V'|$ and  $m' = |E'|$.
Let $S_{x,\text{low}},S_{z,\text{low}}$ denote the set of generators of ${\cal
  C}_{\times}^{\text{(low)}}$,  
and let $b_{1,\text{low}}^x$ be the corresponding element of $S_{x,\text{low}}^{\perp} - S_{z,\text{low}}$,
associated with the first column of the register ${\cal H}_l$:
i.e.~$b_{1,\text{low}} = \mathbf{1}_{V'\times v_1}$ - namely - the word which is $1$ in $V'\times v_1$ and zero otherwise, for  some arbitrary $v_1\in V'$.
Let $N_{\text{low}}$ denote the number of qubits in ${\cal C}_{\times}^{\text{low}}$.
Let $\partial_{G'}: \F_2^{V'} \mapsto \F_2^{E'}$ denote the vertex-edge
incidence operator corresponding to the sub-graph $G'$ defined above.

\begin{definition}

\textbf{Uniform low-weight errors}

\noindent
Define the sets of uniform low-weight errors in the $X$ and $Z$ bases
to be the following subsets of $\F_2^{V'}$ 
\begsub{ULWE}
U^{z,\nu} & := \left\{ t:  |t_{E'\times e}| \leq \nu m' \
  \forall e\in E' \ \  \wedge 
 \ \  |t_{V'\times v}| \leq \nu n' \  \forall v\in V' \right\} \\
U^{x,\nu} & := \left\{ t:   |t_{e\times E'}| \leq \nu m' \
  \forall e\in E' \ \  \wedge 
 \ \  |t_{v\times V'}| \leq \nu n' \  \forall v\in V' \right\} 
\endsub

\end{definition}
We can formalize the uniform low-weight error condition from Part 1 by observing that 
\be x\in \bbF_2^N, \supp (x)\subseteq {\bar S \cap S'}\quad\Rightarrow\quad
x \in U^{z,\nu}\cap U^{x,\nu} .\label{eq:S-ULWE}\ee

\begin{definition}
\textbf{Distance partition}

\noindent
We define $S_0^{z,\nu}, S_1^{z,\nu}$ 
as the strings obtained by adding uniform low weight error to a code
state as follows.
 For $a=0,1$ define
\begsub{s0}
S_a^{z,\nu} & = C_a^Z + U^{z,\nu} \\
S_a^{x,\nu} & = C_a^X + U^{x,\nu} 
\endsub
\end{definition}
Since $C_{0,1}^Z$ partitions $S_{x,\text{low}}^\perp$ (and likewise
for $C_{0,1}^X$) we have
\begsub{part2}
S_{x,\text{low}}^\perp + U^{z,\nu}  &= S_0^{z,\nu} \sqcup S_1^{z,\nu}\\
S_{z,\text{low}}^\perp + U^{x,\nu}  &= S_0^{x,\nu} \sqcup S_1^{x,\nu}
\endsub
Here $C_{0,1}^{X,Z}$ are defined as in \eq{CXZ} but with respect to the code $\calC_\times^{\text{low}}$.

\noindent
We claim that these sets are far apart:
\begin{fact}\label{fact:partition1}

\textbf{Uniform low-weight errors preserve the distance partition.}
\noindent
\be 
\Delta(S_0^{z,\nu},S_1^{z,\nu}) = \Omega(n),
\ee
and the same holds for $S_0^{x,\nu}, S_1^{x,\nu}$.
\end{fact}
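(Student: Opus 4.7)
The plan is to reduce this directly to Lemma \ref{lem:localized-dist} (locally-manifested minimal distance) together with the column/row width bound built into the definition of $U^{z,\nu}$. Pick any $s_0\in S_0^{z,\nu}$ and $s_1\in S_1^{z,\nu}$, and write them as $s_a = c_a + u_a$ with $c_a\in C_a^Z$ and $u_a\in U^{z,\nu}$. Since $C_0^Z$ is a subspace and $C_1^Z = b_1^x + C_0^Z$ is the other coset, the sum $c := c_0+c_1$ lies in $C_1^Z$. I would then apply Lemma \ref{lem:localized-dist} to $c$, which guarantees one of two alternatives: either there exists $v\in V'$ with $|c_{V'\times v}|\geq n'/2$, or there exists $e\in E'$ with $|c_{E'\times e}|\geq \tfrac{3}{8d} n'$.

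Now I would bound the contribution of the error term $u := u_0+u_1$ on the same row or column using the definition of $U^{z,\nu}$ in \eq{ULWE}. In the first (column) case $|u_{V'\times v}|\leq 2\nu n'$, and by the triangle inequality
\[
|(s_0+s_1)_{V'\times v}| \;\geq\; |c_{V'\times v}| - |u_{V'\times v}| \;\geq\; \left(\tfrac12 - 2\nu\right)n'.
\]
In the second (row) case, using $m' \leq dn'/2$ one has $|u_{E'\times e}|\leq 2\nu m'\leq \nu d n'$, so
\[
|(s_0+s_1)_{E'\times e}| \;\geq\; \tfrac{3}{8d}n' - \nu d n'.
\]
Both expressions are $\Omega(n)$ provided $\nu < \tfrac{3}{8d^2}$. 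For the parameter choices $d=14$ and $\eps=10^{-9}$ stipulated in the theorem, $\nu = d\sqrt\eps/(1-62d\sqrt\eps)$ is on the order of $10^{-4}$, well below the threshold $3/(8d^2)\approx 2\times 10^{-3}$, so both bounds are genuinely $\Omega(n)$. The $X$-basis statement $\Delta(S_0^{x,\nu},S_1^{x,\nu})=\Omega(n)$ follows by the identical argument after swapping the roles of rows and columns, using the second half of Lemma \ref{lem:localized-dist} for $C_1^X$ and the row/column width conditions in the definition of $U^{x,\nu}$.

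There isn't really a deep obstacle here; the argument is essentially a reduction to Lemma \ref{lem:localized-dist}. The only things to be careful about are (i) confirming that a sum of elements from two different cosets $C_0^Z, C_1^Z$ lands in $C_1^Z$ (standard for affine cosets), (ii) bookkeeping the conversion between $m'$ and $n'$ via $m' \leq dn'/2$ when estimating the error contribution on an $E'\times e$ row, and (iii) verifying that the quantitative choice of $\nu$ coming from Part~1 is compatible with the threshold $3/(8d^2)$ imposed by the row case. None of these steps is delicate.
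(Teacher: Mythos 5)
Your proof takes essentially the same approach as the paper's: decompose $s_a = c_a + u_a$ with $c_a\in C_a^Z$ and $u_a\in U^{z,\nu}$, note $c_0+c_1\in C_1^Z$, invoke Lemma~\ref{lem:localized-dist} to find a heavy column or row of $c_0+c_1$, and subtract off the error contribution via the triangle inequality. You are in fact slightly more careful than the paper's writeup, which subtracts $\nu n'$ (resp.\ $\nu m'$) rather than the correct $2\nu n'$ (resp.\ $2\nu m'$) from the two error terms; this factor of two does not affect the $\Omega(n)$ conclusion or the compatibility of $\eps=10^{-9}$, $d=14$.
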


\begin{proof}

Let $x_0\in S_0^{z,\nu}, x_1\in S_1^{z,\nu}$.
%Let $\hat{x}_0,\hat{x}_1$ be their closest words in
%$S_{z,\text{low}}^\perp$.
By the definition of $S_{0,1}^{z,\nu}$ there exist
$\hat{x}_0 \in C_0^Z,\hat{x}_1\in C_1^Z$ such that $x_0 - \hat{x}_0,
x_1 - \hat{x}_1 \in U^{z,\nu}$.

We note that $\hat{x}_0 \in C_0^Z,\hat{x}_1\in C_1^Z$ where ${\cal C}(G')$ is the hypergraph product
of $G'$ which is a (maximal) connected residual graph (see Definition \ref{def:res1}) of a $d$-regular Ramanujan graph.
Thus,
we can invoke
Lemma~\ref{lem:localized-dist} 
which implies that either there exists $v\in V'$
such that $|(\hat{x}_0)_{V'\times v} + (\hat{x}_1)_{V'\times v}| \geq \frac {n'}{2}$
or there exists some $e\in E'$ such that
 $| (\hat{x}_0)_{E'\times e} + (\hat{x}_1)_{E'\times e}|\geq \frac{3}{8d}  n'$.
The triangle inequality then implies
\be
|x_0 + x_1| \geq \min\left\{ \frac{1}{2} n' - \nu n', \frac{3}{8d}n' - \nu m'\right\}= \Omega(n),
\ee
for any constant $\nu < \frac{3/8d}{m'/n'}$.  This last bound is satisfied if we take $\eps \leq 10^{-9}$.

\end{proof}

\item
\subsection*{Part 3: Quantum Circuit Lower-bound}

Let $p_Z$ denote the distribution induced by measuring all $N_{\text{low}}$ qubits of $\rho_{S'}$ in the $Z$ basis.
If we had measured $\sigma_{S'}$ instead then the measurement outcomes would have been entirely in $C_0^Z\cup C_1^Z$.  From Part 1, we have that the outcomes from measuring $\rho_{S'}$ are the same on a $\geq 1-\nu$ fraction of the bits of each row and column.  Therefore these outcomes differ  by an element of $U^{z,\nu}$ and we have:
\be\label{eq:pzi}
\supp(p_Z) \subseteq S_0^{z,\nu} \cup S_1^{z,\nu}.
\ee
(We can define $p_X$ similarly and likewise observe that $\supp(p_X) \subseteq S_0^{x,\nu} \cup S_1^{x,\nu}.$).
Our goal is now to show that either $p_Z$ or $p_X$ is $\QNC^1$-hard.

By Fact \ref{fact:partition1} we have:
\be\label{eq:s0s1}
\Delta(S_0^{z,\nu},S_1^{z,\nu}) = \Omega(n)
\ee
In particular $S_0^{z,\nu} \cap S_1^{z,\nu} = \emptyset$,
where $S_0^{z,\nu} = C_0^Z + U^{z,\nu}, S_1^{z,\nu} = C_1^Z + U^{z,\nu}$,
and the same holds for $S_{0,1}^{x,\nu}$.
Hence, we can invoke
\propref{vor1} 
choosing the error sets as $E_z = U^{z,\nu}, E_x = U^{x,\nu}$.
This implies we have uncertainty in either the $X$ or $Z$ bases.  Without loss of generality, assume that we have uncertainty in the $Z$ basis, i.e.:
\be\label{eq:p3}
p_Z(S_0^{z,\nu}) \geq c_0, \ \ 
p_Z(S_1^{z,\nu}) \geq c_0.
\ee

Now let $\tilde{p}$ be a distribution that approximates $p_Z$:
$$
\| \tilde{p} - p_Z \|_1 \leq N_{\text{low}}^{-a}.
$$
so that 
\ba\label{eq:close2}
\tilde{p}(S_0^{z,\nu} \cup S_1^{z,\nu}) &\geq 1 - N_{\text{low}}^{-a} \\
\label{eq:p3}
\tilde{p}(S_0^{z,\nu}) &\geq c_0 - N_{\text{low}}^{-a} \geq c_0/2, \\ 
\tilde{p}(S_1^{z,\nu}) &\geq c_0 - N_{\text{low}}^{-a} \geq c_0/2.
\ea

Eqs.~\eq{close2}, \eq{p3} and \eq{s0s1} imply together in particular
that $S_0^{z,\nu}$ has a very low vertex expansion for some $l = \Omega(n)$:
\be
\frac{\tilde{p}(\partial_l(S_0^{z,\nu}))}{\tilde{p}(S_0^{z,\nu})} \leq N_{\text{low}}^{-a},
\ee
Hence, the vertex expansion of the distribution $\tilde p$ is upper-bounded by: 
\be
h_{\ell}(\tilde{p}) \leq N_{\text{low}}^{-a} ,\quad\text{for}\, \ell = \Omega(n).
\ee
%The distribution $\tilde{p}$ is supported on $N_{\text{low}} = \Theta(n^2)$ bits, so as a function of $n$, the distribution $\tilde{p}$ has 
%$$
%h_{\ell}(\tilde{p})  \leq n^{-a/2}, \quad \ell = \Omega(\sqrt{n}).
%$$
We now invoke Theorem~\ref{thm:vertex}.
By the theorem for any $\alpha\in [0,1/2)$ there is some $\ell>0$ at most
\be\label{eq:ell1}
\ell \leq B \cdot (B N_{\text{low}})^{1/2 - \alpha/2},
\ee
for which $h_{\ell}(\tilde{p}) \geq \frac{1}{2}(B N_{\text{low}})^{-\alpha}$.
Since $a>0$ is a constant independent of $n$ then we can choose a constant $\alpha>0$ so that
\be
n^{-a/2} \leq \frac{1}{2} (B N_{\text{low}})^{-\alpha} \leq h_{\ell}(\tilde{p}).
\ee
Hence $\ell$ is upper-bounded by \eq{ell1} but must surpass the distance partition
of \eq{s0s1}, i.e.:
\be
\Omega(n) \leq \ell \leq B \cdot (B N_{\text{low}})^{1/2 - \alpha/2}
 = B^{1.5-\alpha/2}\cdot O(n^{1-\alpha})
\ee
Therefore by \eq{ell1}, $B^{1.5 - \alpha/2} \geq n^{\alpha}$.
Hence any quantum circuit $U$ of depth $d$ for which $\ket{\psi} = U \ket{0^{\otimes N}}$, has
$d = \Omega(\log(n)) = \Omega(\log(N))$.
Thus $\ket\psi$ is $\QNC_1$-hard.

\end{document}